\newtheorem{theorem}{Theorem}
\newtheorem{lemma}[theorem]{Lemma}
\newtheorem{corollary}[theorem]{Corollary}
\newtheorem{observation}{Observation}
\newcommand{\Acal}[0]{\ensuremath{{\mathcal A}}}
\newcommand{\Ccal}[0]{\ensuremath{{\mathcal C}}}
\newcommand{\eR}[0]{\ensuremath{ \mathbb R}}
\newcommand{\Pee}[0]{\ensuremath{{\mathbb P}}}
\newcommand{\Ee}[0]{\ensuremath{{\mathbb E}}}
\newcommand{\isd}[0]{\hspace{.2ex} \raisebox{-.1ex}{$=$} \hspace{-1.5ex} 
\raisebox{1ex}{{$\scriptstyle d$}} \hspace{.8ex} }
\newcommand{\eps}{\varepsilon}
\DeclareMathOperator{\Bi}{Bi}
\DeclareMathOperator{\geo}{geo}
\DeclareMathOperator{\dd}{d}
\DeclareMathOperator{\Var}{Var}
\newcommand{\leqst}[0]{\ensuremath{\leq_{\text{st}}}}
\newcommand{\ZU}[0]{\ensuremath{Z_{\text{upper}}}}
\newcommand{\ZL}[0]{\ensuremath{Z_{\text{lower}}}}
\newcommand{\remark}[3]{\textcolor{blue}{\textsc{#1 #2:}}
\textcolor{red}{\textsf{#3}}}
\newcommand{\maarten}[2][says]{\remark{Maarten}{#1}{#2}}
\newcommand{\rodrigo}[2][says]{\remark{Rodrigo}{#1}{#2}}
\newcommand{\JG}[2][says]{\remark{Joachim}{#1}{#2}}
\renewcommand{\remark}[3]{}
\newcounter{tmpthm}
\newenvironment{proofof}[1]{\vspace{\parskip}\noindent{\it Proof of #1.}}{\hspace*{\fill}$\qed$\vspace{\parskip}}
\newenvironment{proofsketch}{\noindent{\it Proof Sketch.}}{\hspace*{\fill}$\qed$\vspace{\parskip}}
\newcommand{\recpac}{\textbf{RecMess}\xspace}
\newcommand{\recmess}{\recpac}
\title {Theoretical analysis of beaconless geocast protocols in 1D\thanks{A preliminary version of this work appeared in ANALCO 2018 ~\cite{GudmundssonKLMS18}.}}
\author{Joachim Gudmundsson\thanks{School of Information Technologies, University of Sydney, {\tt joachim.gudmundsson@gmail.com}}
        \and Irina Kostitsyna\thanks{Dept. of Mathematics and Computer Science, Eindhoven University of Technology, {\tt i.kostitsyna@tue.nl}, supported in part by the Netherlands Organisation for Scientific Research (NWO) under project no. 639.023.208.}
        \and Maarten L\"offler\thanks{Dept. of Information and Computing Science, Universiteit Utrecht, {\tt m.loffler@uu.nl}, supported by the Netherlands Organisation for Scientific Research (NWO) under project no. 639.021.123 and 614.001.504.}
        \and Tobias M\"uller\thanks{Bernoulli Institute for Mathematics, Computer Science and Artificial Inteligence,  Groningen University, \tt{tobias.muller@rug.nl}.}
        \and Vera Sacrist\'an\thanks{Dept. de Matem\`atiques, Universitat Polit\`ecnica de Catalunya, {\tt\{vera.sacristan,rodrigo.silveira\}@upc.edu}, partially supported by grant PID2023-150725NB-I00 funded by MICIU/AEI/10.13039/501100011033.}
        \and Rodrigo I. Silveira\footnotemark[6]}
\date{}
\begin{document}

\maketitle

\begin{abstract}
Beaconless geocast protocols are routing protocols used to send messages in mobile ad-hoc wireless networks, in which the only information available to each node is its own location. 
Messages get routed in a distributed manner: each node uses local decision rules based on the message source and destination, and its own location. In this paper we analyze six different beaconless geocast protocols, focusing on two relevant 1D scenarios.
The selection of protocols reflects the most relevant types of protocols proposed in the literature, including those evaluated in previous computer simulations.
We present a formal and structured analysis of the  maximum number of messages that a node can receive, for each protocol, in each of the two scenarios. 
This is a measure of the network load incurred by each protocol.
Our analysis, that for some of the protocols requires an involved probabilistic analysis,  confirms behaviors that had been observed only through simulations before. 

\end{abstract}


\section{Introduction}


\maarten{We could add figures to illustrate many concepts (e.g., geocast, different protocols, etc.), but these would have to be 2D to be clear. However, this shows that 2D is more interesting than 1D.}

In mobile ad-hoc wireless networks there is no fixed infrastructure or global knowledge about the network topology.
Nodes communicate on a peer-to-peer basis,  using only local information.
Thus messages between nodes that are not within range of each other must be sent through other nodes acting as relay stations.
An important special case of ad-hoc wireless networks are wireless sensor networks, in which a (usually large) number of autonomous sensor nodes collaborate to collectively gather information about a certain area.

Nodes are typically mobile devices whose location and availability may change frequently, resulting in a highly dynamic environment in which routing must be done on-the-fly.
Typically, messages are not sent to a particular network address, but to some or all nodes within a geographic region.
This is known as \emph{geocasting}~\cite {Maihofer04}.
The main pieces of information used to send a message are the locations of the source node and the destination region (also referred to as \emph{geocast region}), which are usually included in the actual message.\footnote{Some works use the term `packet' to denote the indivisible unit of information sent between the nodes. In this paper we use the term `message' instead, as we are interested in counting the number of transmissions and not the higher level aspects of protocols.}
See Figure~\ref{fig:intro} for an illustration.

Many geocast protocols have been proposed.
In general, existing protocols can be divided into two groups: those that assume that each node also knows the location of its 1-hop neighbors (i.e., all nodes within range) and those that do not make this assumption.
In practice, the locations of neighbors can be obtained by regularly exchanging \emph{beacon} messages in the neighborhood.
Beacons imply a significant message overhead, which prevents these methods from scaling even to medium-size networks~\cite{blr}: the problem is that in dense environments the number of messages \emph{received} by each individual node, and thus the workload to decide whether and how to react to those messages, becomes prohibitive.
For this reason, in this paper we are interested in the second group, the so-called \emph{beaconless} geocast protocols.

\begin{figure} [bht]
\includegraphics[width=8cm]{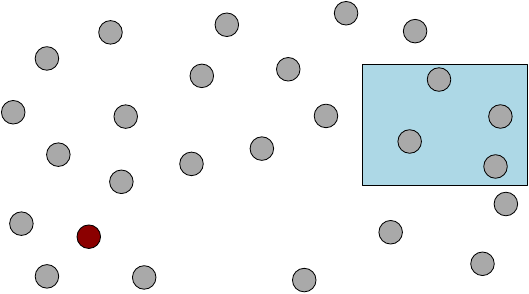}
\centering
\caption{A geocast example in 2D where a message should be sent from the sender (red node) to the geocast region marked as a blue rectangle.}
\label{fig:intro}
\end{figure}

Probably the most straightforward beaconless geocast protocol is \emph{simple flooding}: each message is broadcasted to all neighbors, who in turn broadcast it to all their neighbors, and so on.
Even though it is effective, the resulting message overhead is clearly unaffordable: essentially, it causes as much overhead as the exchange of beacons, and therefore has the same scaling problem.
From there on, there have been many improvements proposed.
The goal is to reduce the message overhead while still guaranteeing delivery. In the last few decades, many different geocast protocols have been proposed. The most important protocols are the subject of this paper, and will be described in detail in the next section.
For a thorough review of all existing beaconless geocast protocols we refer the reader to the surveys by Maih\"ofer~\cite{Maihofer04} and R\"uhrup~\cite{r-tpgr-09}.

Given the importance of geocast protocols and the many options available, there have been a few comparative studies that assessed the efficiency and efficacy of different methods under different scenarios, using low-level computer simulations.
Maih\"ofer~\cite{Maihofer04} presents simulations for four geocast protocols (flooding, two variants of flooding restricted to a forwarding zone, and a greedy-based protocol) for sparse networks with 100--1000 nodes that move 
within a square region, and use random circular geocast regions as destinations. The results are analyzed in terms of total number of messages transmitted (network load) and success rate.
As expected, the experiments show that flooding has the highest success rate, but does not scale well, while the other methods are much better in terms of network load, but suffer from lower success rate.
Hall~\cite{h-igmanet-11} evaluated four methods (M heuristic, T heuristic, CD and CD-P, and several combinations of them) in 14 different scenarios, many of them based on realistic training applications.
The parameters studied were success rate and average latency.
The simulation was done using a high fidelity simulator capable of modeling realistic MAC and queuing behavior, allowing to reflect the effects of high network load in the different protocols. 
The main conclusion in~\cite{h-igmanet-11} is that CD-P performs best in terms of both success rate and latency.
Interestingly, the experiments also show that for restricted flooding heuristics, an increase in the redundancy parameters does not always lead to higher success rate due to more collisions and medium contention.

In contrast with previous comparisons, in this paper we are interested in analyzing the behavior of beaconless geocast protocols from a theoretical perspective. To that end, we present a structured overview of six  different protocols that represent the main existing protocols in the literature, and identify important quality criteria to analyze them mathematically.

The protocols analyzed are simple flooding, M heuristic~\cite{ha-tgp-06}, T heuristic~\cite{ha-tgp-06}, CD~\cite{h-igmanet-11}, CD-P~\cite{h-igmanet-11}, and delay-based protocols (that include, as particular cases, protocols like BLR~\cite{blr}, GeRaF~\cite{geraf} and GeDiR~\cite{StojmenovicL01}).
This selection of protocols reflects the main types of beaconless geocast protocols, and includes those evaluated in previous computer simulations~\cite{h-igmanet-11,Maihofer04}.\footnote {Note that, in practice, protocols are often combined: a message is forwarded if at least one protocol requires it to do so. This increases success rates, at the cost of a higher total number of messages in the system.}

Several criteria can be taken into account when comparing the behavior of different protocols.
The \emph{success rate} measures the fraction of sent messages that actually reach the target.
For those that arrive, the \emph{hop count} indicates how many steps (forwards) are needed.
In this paper we only focus on what we consider to be the most significant measure within this context: the maximum number of messages that a node receives (\recpac). 
This parameter measures the work or energy consumption for a node, as well as the overall network load and therefore, its congestion. 
We note that network load is directly related to success rate, thus indirectly this aspect is also being considered, as done in previous comparisons~\cite{h-igmanet-11,Maihofer04}; we do consider the theoretical success probability of the protocols (in case of no collisions) separately.
We also note that, in most situations, \recpac is larger than the number of sent messages, because for intermediate nodes, the sending of a message occurs only as a consequence of receiving one before.

The behavior of a geocast protocol, in general, must be analyzed in the context of a particular geometric setting (i.e., a certain configuration of nodes and radio obstacles).

In this paper, we focus on two fundamental geometric scenarios in 1D: unbounded range and bounded range.
Even though it is clear that the full complexity of these protocols can only be appreciated in two dimensions, we show that the 1D scenarios considered, despite their apparent simplicity, already pose interesting challenges, and expose many of the essential differences between the protocols studied.
Moreover, understanding 1D situations is useful for many 2D scenarios as well, in which there are local situations that behave essentially as one-dimensional (see Figure~\ref{fig:1d2d}).

\paragraph{Results and paper structure.}
For each of the two scenarios, bounded and unbounded range, we analyze the worst- and expected-case performance for each protocol.
The results obtained corroborate many of the findings previously obtained only by simulations, and provide new insights into the difficulties of the 2-dimensional case where, in addition, escaping from local optima is necessary and requires combining different techniques.  
In fact, we note that the 1D setting has been used before to understand the behavior of the CD and CD-P heuristics~\cite{h-igmanet-11}.
A summary of our results is presented in Tables~\ref{tab:results_worst_case} and~\ref{tab:results_probabilistic}.

In Section~\ref{sec:protocols} we review the important geocast protocols that will be the subject of our analysis.
In Section~\ref{sec:problem_model} we discuss the scenarios considered in this paper together with model choices.
The theoretical analysis is presented in three parts.
In Section~\ref{sec:worst_analysis} we analyze the maximum number of messages that a node can receive.
In order to obtain more fine-grained bounds, in Section~\ref{sec:probabilistic_analysis} we analyze the expected number of received messages assuming fair medium access~\cite{FairMediaAccess} for two of the protocols.
The proofs of some of these results are very extensive, so they are presented separately in Sections~\ref{appx:omitted-cdp} and~\ref{appx:omitted-cd}.
The last part of the analysis, presented in Section~\ref{sec:delayed_analysis}, is devoted to analyzing the effect of delay functions in the number of received messages.
Finally, we present some concluding remarks in Section~\ref{sec:conclusions}.

\begin{figure}
\includegraphics{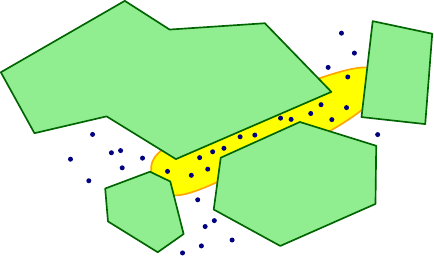}
\centering
\caption{The 1D scenario serves as a first step towards the general question, and is already relevant in situations where the local geometry is essentially 1-dimensional,  e.g., in narrow passages.
In the figure, points represent nodes and green polygons obstacles.}
\label{fig:1d2d}
\end{figure}

\begin{table}
\small\centering
\begin{tabular}{|l|c|c|}
\hline
 Protocol  & 1D unbounded reach & 1D bounded reach ($r \ll n$) \\ \hline \hline
Simple flooding & $\recmess=(n-2)k$ & $\recmess= 2rk$ \\ \hline

M-heuristic & $\recmess = Mk$ & $ \min (\frac{M}{2}, 2r)k  \leq \recpac \leq \min (M, 2r) k$ \\ \hline

T-heuristic & 
$\lceil\frac{n}{2T}\rceil k\leq\recpac\leq \lceil\frac{n}{T}\rceil k$ 
& $\lceil\frac{r}{T}\rceil k\leq\recpac\leq \lceil\frac{2r}{T}\rceil k$ \\ \hline
CD \& CD-P & 
$k\leq\recpac\leq nk$  &
$2k\leq\recpac\leq 2rk$ \\ \hline
\end{tabular}
\caption{Summary of results in worst-case analysis of activation order, where $n$ is the number of nodes, $k$ is the total number of messages, and $r$ is the communication range (refer to Section~\ref{sec:problem_model} for precise definitions).
Note that delay-based heuristics are omitted from this table as they influence the activation order, which does not affect the worst-case performance.} 
\label{tab:results_worst_case}
\end{table}

\begin{table}
\small\centering
\begin{tabular}{|l|c|c|}
\hline
 Protocol  & 1D unbounded reach & 1D bounded reach ($r \ll n$) \\ \hline \hline
CD & $ \recmess \in
\begin{cases}
 \Theta (k^2 \log (\lceil n/k \rceil + 1))\,, & \text{ if } k \le n\\
 \Theta (nk)\,, & \text{ if } k > n
\end{cases}
$ & $\recmess \in O(k^{3/2})$ \\ \hline
CD-P & $\recmess \in \Theta(k \log n)$ & $\recmess \in \Theta(k)$ \\ \hline
\end{tabular}
\caption{Summary of results in probabilistic analysis of activation order.}
\label{tab:results_probabilistic}
\end{table}

\section{Geocast protocols}
\label{sec:protocols}

We begin by describing the basic geocast framework operating at each node, following~\cite{h-igmanet-11}. 
We consider a setting with $n$ nodes represented by points in 1D,
and $k$ messages that must be sent between the nodes. During the execution of a particular protocol, messages may be sent by the original senders, received by other nodes, and either retransmitted or deleted. 


Each message is assumed to contain a unique ID, the location of the sender, and the destination (geocast region).
Each node has limited memory, which we assume is larger than $k$ but considerably smaller than $n$.
We assume that received messages are stored in a queue.  
When a node receives a message, it checks if it has already received a message with the same ID. 
If not, it creates a new record for the ID, and enqueues the message for potential later retransmission (possibly after storing some extra information about the message). 
When a message reaches the top of the transmission queue and is ready to be transmitted, a \emph{heuristic check} is performed. If passed, the message is
transmitted, otherwise discarded.
The main difference between the different protocols lies on how this heuristic check is performed.
In general terms, this check is a combination of local decision rules. 
Often, one of these rules is a location predicate to control the region where each message must travel (e.g., to guarantee that messages are only forwarded within a certain zone that contains the geocast region).

In the remaining of the paper we focus on the following beaconless geocast protocols, which can be categorized as: simple flooding, restricted flooding, distance-based, and delay-based.

\subsection{Simple flooding} \JG{Should we move this paragraph into Section~2.1 and change the section heading to ``Flooding \& Restricted flooding"?}\rodrigo{I turned the paragraph into a subsection, to put it at the same level as the others. I'd prefer to keep simple flooding separate from the M and T heuristics, but I don't mind if others prefer it  merged}
The simple flooding protocol works as follows.
When a node receives a message, first it checks if it has been broadcasted before. If not, then the message is broadcasted, and its ID stored in order to make sure it will not be broadcasted again.

This strategy is simple and robust, but it is non-scalable, as it produces an excessive and unnecessary network load.
In the following, we describe several heuristics intended to reduce such flood load.
Nevertheless, it is interesting to consider simple flooding not only for comparison purposes, but also because it is used as a building block in other protocols~(e.g.,~\cite{KoV99}).

\subsection{Restricted flooding}
In order to reduce the number of unnecessary transmissions of the same message, one can limit retransmissions in several ways~\cite{h-igmanet-11}. The following two heuristics apply different approaches for this: a direct limit on the number of retransmissions, or an implicit limit, by only making ``far away'' nodes retransmit already heard messages.

\paragraph{M heuristic~\cite{ha-tgp-06}.}
The MinTrans (M) heuristic explicitly controls redundancy through a parameter $M>1$:
A node broadcasts a received message if and only if the number of transmissions received for that ID is less than $M$.
The redundant propagation allowed by the parameter $M$ helps against problems such as message collisions and getting out from local optima.

\paragraph{T heuristic~\cite{ha-tgp-06}.}
The Threshold (T) heuristic uses location information for spreading the geocast propagation outward:
A node retransmits a received message if and only if the closest among all transmitters of messages with the same ID is at least a distance $T$ from it.

\subsection{Distance-based heuristics}
The restricted flooding heuristics are likely to have delivery failures in the presence of obstacles. The following protocols were designed to help solve this problem.

\paragraph{CD heuristic~\cite{h-igmanet-11}.}
The Center-Distance (CD) heuristic relies on proximity: A node retransmits a received message if and only if its distance to the center of the geocast region is less than that of all originators of transmissions received for the message ID.
This heuristic reduces some of the scalability problems of a previous method, Ko and Vaidya's ``Scheme 2''~\cite{KoV99}, which considered only the distance of the first transmitter heard of the same message instead of that of all of them.

\paragraph{CD-P heuristic~\cite{h-igmanet-11}.}
This protocol uses priority queues in order to further reduce the scalability problems of the CD heuristic.
It works as follows:
Each time the node can transmit, it transmits any message that has not been transmitted at all yet (if any) or it retransmits, among all heard messages, the one whose transmission would give the largest reduction in distance to the center of the geocast region.

\subsection{Delay-based heuristics}
\label{sec:delay_methods}
Some strategies to further reduce redundancy combine the local decision rules in the previous protocols with retransmission delays, which are based on \emph{delay functions}.
In principle, the local decision rules are independent of the delay function used, leading to a large number of possibilities.

Existing delay-based heuristics, discussed in some detail below, can be quite different, making it hard to analyze them in a unified way in our theoretical framework. In some sense, delay functions can be viewed as a way to choose a specific \emph{activation order} (see Section~\ref{sec:activation_order}) for the nodes.


\paragraph{BLR heuristic~\cite{blr}.}
In the Beacon-Less Routing (BLR) protocol, each node determines when to retransmit a received message based on a dynamic forwarding delay function which returns a value in the range $[0, \text{MD}]$, where MD is a constant representing the maximum delay. The node retransmits the package after such delay, unless some other node does it before, in which case the retransmission is canceled.
Three delay functions have been suggested in~\cite{blr}, based on the following parameters: $r$ (transmission range), $p$ (progress towards destination of the orthogonal projection of the current node onto the line connecting the previous node to the destination), and $d$ (distance from current node to the source-destination line): \rodrigo{A figure illustrating all these parameters would be nice}
\[
        \text{delay}_1  = \text{MD} \cdot \frac{r-p}{r}\,, \qquad
        \text{delay}_2  = \text{MD} \cdot \frac{p}{r}\,, \qquad
        \text{delay}_3  = \text{MD} \cdot \frac{e^{\sqrt{p^2+d^2}}}{e}\,.
\]

\paragraph{GeRaF heuristic~\cite{geraf}.}
The Geometric Random Forwarding (GeRaF) protocol partitions logically the area around the destination of a message $m$ into $n_m$ areas
${\cal A}_1, \dots,{ \cal A}_{n_m}$, 
such that all points in ${\cal A}_i$ are closer to the destination than any point in ${\cal A}_{i+1}$ for $1\leq  i < n_m$.

Once $m$ is transmitted, up to $n_m$ phases start, during which all nodes listen during a fixed amount of time.
In the first phase, nodes in region ${\cal A}_1$ get to reply. If only one node replies, then that one will forward the message.
If there are more, some collision resolution scheme must be used.
If there is no reply, then it is the turn to reply for nodes in region ${\cal A}_2$.
This process continues until some node in the non-empty region closest to destination replies.

\paragraph{Greedy routing (beaconless version).}
Greedy routing forwards the message to the  neighbor of the current node that is closest to the geocast region.
Even though it does not guarantee delivery, greedy routing strategies are often used as building block of geocast protocols. For this reason we also consider greedy routing in our analysis.
One example is Geographic Distance Routing (GeDiR)~\cite{StojmenovicL01}. GeDir assumes that every node knows the locations of its neighbors.
However, it can be made beaconless by using a delay function based on the following parameters: $r$ (transmission range), $d$ (distance from previous node to destinations), and $x$ (distance from current node to destination).
$$
\text{delay}_4 = \text{MD} \cdot \frac{x+r-d}{2r}\,.
$$
This strategy tries to get out of local minima by sending the message to the best positioned neighbor, even if it is not closer to the destination.

\section{Model}
\label{sec:problem_model}

\begin{figure}[t]
  \centering
    \includegraphics{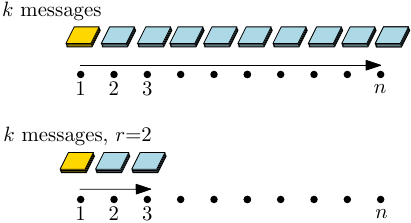}
    \caption{\small Illustration of the scenarios.
Top: with unbounded reach, the $k$ messages arrive immediately to all nodes, but that does not prevent intermediate nodes from forwarding the messages. Bottom: with range $r=2$, the messages sent from node $1$ only reach up to node $3$, so forwards are necessary to reach the target, $n$. 
}

\label{fig:scenarios}
\end{figure}

\subsection{The 1D scenarios}

In this paper we study two fundamental scenarios for a set of nodes in 1D, where the leftmost of $n$ nodes 
sends $k$ messages to the rightmost node (i.e., the geocast region only contains the rightmost node).
For simplicity, nodes are evenly spread at unit distance along the line. 

The two scenarios considered are nodes with \emph{unbounded} and \emph{bounded} communication range (see Figure~\ref{fig:scenarios}). 
In the first case, every transmitted message is assumed to arrive to all other nodes, while in the bounded reach scenario, a message transmitted by a node arrives only to the nodes at distance at most $r$.
Furthermore, it is reasonable to assume that $r \ll n$.
The unbounded case is interesting because the $n$ intermediate nodes form a dense bottleneck, a situation that can easily arise in practice (even for nodes in 2D), as illustrated in Figure~\ref{fig:1d2d}.

\subsection{Activation order}
\label{sec:activation_order}
We analyze the maximum number of messages that any node  receives (\recpac), among all nodes.
We are interested in the  possible values of \recpac.


The value of \recpac depends not only on the protocol, but also on the order in which the nodes \emph{activate}.
Even though in theory nodes process messages instantly after their reception, in practice transmissions may cause collisions.
We abstract from the way in which collisions are handled by assuming that nodes activate in some sequential order.

First, in Section~\ref{sec:worst_analysis}, we analyze the protocols in a pessimistic setting, by taking the maximum \recpac over all possible activation orders.

Second, to achieve sharper and more realistic bounds, in Section~\ref{sec:probabilistic_analysis} we consider the special case where activation orders are drawn from a fair distribution, which is known in the literature as \emph{fair medium access}~\cite{FairMediaAccess}.
Under this model, the transmission is done in rounds, in which only one node is activated, and where each node has the same probability of being the one being activated.

Finally, in Section~\ref{sec:delayed_analysis} we analyze \recmess under delay-based heuristics, where the protocol  purposely tries to control the activation order by means of the delay function.


\subsection{Message grid representation}
\label{sec:message_grid}
The 1D setting and the assumption that nodes are activated sequentially, allow us to represent the state of the message queues of every node using an $n\times k$ table, that we call \emph{message grid}.\rodrigo{Refer to ``message grid'' in following sections as well?} 
The columns of the message grid represent nodes and the rows represent messages.
In particular, each column represents the message queue of the corresponding node.
In each round, one of the columns is chosen, and the corresponding node is activated.
The protocol used may cause or not the retransmission of one of the messages in the queue, removing it from the queue.
In addition, when the activated node transmits a message, that may cause that message to be added or deleted from the queues of other nodes, depending on the protocol.

Figures~\ref{fig:full_scenarios_ub} and ~\ref{fig:full_scenarios_b} illustrate the different protocols run on an example data set with $k=4$ and $n=6$, for each scenario.\footnote{In order to make the figures more insightful, we depict the situation as if messages may get instantly added to or deleted from the queues of non-active nodes, while according to our activation model, each column only changes when the node gets activated.}
Each row shows the first 7 rounds of a different protocol. At each round, the message grid is shown, and the column of the  activated node is highlighted in blue.

In the unbounded reach scenario (Figure~\ref{fig:full_scenarios_ub}), we assume that the initial state is the state after some start node transmitted all $k=4$ messages, which means that all columns contain the same four messages in the same order.
In the bounded reach scenario (Figure~\ref{fig:full_scenarios_b}) this means that only the first $r=2$ columns contain those messages.

All protocols are run in the same activation order.

\begin{figure}[!ht]
  \centering
    \includegraphics[width=\textwidth]{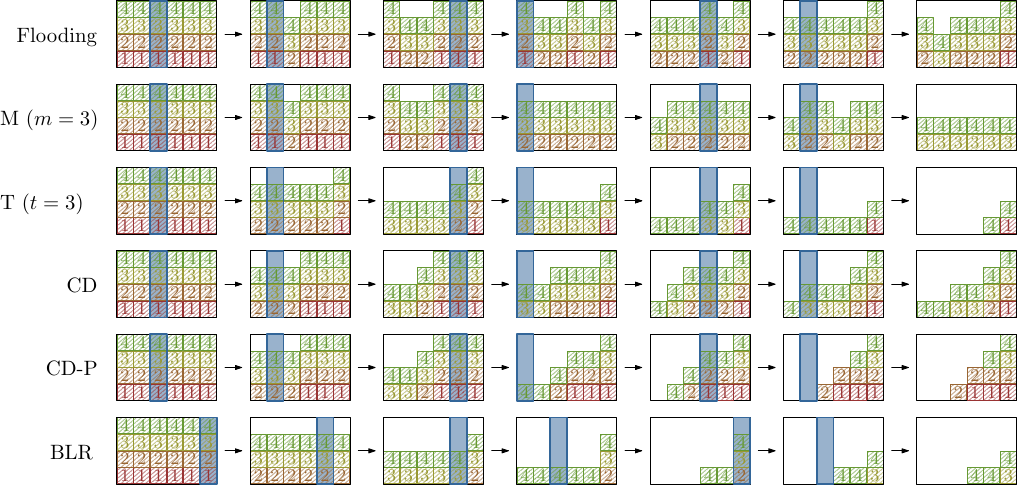}
    \caption{The different protocols run on an example data set with $k=4$ and $n=6$, for the  unbounded reach scenario. 
    The activation order used in the illustration for the top five protocols is $\langle 3,2,5,1,4,2 \rangle$.
}
\label{fig:full_scenarios_ub}
\end{figure}

\begin{figure}[!ht]
  \centering
    \includegraphics[width=\textwidth]{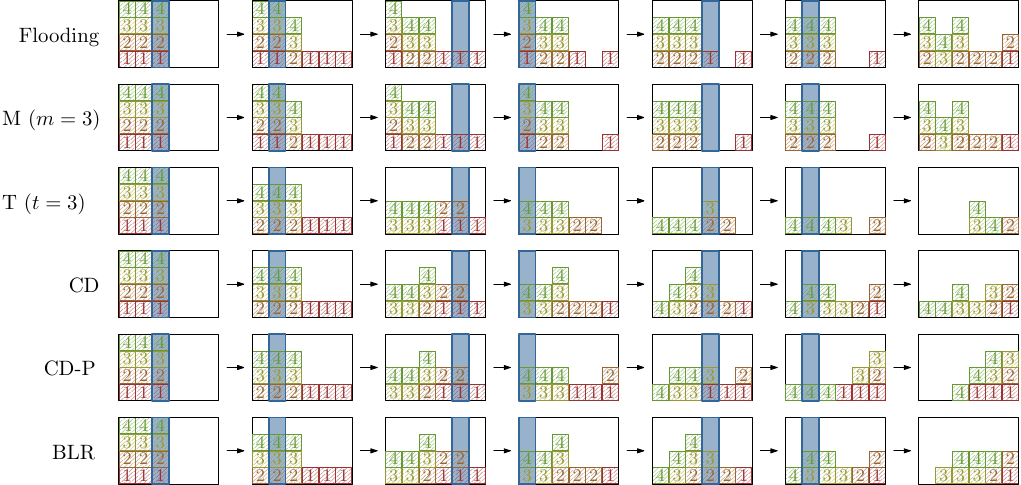}
    \caption{The different protocols run on an example data set with $k=4$ and $n=6$, for the bounded reach scenario.
    The activation order used in the illustration 
    is $\langle 3,2,5,1,4,2 \rangle$.
    }
\label{fig:full_scenarios_b}
\end{figure}

\section{Worst-case analysis of activation order}
\label{sec:worst_analysis}

In this section we analyze the maximum (and minimum) value of \recpac over all possible activation orders, for each of the protocols, and for both scenarios.

\subsection{Simple flooding}
In both scenarios all nodes, except from the leftmost one, eventually will receive each message and therefore all nodes, except for the rightmost one, will retransmit each message.
Note that this is true independently of the activation order, and therefore $\recpac$ is independent of the activation order for this protocol.

In the unbounded reach case, all the nodes except for the rightmost will send each of the $k$ original messages, and therefore each node will receive $n-2$ copies of each message (from all but the rightmost and itself), thus $\recpac=(n-2) k$.

In the bounded reach case, every node except from the $r$ rightmost and $r$ leftmost nodes will receive each message from exactly $2r$ neighbors.
Therefore the maximum number of messages that any node can receive is $\recpac= 2r k$.

\begin{theorem}
  For the simple flooding protocol we have:
  \begin{itemize}
      \item $\recpac=(n-2) k$ in the unbounded reach scenario.
      \item $\recpac= 2r k$ in the bounded reach scenario.
  \end{itemize}
  
\end{theorem}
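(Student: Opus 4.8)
The plan is to first show that, for simple flooding, the entire execution is insensitive to the activation order, so that $\recpac$ is a deterministic quantity which can be read off by a direct count. The key invariant is that each node enqueues a given message ID at most once (by the duplicate-ID check) and therefore broadcasts it at most once --- namely the first and only time that message reaches the top of its queue --- and never again, since the ID has by then been recorded; moreover a message leaves a queue exactly when it reaches the top, whether it is broadcast or discarded, and a discard only occurs for a repeated ID, which is never enqueued in the first place. Hence every node that ever holds a message broadcasts it exactly once, assuming (as we do implicitly throughout the worst-case analysis) that every node is activated often enough to drain its queue. Consequently the multiset of transmission events is identical for every activation order; only their timing changes, so it remains only to count, for a fixed node $v$, how many of those transmissions reach $v$ --- a question purely about ranges.

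For the unbounded-reach scenario, every transmission reaches every other node, and the source's initial broadcast already puts all $k$ messages into every queue, so every node holds each message and --- except for the rightmost node, which lies in the geocast region and does not forward --- retransmits it exactly once. A fixed node $v$ therefore receives one copy of each message from every forwarding node other than $v$ itself, that is, from all nodes but the rightmost one and $v$; that is $n-2$ copies per message, and hence $\recpac = (n-2)k$.

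For the bounded-reach scenario I would first verify reachability: since consecutive nodes lie at unit distance and $r \ge 1$, flooding from the leftmost node propagates along the whole path, so again every node holds every message and every node but the rightmost forwards each message once. A broadcast by node $u$ reaches $v$ exactly when $|u-v| \le r$, so $v$ receives, per message, one copy from each forwarder in $\{v-r,\dots,v-1,v+1,\dots,v+r\} \cap \{1,\dots,n-1\}$. For $r+1 \le v \le n-r-1$ this set has exactly $2r$ elements, and for every node closer to an end it has strictly fewer; since $r \ll n$ guarantees $n \ge 2r+2$, such interior nodes exist, so the maximum over all nodes is $\recpac = 2rk$. The only step that is not a routine count is the order-independence claim of the first paragraph --- that is where I expect the real work to be --- and I would also take care to state the boundary conventions precisely, in particular that the destination node does not retransmit, since the extremal nodes behave slightly differently from a generic relaying node.
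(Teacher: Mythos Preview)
Your proposal is correct and follows essentially the same argument as the paper: observe that under simple flooding every non-destination node transmits each message exactly once regardless of activation order, and then count how many transmitters lie within range of a given node. You are more explicit than the paper about the queue mechanics justifying order-independence and about the boundary conventions (in particular that the destination does not retransmit), but the underlying reasoning is identical.
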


\subsection{M heuristic}
In the unbounded reach scenario this protocol guarantees that there are never more than $M$ copies of each messages transmitted. Therefore, every node, except for those that send the message, will receive every message $M$ times, leading to $\recpac=M k$.
This is independent of the activation order.

In the bounded reach scenario we distinguish two cases: $2r \leq M$ and $2r > M$.
In the first case the $M$ heuristic is equivalent to simple flooding, therefore it is also independent of the activation order, and $\recpac= 2r k$.

If $2r>M$, we begin by observing that, regardless of the activation order, all messages eventually reach the destination.

\begin{observation}
For the M heuristic, in the bounded reach scenario, all messages reach the destination regardless of the activation order.
\end{observation}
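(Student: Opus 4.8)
The plan is to fix one message and argue by contradiction, supposing the destination node $n$ never receives it. I will use three facts from the model: node $1$ is the only original sender; each node transmits a given message at most once (it is dequeued whether or not its heuristic check passes, and a duplicate reception is never re-enqueued); and, running the execution to its (finite) completion, every node eventually performs the heuristic check on every message it has received.

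First, let $u$ be the rightmost node that ever transmits this message. It is well defined with $u \ge 1$ because node $1$ transmits it, and $u \le n-1$ because node $n$ never receives it, hence never processes it, hence never transmits it. Now split into two cases. If $u+r \ge n$, then node $n$ is within distance $r$ of node $u$ and so receives node $u$'s transmission, a contradiction. Otherwise $u+r \le n-1$, so node $u+r$ is a genuine intermediate node; it is at distance exactly $r$ from node $u$, so it receives node $u$'s transmission of the message.

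The crucial step is then a counting argument about node $u+r$. Every copy of the message that node $u+r$ ever hears is sent by some transmitter at distance at most $r$ from it, i.e.\ by a node in $\{u, u+1, \dots, u+2r\}$; but by the choice of $u$ every transmitter of this message has index $\le u$, so node $u$ is the \emph{only} transmitter node $u+r$ can ever hear. Hence node $u+r$ receives exactly one copy of the message, so when it finally performs its heuristic check it has heard $1 < M$ transmissions (this is where $M>1$ is needed) and therefore retransmits --- contradicting that $u$ is the rightmost transmitter. Both cases being impossible, node $n$ receives the message; since the message was arbitrary, all messages reach the destination. The activation order enters only through ``the execution completes'', which is what yields the ``regardless of activation order'' conclusion, and the hypothesis $2r>M$ serves only to place us outside the already-settled regime $2r \le M$ in which the heuristic coincides with simple flooding.

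I do not expect a genuine obstacle; the only point requiring care is the timing bookkeeping in the last step --- one must be sure that node $u+r$'s heuristic-check count equals $1$ at the moment of the check, independently of when that check occurs relative to the single reception --- and this is immediate precisely because that reception is the only one node $u+r$ will ever get.
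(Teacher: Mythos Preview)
Your proof is correct and follows essentially the same frontier-and-contradiction idea as the paper. The only difference is the choice of pivot: the paper tracks the rightmost \emph{receiver} $v$ and argues it has heard fewer than $M$ copies (hence will eventually retransmit and push the frontier right), whereas you track the rightmost \emph{transmitter} $u$ and argue that node $u+r$ has heard exactly one copy (hence will retransmit, contradicting maximality of $u$); these are dual formulations of the same counting argument.
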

\begin{proof}
Consider a message $\mu$. 
At any point in time, let $v$ be the  rightmost node that has received a copy of $\mu$.
We argue that $v$ has received $\mu$ fewer than $M$ times, and therefore the message will be transmitted beyond $v$.

Suppose that $v$ has received $\mu$ $M>1$ times.
Then at least one retransmission of $\mu$ originated from a node at distance smaller than $r$ to the left of $v$.
But then this message must have reached at least one node to the right of $v$, contradicting that $v$ was the rightmost node that received a copy of $\mu$.
\end{proof}

In the worst case every node receives every message $M$ times, so $\recpac \leq M k$. We show next that every node receives each single message at least $M/2$ times.

\begin{lemma}
For the M heuristic, in the bounded reach scenario with $2r > M$, every node receives each message at least $M/2$ times, for any activation order.
\end{lemma}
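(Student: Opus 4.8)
The plan is to fix an arbitrary message $\mu$ and an arbitrary node $v$, and to show that $v$ receives at least $M/2$ transmissions of $\mu$. The key structural fact to exploit is that the M heuristic never lets a node retransmit $\mu$ more than once, and that a node retransmits $\mu$ only while it has heard fewer than $M$ copies. So the set $S$ of nodes that ever transmit $\mu$ is what we need to understand: $v$ receives $\mu$ once for each node of $S$ within distance $r$ of $v$ (other than $v$ itself), so it suffices to show that $S$ contains at least $M/2$ nodes in the interval $[v-r, v+r]$.

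First I would argue that $S$ is an interval of consecutive nodes containing the source: the source transmits $\mu$, and by the argument in the preceding Observation the rightmost node reached by $\mu$ has heard it fewer than $M$ times and hence transmits it, so $S$ extends contiguously rightward all the way to the destination; there is no leftward component to worry about since the source is the leftmost node. Thus $S = \{1, 2, \dots, n\}$ restricted appropriately — in fact every node transmits $\mu$ except the destination, once we know $\mu$ reaches everywhere. Wait: that would give $2r$ copies, not $M/2$ — so the real content must be subtler. The point is that a node need not transmit $\mu$: it transmits only if, at the moment $\mu$ reaches the top of its queue, it has heard $\mu$ fewer than $M$ times. A node that has already heard $\mu$ $M$ times before activating on it will \emph{not} transmit. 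So $S$ can be a proper subset, and the claim is that $S$ is nonetheless dense enough: any window of $2r$ consecutive nodes meets $S$ in at least $M/2$ nodes.

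The heart of the argument is therefore a \textbf{local density bound on $S$}: I claim that among any $2r$ consecutive nodes, at least $M/2$ belong to $S$. Suppose not — suppose some window $W$ of $2r$ consecutive nodes contains fewer than $M/2$ transmitters of $\mu$. Consider the leftmost node $w$ of $W$ that is \emph{not} in $S$ (if $W \subseteq S$ we are done since $2r > M \ge M/2$... actually I should be careful: I want the bound inside \emph{every} window, so I should take the window, look at the nodes of $W \cap S$, and derive a contradiction from their scarcity). A node $w \notin S$ failed its heuristic check, meaning it had heard $\mu$ at least $M$ times at that point; each of those $M$ transmissions came from a distinct node within distance $r$ of $w$, i.e. from $S \cap [w-r, w+r]$. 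That interval has length $2r+1$ and straddles $w$; pairing this with the window $W$ (also of length $2r$) and using that $[w-r,w+r]$ contains at least $M$ transmitters, a counting/pigeonhole argument on overlapping intervals of length $\approx 2r$ forces some length-$2r$ window to contain at least $M/2$ transmitters, and more carefully, forces \emph{every} such window to, because the transmitters along the line can only "thin out" gradually. The cleanest packaging: show that if $u < u'$ are consecutive elements of $S$ then $u' - u \le r$ (otherwise a node strictly between them would have heard $\mu$ zero times — contradiction, since $\mu$ does reach it — so actually every gap in $S$ is at most... hmm, a node between two consecutive transmitters separated by more than $r$ still gets $\mu$ only if some transmitter is within $r$; so gaps are $\le r$), but we need more: between far-apart transmitters the in-between nodes hear few copies, hence they \emph{do} transmit, contradiction. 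This pins down that $S$ has no gap longer than $r$, and combined with the $M$-from-within-$r$ condition forcing clustering, yields the $M/2$ count in each $2r$-window.

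The step I expect to be the main obstacle is making this local counting argument fully rigorous: the subtlety is that the "$\ge M$ copies heard" condition at a non-transmitting node refers to a \emph{time-dependent} quantity (copies heard \emph{so far}), whereas $S$ is the final transmitter set, so I must argue monotonically — once a node is doomed not to transmit, the $M$ transmissions that blocked it are permanent members of $S$ — and then convert these $M$-within-radius-$r$ witnesses, attached to each non-transmitter, into a clean statement that every length-$2r$ interval meets $S$ in $\ge M/2$ points. I would do this by contradiction: pick a length-$2r$ window $W$ with $|W \cap S| < M/2$; its $\ge 3M/2$ non-transmitters each need $M$ blocking transmitters within radius $r$, all lying in an interval of length at most $4r$ around $W$; a double-counting of (non-transmitter, blocking-transmitter) incidences against the at-most-$2r$ available transmitters in that enlarged interval (which itself is small by the no-long-gap fact, or by re-applying the bound) yields the contradiction, choosing the window to be the one with fewest transmitters if necessary. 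Once the density bound is in hand, the theorem is immediate: $v$'s window $[v-r, v+r]$ of $2r$ neighbors contains $\ge M/2$ transmitters of $\mu$, each contributing one received copy, so $v$ receives $\mu$ at least $M/2$ times; summing over all $k$ messages gives $\recpac \ge (M/2)k$.
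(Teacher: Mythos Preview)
Your setup is sound: defining $S$ as the set of nodes that ever transmit $\mu$, and reducing to a density bound on $S$ in every radius-$r$ window, is a correct reformulation. Your key observation is also right and is in fact the heart of the matter: any node $w\notin S$ must have $|S\cap[w-r,w+r]|\ge M$, since $w$ can only have suppressed its retransmission by hearing $M$ distinct transmitters within range.

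The gap is in how you propose to finish. The double-counting sketch does not close. First, the claim that a bad window $W$ contains ``$\ge 3M/2$ non-transmitters'' is unjustified: from $|W|\approx 2r$ and $|W\cap S|<M/2$ you only get $>2r-M/2$ non-transmitters, which under the sole hypothesis $2r>M$ gives $>M/2$, not $3M/2$. Second, even granting many non-transmitters, bounding incidences against ``at most $2r$'' transmitters in the enlarged $4r$-interval gives an inequality of the shape $M\cdot(\text{non-transmitters}) \le (2r{+}1)\cdot(\text{transmitters})$, which does not force $|W\cap S|\ge M/2$; the transmitters supplying those incidences may sit outside $W$. Choosing $W$ to be the minimum-count window does not by itself rescue this, since minimality gives lower bounds on neighbouring windows, not upper bounds.

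The paper's argument avoids all of this by exploiting a Lipschitz property you never isolate: writing $\text{count}(v_i)$ for the total number of copies of $\mu$ that $v_i$ receives, one has $|\text{count}(v_i)-\text{count}(v_{i+1})|\le 2$, because exactly two nodes lie in the range of $v_i$ but not of $v_{i+1}$ (namely $v_{i-r}$ and $v_{i+1}$ itself). So if some $v_i$ has $\text{count}(v_i)=c<M$, then every $v_{i\pm j}$ has count at most $c+2j$; whenever $c+2j<M$ that node must have transmitted, and since $2r>M$ all such nodes are within range of $v_i$. This yields roughly $M-c$ transmitters that $v_i$ hears, forcing $c\ge M-c$, i.e.\ $c\ge M/2$. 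This is a one-line replacement for your incidence argument, and it works precisely because it tracks how the \emph{count} (not just membership in $S$) varies along the line --- the missing ingredient in your approach.
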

\begin{proof}
Let the nodes be numbered from left to right $v_1$ to $v_n$, and consider the situation after running the whole protocol.
Consider a message $\mu$. 
First we observe that for two neighboring  nodes $v_{i}$ and $v_{i+1}$, the number of times that $v_{i}$ can have received $\mu$ is at most two times more than $v_{i+1}$.
This is because there are only two nodes that can be heard by $v_i$ but not by $v_{i+1}$, namely $v_{i-r}$ and $v_{i+1}$ itself.

Assume there is a node $v_i$ that received $\mu$ exactly $c<M$ times.
Then nodes $v_{i \pm j}$ have received $\mu$ at most $c+2j$ times.
There are $M-c$ nodes for which $c+2j < M$, all at distance smaller than $M$ from $v_i$.
Therefore, all these nodes will have retransmitted $\mu$ and $v_i$ will have received all these retransmissions, thus $c \geq M-c$. 
From this, it follows that $c \geq M/2$.
\end{proof}

From this lemma we can conclude that $\recpac \geq Mk/2$. We summarize with the following theorem.

\begin{theorem}
  For the M heuristic we have:
  \begin{itemize}
      \item $\recpac=M k$ in the unbounded reach scenario.
      \item $ \min (\frac{M}{2}, 2r)k  \leq \recpac \leq \min (M, 2r) k$ in the bounded reach scenario.
  \end{itemize}
  
\end{theorem}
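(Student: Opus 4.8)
The plan is to obtain the statement by assembling facts already in hand, after a single case distinction in the bounded-reach scenario.

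The unbounded-reach part needs nothing new. The discussion preceding the statement already shows that at most $M$ transmissions of each message ever occur: once $M$ of them have happened, every node has heard all $M$ (the reach is unbounded), so the heuristic check blocks every further transmission. Hence each node hears each message at most $M$ times, giving $\recpac \le Mk$; and any node that is neither the source nor one of the (at most $M$) transmitters of a given message hears it exactly $M$ times, so $\recpac = Mk$. I would just add one sentence noting that in the regime of interest ($M$ much smaller than $n$) such ``bystander'' nodes exist, so $Mk$ is genuinely attained and the value is independent of the activation order.

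For the bounded-reach part I would split on whether $2r \le M$ or $2r > M$. If $2r \le M$, a node hears at most $2r \le M$ copies of any message, so the M-check never prevents a retransmission that simple flooding would make; taking for instance the left-to-right activation order the protocol then behaves exactly like simple flooding, while no activation order can do better because a node has only $2r$ neighbours, so $\recpac = 2rk$ by the simple-flooding theorem. Here $2r \le M$ forces $\min(M,2r) = 2r$ and $\min(M/2,2r) \le 2r$, so the claimed sandwich holds (with the upper end tight). If $2r > M$, the lower bound is immediate from the Lemma: it says that under \emph{every} activation order \emph{every} node receives \emph{each} message at least $M/2$ times, so $\recpac \ge Mk/2$; and $M < 2r$ gives $\min(M/2,2r) = M/2$. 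The upper bound $\recpac \le Mk$ comes from the fact that no node receives any single message more than $M$ times, while $M < 2r$ gives $\min(M,2r) = M$. Putting the two cases together proves the bounded-reach part, and with it the theorem.

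So the remaining work is light: check that the two minima collapse as stated on either side of the threshold $2r = M$, and note that a bound of the form ``every node receives each message at least (resp.\ at most) $c$ times, for every activation order'' transfers verbatim to the corresponding bound on $\recpac$ — for the lower bound because $\recpac$ is a maximum over activation orders, so any single order witnesses it; for the upper bound because it must hold for every order. The one genuinely delicate step is the upper bound $\recpac \le Mk$ in the regime $2r > M$: since a node then has $2r > M$ neighbours, each of which may transmit a given message once, it is not a priori obvious that it hears that message at most $M$ times, and a fully rigorous argument has to invoke the activation dynamics — a node ceases to retransmit once it has heard $M$ copies — to bound how many of a fixed node's neighbours actually retransmit. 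I would also treat the boundary case $2r = M$ of the first part in the same spirit, observing that it suffices to exhibit one activation order (left to right) under which the heuristic never fires and the behaviour is literally that of simple flooding.
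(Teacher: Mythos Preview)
Your argument mirrors the paper's exactly: the unbounded case is handled by the observation that at most $M$ transmissions of each message ever occur globally; the bounded case is split on whether $2r \le M$ (reduced to simple flooding) or $2r > M$ (lower bound $Mk/2$ from the Lemma, upper bound $Mk$ asserted). The paper's entire justification for $\recpac \le Mk$ in the second subcase is the single sentence ``in the worst case every node receives every message $M$ times, so $\recpac \leq Mk$'' --- no more detail than you give.

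Your instinct to flag that upper bound as the delicate step is well placed, and in fact the concern is real. The literal claim --- that under any activation order no node hears a given message more than $M$ times --- is false. Take $M=2$, $r=3$, $k=1$: node~$1$ sends; activate node~$2$ (it has heard the message once, so $1<M$ and it transmits); then activate node~$5$ (it heard only from node~$2$, so again $1<M$ and it transmits). Node~$3$ has now heard the message from nodes $1$, $2$, and $5$ --- three receptions, exceeding $M=2$. So ``a node ceases to retransmit once it has heard $M$ copies'' does not by itself bound the number of a fixed node's neighbours that retransmit, and the inequality $\recpac \le Mk$ does not follow from the reasoning either you or the paper gives. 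Establishing the theorem exactly as stated (rather than a $\Theta(Mk)$ bound, which the Lemma and the trivial $2rk$ cap together still support) would need a genuinely different argument; the paper does not supply one, so your unease is justified.
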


\subsection{T heuristic}
In the unbounded reach scenario, due to the unit-distance distribution of the nodes and assuming $T\in\mathbb N$, when a node transmits a message, then this message is deleted from the queues of its $T$ left and its $T$ right neighbors. Thus, depending on the activation order: 
$\lceil\frac{n}{2T}\rceil k\leq\recpac\leq \lceil\frac{n}{T}\rceil k$.

In the bounded reach scenario, if $T\geq r$, no message will ever be forwarded. 
If $T<r$, then each node $u$ can receive a message from at most $2r$ nodes. Each time it receives one, at least $T$ and at most $2T$ of the nodes within reach of $u$ delete the message from their queues. Thus:
$\lceil\frac{r}{T}\rceil k\leq\recpac\leq \lceil\frac{2r}{T}\rceil k$.


\begin{theorem}
For the T heuristic we have:
  \begin{itemize}
      \item $\lceil\frac{n}{2T}\rceil k\leq\recpac\leq \lceil\frac{n}{T}\rceil k$ in the unbounded reach scenario.
      \item $\lceil\frac{r}{T}\rceil k\leq\recpac\leq \lceil\frac{2r}{T}\rceil k$ in the bounded reach scenario.
  \end{itemize}
\end{theorem}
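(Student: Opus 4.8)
The plan is to handle the two scenarios separately, in each case producing a matching lower- and upper-bound construction for the activation order, since the theorem is stated as a two-sided bound. The key structural fact to exploit is the one already described in the text: with unit-spaced nodes and $T\in\mathbb{N}$, whenever a node transmits a message $\mu$, that transmission erases $\mu$ from the queues of all nodes within distance $T$ (those within distance $T$ but not the transmitter itself) — but only among the nodes that had $\mu$ in their queue and that are within communication range. So for the \emph{unbounded} scenario, a single transmission of $\mu$ "kills" $\mu$ in a window of up to $2T$ other nodes, while forcing those nodes to count one more received copy of $\mu$; in the \emph{bounded} scenario the same happens but the killed window is intersected with the transmitter's reach $[u-r,u+r]$, hence at least $T$ and at most $2T$ nodes lose $\mu$ per transmission.

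For the \textbf{unbounded reach} case, I would argue as follows. Fix a message $\mu$. Initially all $n$ nodes except the source have $\mu$ in their queue (after the source's initial broadcast, which is common to all protocols in this model). Each transmission of $\mu$ removes $\mu$ from the transmitter's queue and from at most $2T$ further queues, so after the transmissions of $\mu$ are exhausted, at least $\lceil n/(2T+1)\rceil \ge \lceil n/(2T)\rceil$ — wait, more carefully: one transmitter plus $2T$ killed is $2T+1$ queues cleared per transmission, but the transmitter's own copy is one of these, so to clear $n-1$ queues we need at least $\lceil (n-1)/(2T+1)\rceil$ transmissions; I would reconcile this with the claimed $\lceil n/(2T)\rceil$ bound by noting the endpoints (leftmost $T$ and rightmost $T$ nodes see strictly fewer transmissions) and absorbing the off-by-one into the ceiling, exactly as the informal paragraph does. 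Conversely, any node can receive $\mu$ at most once per transmission of $\mu$, and at most $\lceil n/T \rceil$ transmissions of $\mu$ can occur: each transmission must come from a node that still holds $\mu$, and consecutive (in space) transmitters must be more than $T$ apart, since a transmitter would have erased $\mu$ from any node within $T$. Spreading this over all $k$ messages, and picking an activation order that realizes a densest packing (transmitters exactly $T$ apart, giving $\approx n/T$ transmitters) versus a sparsest one, gives the two bounds. The cleanest way to present this: prove "number of distinct transmitters of $\mu$ is between $\lceil n/(2T)\rceil$ and $\lceil n/T\rceil$" as a lemma, then observe \recpac\ equals the max over nodes of the number of transmissions of messages that reach it, which for a worst/best order is $k$ times that transmitter count because in the unbounded model every transmission reaches every other node.

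For the \textbf{bounded reach} case, the reasoning is parallel but localized. If $T\ge r$, a transmitter erases $\mu$ from every node it can reach, so no node other than the source's immediate range ever has a reason (or opportunity, in terms of having a strictly-closer-than-$T$ transmitter situation) — actually the clean statement is: the erasure window $[u-T,u+T]$ covers all of $[u-r,u+r]$, so once any node transmits $\mu$ it is gone from every node within range, and since $\mu$ only ever lived within range of a transmitter, $\mu$ is never forwarded past the source's initial reach and cannot propagate; hence \recpac\ for such $\mu$ is bounded by behavior we don't need (the theorem implicitly assumes $T<r$ for the stated formula to be the operative one, matching the T-heuristic's design). For $T<r$: a node $u$ can hear at most the $2r$ nodes in $[u-r,u+r]\setminus\{u\}$, so it receives $\mu$ at most $2r$ times trivially, but more tightly, each received copy of $\mu$ at $u$ causes at least $T$ nodes within $u$'s reach to drop $\mu$, so the copies $u$ receives are "spaced out" and number at most $\lceil 2r/T\rceil$; for the lower bound, a worst-case order forces the transmitters hitting $u$'s neighborhood to be only $\approx T$ apart on one side, giving $\ge \lceil r/T\rceil$. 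I would make the lower bound concrete by exhibiting the order explicitly: activate nodes left to right, and the leftmost-available holder of each message always fires, so transmitters land at positions $1, 1+T, 1+2T,\dots$, and the node at the right boundary of the reach of the densest cluster counts $\lceil r/T\rceil$ copies.

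The main obstacle I anticipate is the bounded-reach \emph{lower} bound: unlike the unbounded case, where symmetry and density arguments are clean, here one must design an activation order that simultaneously keeps a message alive long enough to propagate across the line \emph{and} concentrates many of its transmissions into the reach-window of a single witness node, while respecting that each transmission locally erases the message in a length-$2T$ interval (so transmitters can't be packed tighter than $T$ apart). Verifying that the greedy "leftmost holder transmits" order indeed yields a witness node receiving $\lceil r/T\rceil$ copies — rather than fewer, due to boundary truncation of the erasure windows near that node — is the delicate bookkeeping step; I would isolate it as a short explicit-schedule lemma. The upper bounds and the unbounded lower bound, by contrast, follow from the per-transmission erasure accounting with only routine ceiling manipulations.
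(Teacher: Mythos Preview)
Your proposal is correct and follows essentially the same argument as the paper: the paper's proof is a two-sentence sketch per scenario, resting on exactly the observation you isolate---a transmission of $\mu$ deletes $\mu$ from the $T$ neighbors on each side (intersected with the reach window in the bounded case), so at least $T$ and at most $2T$ queues are cleared per transmission, and dividing the relevant pool of nodes ($n$ or $2r$) by these quantities gives the two bounds. Your write-up is considerably more careful than the paper's (you flag the off-by-one issues and plan explicit witnessing activation orders, which the paper omits), but the underlying mechanism and decomposition are identical.
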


\subsection{CD and CD-P heuristics} 

The number of messages received in the CD or CD-P heuristic depends heavily on the order in which the nodes are chosen. In the best case, the last node is chosen consecutively $k$ times, and every node receives only $k$ messages, since all the nodes delete the message as soon as they hear it from a node farther ahead. In the worst case, the nodes are chosen from left to right, and all nodes receive all messages $n$ times, as bad as in simple flooding.
So, 
$k\leq\recpac\leq nk$.

In the bounded reach scenario, in the best case, each message is transmitted only by the node closest to the target. Then each node receives each message only twice: once from the left and once from the right.
In the worst case, messages are retransmitted by the node farthest from the target, causing all nodes to retransmit every message. However, because of the bounded reach, the flooding effect is somewhat mitigated: each node has only $2r$ neighbors and receives each message $2r$ times.
So,
$2k\leq\recpac\leq 2rk$.

\begin{theorem}
For the CD and CD-P heuristics, we have:
  \begin{itemize}
      \item  $k\leq\recpac\leq nk$ in the unbounded reach scenario.
      \item $2k\leq\recpac\leq 2rk$ in the bounded reach scenario.
  \end{itemize}
\end{theorem}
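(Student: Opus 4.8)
The plan is to establish each of the four bounds by exhibiting an explicit activation order (for the extremes) and arguing no order can do better. I will treat the CD and CD-P heuristics together by noting that in both scenarios the extremal orders I use behave identically under either rule: when the last node (or the node closest to the target) is always the one that transmits, CD-P's priority to untransmitted messages and its "largest reduction" tie-breaking coincide with CD's distance condition, and similarly when only the leftmost/farthest node ever passes the heuristic check.

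For the unbounded reach upper bound $\recpac \le nk$: observe that there are at most $n$ nodes that could ever transmit a given message $\mu$ (in CD, only nodes strictly closer to the target than every previous transmitter, so each transmitter is at a distinct position; in CD-P similarly), hence any node receives $\mu$ at most $n$ times, and there are $k$ messages. For the matching lower bound construction, activate nodes left to right $v_1, v_2, \dots, v_{n-1}$, each time transmitting whichever message is at the top of its queue; since every transmitter is strictly to the left of (hence strictly farther from the target than) every later one, each transmission passes the CD check, so all $n-2$ interior nodes retransmit all $k$ messages and the rightmost node hears each of the $k$ messages from $n-1$ senders. (One must be slightly careful that the left-to-right sweep can be arranged so every message survives long enough to be retransmitted by every node; cycling through messages in the queue handles this, as in simple flooding.) For the best case $\recpac \ge k$: trivially every non-source node receives the $k$ messages at least once; and activating $v_n$ itself $k$ times first — or rather, activating each node in right-to-left order — causes every node to delete each message immediately upon first hearing it from the already-activated node to its right, so no node hears any message more than once after that point, giving $\recpac = k$.

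For the bounded reach scenario with $r \ll n$: the upper bound $\recpac \le 2rk$ follows exactly as in simple flooding, since a node has at most $2r$ neighbors within range and so hears each message at most $2r$ times. For the matching worst case, activate nodes left to right; by the same argument as in the M-heuristic observation (a message at rightmost received node $v$ cannot have been stopped), the message propagates, and each transmitter being farthest-from-target among those heard so far means every node in the chain retransmits, so interior nodes hear each message from all $2r$ neighbors. For the best case $\recpac \ge 2k$: any interior node must hear each message at least once from its left and, since the message must also be delivered to the target on its right, at least once from its right, so $\recpac \ge 2k$; and activating nodes so that only the within-range node closest to the target ever transmits (e.g. a right-to-left-biased order within each reachable window) achieves exactly two receptions per message per node.

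The main obstacle I anticipate is not any single bound but the bookkeeping needed to make the left-to-right worst-case orders genuinely realize $nk$ (resp. $2rk$): one must interleave the $k$ messages across the sweep so that no message is prematurely consumed before every intended node has had the chance to retransmit it, and verify that the CD/CD-P heuristic check is passed at each step. For CD-P in particular the "transmit any untransmitted message first" rule and the priority ordering must be checked not to derail the construction; I expect that choosing the activation order adaptively (always activating the next node to the right only once all messages it should forward are still in its queue, which holds automatically in a single monotone sweep) resolves this cleanly. The remaining inequalities are immediate counting arguments.
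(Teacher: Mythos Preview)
Your proposal is correct and follows essentially the same approach as the paper: best case by activating the node(s) closest to the target first so everyone else deletes immediately, worst case by a left-to-right sweep that degenerates to flooding, and the bounded-reach bounds via the $2r$-neighbor count and the ``once from the left, once from the right'' argument. If anything, you are more careful than the paper, which dispatches the theorem in two short paragraphs and does not address the CD-P bookkeeping you flag.
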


\section{Probabilistic analysis of activation order}
\label{sec:probabilistic_analysis}

In this section we assume \emph{fair medium access}, which dictates a uniform distribution for the chosen node at each round on the set of all nodes with non-empty queues.
Under this model, we can use probabilistic analysis to make a more precise prediction of the number of received messages in practice.

Since our bounds on $\recpac$ are already tight for simple flooding and the M and T heuristics, in this section we will focus on CD and CD-P.
Although conceptually CD-P is more complicated than CD, it turns out that the analysis of CD-P is actually simpler, therefore we start from there.

\subsection{CD-P heuristic} 

\paragraph{Unbounded reach scenario}


\begin{theorem}\label{thm:1d-clique-cdp}
  For the CD-P heuristic  in the unbounded reach scenario,   assuming fair medium access,
  $\recpac \in \Theta(k\log n)$ with high probability. 
\end{theorem}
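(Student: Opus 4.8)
The plan is to track, for each message $\mu$, how many nodes ever retransmit it, since $\recpac$ up to a factor of $k$ is governed by that count. Under CD-P, the key observation is that retransmissions of a single message form a strictly decreasing sequence of distances to the target: if node $v$ retransmits $\mu$, every later retransmitter must be strictly closer to the target (all nodes at distance $\geq d(v)$ delete $\mu$ upon hearing $v$). So the set of retransmitters of $\mu$ is a decreasing chain $v_{i_1}, v_{i_2}, \dots$ in terms of position. The question is: how long is this chain in expectation, and with what concentration?

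First I would set up the following process for a fixed message $\mu$ in isolation (ignoring interaction with other messages, which I will argue only helps — more on that below). Suppose the current "frontier" (closest-to-target node that has retransmitted $\mu$, or the source if none has) is at node $v$, meaning nodes $v, v+1, \dots, n$ still hold $\mu$ and are "eligible" while nodes to the left have deleted it. When some eligible node is activated and has $\mu$ at the top of its effective priority (here I would need to be careful: under CD-P the node transmits the untransmitted message or the one giving the largest distance reduction; with $k$ messages I would argue that, restricted to the dynamics of $\mu$, the relevant event is simply "an eligible holder of $\mu$ that is strictly closer to the target than the current frontier gets activated and picks $\mu$"). The new frontier is that node. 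Since under fair medium access the activated node is uniform among non-empty queues, and the eligible-and-improving holders of $\mu$ are exactly the nodes strictly between the frontier and the target, each such activation moves the frontier to a uniformly random one of them. This is precisely the "records" / "cycle structure" process: starting from $n$ positions, repeatedly jump to a uniformly random position among those strictly closer to the target; the number of jumps until reaching the target is distributed like the number of records in a random permutation of $[n]$, which has mean $H_n = \Theta(\log n)$ and is concentrated around $\log n$ (e.g. via $\Var = \Theta(\log n)$ and Chebyshev, or a Chernoff-type bound for sums of independent indicators, since the record indicators are independent). Summing over all $k$ messages and noting each retransmission of $\mu$ is heard by all $n-1$ other nodes, $\recpac$ is $\Theta(k \log n)$; the with-high-probability statement follows from a union bound over the $k$ messages together with the concentration of each message's record count.

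The main obstacle I anticipate is justifying the reduction from the true $k$-message CD-P dynamics to the clean single-message record process. Two subtleties: (i) CD-P uses priority queues, so when a node is activated it transmits one specific message, not necessarily $\mu$; I need to argue that this scheduling does not inflate the number of retransmitters of $\mu$ — intuitively it can only delay or suppress retransmissions, and the "untransmitted messages first" rule plus the monotone-frontier structure should give that the retransmitter set of $\mu$ is still a subset of the records of a random-permutation-like process, hence $O(\log n)$ in expectation and w.h.p.; for the lower bound $\Omega(k\log n)$ I would exhibit/argue that in the fair model the frontier genuinely performs the record walk (when $k$ is not too large relative to $n$, which is the regime of interest), so each message is retransmitted $\Omega(\log n)$ times w.h.p. (ii) The set of "non-empty queues" shrinks over time, which slightly changes the uniform distribution; I would handle this by coupling or by observing that conditioning on which node is activated among eligible holders of $\mu$ remains uniform on that sub-population regardless of other queues' states. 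I would also need a short argument that once $\mu$ is deleted from a node it is never re-added (true for CD-P since distances only decrease), so the frontier is genuinely monotone. With these pieces in place, combining the per-message $\Theta(\log n)$ bound with a union bound over $k = \mathrm{poly}(n)$ messages yields $\recpac \in \Theta(k \log n)$ with high probability.
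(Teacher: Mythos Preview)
Your plan is essentially correct and parallels the paper's argument, but with a genuinely different (and arguably cleaner) analysis of the per-message process.

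Both you and the paper reduce to the same object: letting $R_j^t$ denote the number of nodes still holding message $j$ at time $t$, each $R_j$ independently performs the walk ``replace the current value by a uniform element of $\{0,\dots,R_j-1\}$'' until hitting $0$, and the total number of retransmissions is $T=\sum_j Z_j$ where $Z_j$ is the length of message $j$'s walk. The paper analyzes $Z$ by sandwiching it between hitting times of the continuous process $nU_1\cdots U_t$ (via a stochastic-domination coupling), then computes $\Ee Z,\Var Z=\Theta(\log n)$ from moments of $\ln(1/U)$ and applies Chebyshev to the sum. Your records interpretation is the combinatorial route: $Z_j$ has exactly the distribution of the number of cycles (equivalently, left-to-right maxima) in a uniform permutation of $[n]$, hence $Z_j=\sum_{i=1}^n B_i$ with $B_i\sim\mathrm{Bernoulli}(1/i)$ independent. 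This buys you Chernoff-type tails directly, without the continuous coupling, and makes the concentration step shorter.

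Where your writeup is looser than the paper is the reduction you flag in (i)--(ii). You phrase the upper bound as ``the retransmitter set of $\mu$ is a subset of the records of a random-permutation-like process'' and hedge on the lower bound. The paper's observation dissolves both worries at once: under CD-P in this setting the transmitted message is \emph{always} the one with the current maximum $R_j$, and since that message's holders coincide with the set of all non-empty nodes, the activated transmitter is uniform on exactly those $R_j$ nodes. Hence each $R_j$ follows the records walk \emph{exactly}, not merely in the sense of domination; the interleaving of the $k$ messages is irrelevant to the distribution of each $Z_j$ (and they are i.i.d.). Once you state this, your concerns about priority scheduling and the shrinking set of non-empty queues disappear simultaneously, and no separate ``subset'' argument for the upper bound or ``genuine record walk'' argument for the lower bound is needed. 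With that in hand, your records-plus-Chernoff route goes through cleanly; for the lower bound when $k$ is not bounded polynomially in $n$, summing the i.i.d.\ $Z_j$ and applying Chebyshev to the sum (as the paper does) is safer than a union bound over messages.
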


\begin{proofsketch}
We present the main idea of the proof here. The full proof, including an argument for the lower bound, can be found in Section~\ref {appx:omitted-cdp}.

Initially, all $k$ messages are in the queue of each of the $n$ nodes. At each iteration of the protocol, if a node $i$ transmits a message $m$, 
it gets deleted from the queues of all the nodes to the left of $i$ and the node $i$ itself, i.e., after the iteration $m$ is present in 
the $n-i$ nodes to the right of $i$ (refer to Figure~\ref{fig:full_scenarios_b}).

When a node is selected to transmit a message, it chooses one based on the distance to the closest node it heard each message from. 
This means that at each iteration, the message currently present in the maximum number of nodes will be retransmitted.

Let $\mathcal{R}=(R_1,R_2,\dots,R_k)$, where $R_i$ denotes the number of nodes that at the current round contain message $i$ in their queues.
For a specific round or time $t$, let $R^t_i$ be the number of nodes that contain message $i$ at round $t$.
Initially, $R_1=R_2=\dots=R_k=n$. 

Let, at some iteration of the protocol, message $i$ be present in the queues of the maximum number of 
nodes, i.e., $R_i=\max \mathcal{R}$. Due to the assumption on the fair medium access, a retransmitting node is chosen uniformly at random. 
Therefore, after the iteration, $R_i$ is replaced with a random value uniformly distributed on $\{1,\dots,R_i-1\}$. 
We are interested in the number of iterations needed until $\mathcal{R} = (0,\dots, 0)$.

An important observation is that the order in which the messages are chosen to be retransmitted does not affect the total completion time 
of the process.
Hence we can instead analyze the process where we first repeatedly set $R_1^{t+1}$ to be a random value 
in $\{0,\dots, R_1^t-1\}$ until we reach zero; and then we do the same for $R_2$ and so on until $R_k$.
This simplifies the analysis since we are now dealing with a sum of $k$ i.i.d.~random variables; and we can 
get upper and lower tail bounds for the time it takes for $R_1^t$ to reach zero by relating $R_1^t$ to the continuous
random variable $n U_1 \dots U_t$ where $U_1, \dots, U_t$ are i.i.d.~uniform on $[0,1]$.
\end{proofsketch}

\paragraph{Bounded reach scenario}

Recall that in this scenario we have $r \ll n$. 
We can prove that with high probability the actual number of messages received is only a constant factor larger than in the best case; that is, linear in $k$.

\begin{theorem}\label{thm:1d-monotone-cdp}
  For the CD-P heuristic 
  in the bounded reach scenario, assuming fair medium access,
  $\recpac \in \Theta(k)$ with high probability.
\end{theorem}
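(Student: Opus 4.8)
\emph{Reduction to front processes.} The plan is to first observe that, because of the CD rule and the unit spacing, the state of a message $m$ at any time is captured by a single number $q_m$, its \emph{front}: the index of the right-most node that has so far transmitted $m$. A short induction shows that (a) the nodes currently holding $m$ in their queues are exactly the at most $r$ nodes $v_{q_m+1},\dots,v_{q_m+r}$ (fewer near the target $v_n$); (b) $q_m$ never decreases; and (c) when a holder $v_{q_m+t}$ transmits $m$, the front jumps to $q_m+t$ with $1\le t\le r$. Since under CD-P a selected node transmits the held message with the smallest front (largest progress toward $v_n$), I would then extract the crucial structural fact: if $s\in\{1,\dots,r\}$ denotes the gap between $q_m$ and the next strictly smaller front among the in-flight messages (capped at $r$), then a holder $v_{q_m+t}$ can transmit $m$ only when $t\ge r-s+1$, and -- barring ties at the front -- does so for all such $t$. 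Consequently every advance of the front of $m$ is by at least $r-s+1$ and, conditioned on the configuration at that instant, is uniform on $\{r-s+1,\dots,r\}$ (up to tie-breaking and the right boundary); in particular, \emph{conditioned on the whole past, each advance of any front is at least $r/2$ with probability at least $1/2$}.

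\emph{Lower bound and delivery.} The bound $\recpac=\Omega(k)$ is immediate: by the worst-case analysis $\recpac\ge 2k$ for every activation order, hence also under fair medium access. Delivery of all $k$ messages also follows from the front process, since the right-most holder $v_{q_m+r}$ always transmits $m$ when it is selected and, under fair medium access, it eventually is; this also re-confirms that $v_n$ receives each message at least once.

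\emph{Upper bound.} For the matching $\recpac=O(k)$ bound I would fix a node $v_i$ and write $\recpac(v_i)=\sum_m X_m^{(i)}$, where $X_m^{(i)}$ is the number of transmitters of $m$ inside the window $W_i=[i-r,i+r]$. By monotonicity of $q_m$, $X_m^{(i)}$ equals the number of advances of $m$ needed to cross an interval of length $2r$, so by the structural fact $X_m^{(i)}$ is stochastically dominated (conditionally on the past) by a negative-binomial variable and has an exponential tail. Moreover the advances of $m$ that are ``big'' ($\ge r/2$) and land in $W_i$ cover at most $|W_i|=2r$ units of distance per message, hence at most $\approx 4k$ of them occur in $W_i$ over all $k$ messages; a Chernoff estimate on the (history-dominated) big-versus-small split then forces $\sum_m X_m^{(i)}=O(k)$ with probability $1-e^{-\Omega(k)}$. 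A union bound over the $n$ nodes, absorbed by the exponential tail, finishes.

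\emph{Main obstacle.} The hard part is not the structural picture but making the probabilistic argument rigorous in the presence of (i) the coupling among the $k$ front processes -- the conditional ``at least $r/2$ with probability $\ge 1/2$'' statement must be phrased with respect to the \emph{joint} history, which is exactly what lets one dominate the dependent sum $\sum_m X_m^{(i)}$ by an i.i.d.\ one; and (ii) ties, i.e.\ several messages sharing a front, which already occur at the source and must be followed as a ``clump'' -- ties only slow an individual message down, so they cannot hurt the upper bound, but the bookkeeping needs care. I expect this coupled, clumped probabilistic analysis to be the real work, which is presumably why the complete proof is deferred to Section~\ref{appx:omitted-cdp}.
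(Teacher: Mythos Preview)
Your structural reduction to ``front processes'' is correct and actually sharper than what the paper states: the paper also observes that each message occupies a ``train'' of $r$ consecutive nodes and that under CD-P a selected node advances the most-behind train it holds, but it records only the expectation bound ``expected progress per move $> r/2$'' rather than your stronger conditional statement ``$\ge r/2$ with probability $\ge 1/2$ given the joint past''.

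However, your closing remark rests on a misreading: the proof of Theorem~\ref{thm:1d-monotone-cdp} is \emph{not} deferred. Section~\ref{appx:omitted-cdp} treats the unbounded-reach result (Theorem~\ref{thm:1d-clique-cdp}); the bounded-reach proof is the short paragraph immediately after the theorem, and it follows a genuinely different route from yours. Rather than a per-node tail bound plus union bound, the paper argues globally: expected progress per transmission exceeds $r/2$, the total distance the $k$ trains must cover is $kn$, hence the process terminates after $O(kn/r)$ transmissions; it then asserts that every node receives an $O(r/n)$ fraction of these, yielding $\recpac = O(k)$.

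Your approach buys rigor exactly where the paper's is heuristic: the step ``every node receives an $O(r/n)$ fraction'' is, as written, only an average-over-nodes statement and does not by itself control the maximum; your Chernoff-based per-node estimate is precisely the tool that would close that gap. Conversely, the paper's global count sidesteps the difficulty your plan runs into at the end: the $e^{-\Omega(k)}$ tail absorbs a union bound over $n$ nodes only when $k\gtrsim\log n$, an assumption the theorem does not make. So the two arguments are complementary rather than interchangeable, and a fully rigorous ``with high probability'' version would likely want your per-node tail estimate together with either a mild lower bound on $k$ or an additional idea (e.g., exploiting the approximate translation invariance of the front process away from the boundaries) to avoid the naive union bound.
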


\begin{proof}
Consider the message grid, where all messages start in the leftmost $r$ columns. Whenever a node retransmits a message, it gets deleted from all nodes to its left and added to the $r-1$ nodes to its right. So, each message is always present in exactly $r$ consecutive nodes (except at the end of the process), that we will refer to as a \emph{train} of $r$ messages.

In the CD-P protocol, when a node sends a message, it chooses it from the train that is most behind.

From this we can already see that the average speed at which the trains progresses towards the destination is
greater than $r/2$ nodes per move for CD-P. If a train of messages does not overlap with the other trains, its expected progress during the next move is $r/2$ nodes. If a train does overlap with other trains, if moves by more than $r/2$ nodes. 




The expected progress for a train of $r$ messages at each move is greater than $\frac{r}{2}$. Therefore, after $\frac{2 k n}{r}$ rounds, the expected number of messages left in the queues will be $0$. 
The total number of messages sent is $O(\frac{k n}{r})$, and every node receives a fraction $O(\frac{r}{n})$ of all the messages, therefore, $\recpac=O(\frac{k n}{r}\cdot\frac{r}{n})=O(k)$.
Since $\recmess \geq k$ for any activation order, the theorem follows.
\end{proof}

\subsection{CD heuristic}

\paragraph{Unbounded reach scenario}

We can show that if $k$ is not greater than $n$, the number of messages depends only logarithmically on $n$, at the cost of a quadratic dependence on $k$.


\begin {theorem}
\label{thm:CD-unbounded}
  For the CD heuristic in the unbounded reach scenario, assuming fair medium access, with high probability
%
\[
\recpac \in
\begin{cases}
 \Theta (k^2 \log (\lceil n/k \rceil + 1))\,, & \text{ if } k \le n\,,\\
 \Theta (kn)\,, & \text{ if } k > n\,.
\end{cases}
\]
%
%
%
%
\end {theorem}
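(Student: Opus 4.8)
The plan is to set up a "nodes-containing-message" process analogous to the one in the CD-P sketch, but now without the priority-queue ordering trick, which is exactly what makes CD harder. As in the CD-P analysis, let $R^t_i$ denote the number of nodes whose queue contains message $i$ after round $t$; initially $R^0_i = n$ for all $i$, and a message disappears once $R_i$ hits $0$. When a node is activated under fair medium access, it picks one of the (at most $k$) messages still in its queue and retransmits it; if message $i$ is chosen and the activated node is the $j$-th from the left among the $R_i$ nodes currently holding $i$, then $R_i$ is replaced by a value uniform on $\{0,1,\dots,R_i-1\}$ (the $n-$(position) nodes strictly to the right). The crucial difference from CD-P is that \emph{which} message gets chosen is no longer "the one held by the most nodes"; under CD, an activated node retransmits whichever of its queued messages it is closest-to-center for, so many messages can be in play simultaneously. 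I would first argue that the \recpac\ value equals (up to the obvious per-message accounting) the total number of transmissions, times the per-node fraction; since every transmission of message $i$ is heard by all $n$ nodes, $\recpac$ is within a constant factor of $\max_i (\text{number of transmissions of message } i)$ plus lower-order terms — more precisely I would show $\recpac = \Theta\!\big(\sum_i T_i \cdot (\text{typical per-node share})\big)$ and reduce the theorem to estimating $\sum_i T_i$, where $T_i$ is the number of times message $i$ is retransmitted before $R_i$ reaches $0$.

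Next I would analyze a single message's lifetime. Track $R_i$ as it is "hit": each hit replaces the current value $m$ by a uniform draw in $\{0,\dots,m-1\}$, so $R_i$ performs a multiplicative-decrease random walk, and the number of hits needed to drive $n$ down to $0$ is $\Theta(\log n)$ in expectation and concentrated (standard; relate $R_i$ to $\lfloor n U_1\cdots U_\ell\rfloor$ with $U_j$ i.i.d.\ uniform on $[0,1]$, as in the CD-P sketch). So each message requires $\Theta(\log n)$ retransmissions \emph{of that message}, whence if $k \le n$ one expects $\sum_i T_i = \Theta(k\log n)$ transmissions total — but that is not yet $\recpac$, because each transmission is heard by all nodes. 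The subtlety is the per-node share: when can a single node be the one absorbing many of these transmissions? Here the "$k$ vs $n$" dichotomy enters. I would argue that over the life of the process the active frontier of message $i$ sweeps from left to right, and the nodes near the left end hear essentially all $\Theta(\log n)$ retransmissions of message $i$ while nodes near the right hear $O(1)$; summing over the $k$ messages, a fixed left node hears $\Theta(k\log(\lceil n/k\rceil+1))$ total — the $\log(\lceil n/k\rceil+1)$ rather than $\log n$ appears because, once the process has been running, messages cluster into blocks of size $\Theta(n/k)$ and the relevant logarithm is of the block size. When $k>n$ the queues stay saturated (there are always more messages than nodes), each of the $n$ nodes is active on a $\Theta(1/n)$ fraction of rounds but the total number of rounds is $\Theta(kn)$ — every node retransmits $\Theta(k)$ times and hears $\Theta(kn)$ — giving $\recpac = \Theta(kn)$.

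For the upper bound I would use the single-message bound $T_i = O(\log n)$ w.h.p.\ together with a union bound over the $k$ messages, plus a counting argument that no node hears message $i$ more than $T_i$ times, and then sum; concentration for $\sum_i T_i$ follows since the $T_i$ are (after the CD-P-style reordering reduction, if it still applies — see below) close to i.i.d.\ with logarithmic mean, so Chernoff-type bounds give the w.h.p.\ statement. For the lower bound I would exhibit, with high probability, a specific node (near the left) and show it is forced to hear $\Omega(k\log(\lceil n/k\rceil+1))$ messages: condition on the typical event that the messages' "containing intervals" partition into $\Theta(k)$ blocks of size $\Theta(n/k)$, and within each block the leftmost node hears $\Omega(\log(n/k))$ retransmissions per message assigned to that block. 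The main obstacle — and where I expect to spend the most effort — is precisely the loss of the clean reordering argument that made CD-P easy: under CD the choices of which message to forward are coupled across messages (a node forwards the message for which it is currently "the closest"), so the $R_i$ processes are not independent and cannot simply be run one at a time. I would handle this either by a stochastic-domination argument (sandwiching the true per-node count between the count in an independent-messages idealization and one where a single message is always prioritized), or by showing the coupling only helps — i.e., the CD dynamics can only spread transmissions more evenly across nodes than the worst single-message behavior — so the per-message $O(\log n)$ ceiling still caps each node's count. Making that domination rigorous, and nailing down the $\log(\lceil n/k\rceil+1)$ constant in both directions, is the technical heart of the argument, which is why (as the paper notes) the full proof is deferred to Section~\ref{appx:omitted-cd}.
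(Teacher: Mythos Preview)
Your proposal rests on a misreading of the CD protocol that drives the $k\le n$ analysis to the wrong order of magnitude. Under CD the queue is FIFO: when activated, a node sends the \emph{oldest} message still in its queue, not ``whichever of its queued messages it is closest-to-center for'' (the closest-to-center test is only the heuristic check on whether to retransmit; it does not reorder the queue). In the unbounded 1D setting this means every node sends message~$1$ first, then message~$2$, etc. Consequently your per-hit dynamics are wrong: a node that transmits message $i$ must already have cleared messages $1,\dots,i-1$, so it lies in $\{n-R_i+1,\dots,n-R_{i-1}\}$, and the new $R_i$ is uniform on $\{R_{i-1},\dots,R_i-1\}$, \emph{not} on $\{0,\dots,R_i-1\}$. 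The multiplicative-decrease picture therefore fails, and your computation would yield $\Theta(k\log n)$ total transmissions --- the CD-P answer --- instead of $\Theta(k^2\log(\lceil n/k\rceil+1))$. (The ``per-node share'' discussion is also a red herring: in the unbounded scenario every node hears every transmission, so $\recpac$ is, up to $O(k)$, simply the total number of rounds.)

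The paper's approach is entirely different and supplies the missing $k^2$. In the message-grid model, selecting column $j$ fills the lowest empty square in column $j$ together with all squares to its left; the crucial observation is that some column is full by time $t$ if and only if the sequence of selected columns $X_1,\dots,X_t$ contains a \emph{non-increasing subsequence of length $k$}. This ties the completion time to longest-increasing-subsequence statistics, which is where $\Theta(k^2)$ comes from. The paper makes this precise via the stopping time $S$ (first full column), proving $\Pee(S<t)\le (2e^2 t/k^2)^k$ and $\Pee(S>t)\le e^{-t}$ for $t\ge Ck^2$, and then runs a phased argument (repeatedly halving the set of remaining columns for the upper bound; grouping columns into geometrically shrinking blocks over rounds of length $\delta k^2$ for the lower bound) to accumulate $\Theta(\log(n/k))$ essentially independent phases of $\Theta(k^2)$ each. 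Without the non-increasing-subsequence connection, or some equivalent mechanism producing the $k^2$, your argument cannot reach the stated bound.
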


\begin {proofsketch}
To prove the theorem,
we argue using the message grid, refer to Section~\ref{sec:message_grid}.
%
At each round a random column is selected among those columns that are not yet completely empty. Having chosen a column, we remove the highest filled square in that column, together with all the squares to its left.
We are interested in the (random) time $T$ when all squares are empty.
Note that this is the same as asking for the time when the $n$-th column has been hit $k$ times.
What makes it tricky to analyze is that while initially we have chance $1/n$ of hitting the $n$-th column, this probability will increase as time goes on and more columns become empty.


We start with the easier case, when $k > n$.
The upper bound is immediate from the above. For the lower bound, it
is enough to consider the
time $S$ until there is some node that has received all $k$ messages.
Before time $S$, each of the $n$ nodes is equally likely to send
at every time step, which simplifies the analysis of the process. A
relatively straightforward computation involving the
Chernoff-Hoeffding bound
shows that in fact the probability that $S < c nk$ is exponentially
small, for a certain constant $c$.
Since $\text{RecMess} \geq S$, this proves the $k > n$ case.

In the case $k \leq n$, straightforward computations show that
$S = \Theta( k^2 )$ with probability close to one.
For an upper bound on $\text{RecMess}$ we consider a modified process.
In the modified process, we wait until some node with index $\geq n/2$
has received each of the $k$ messages (in which case all nodes to its
left have too), we then discard all nodes with index $\leq n/2$,  and
in the nodes with index $> n/2$ we put all $k$
messages back, and we now repeat with the smaller set of nodes. Each
of these ``phases'' is essentially governed by the earlier bound on
$S$ and after $\Theta( \log(n/k) )$ steps we are in a situation where the
number of remaining nodes is less than $k$ so that we can apply the
bounds for the other case.
The proof of the lower bound in the case  $k>n$ follows a very similar idea
but is technically more involved.
The full proof can be found in Section~\ref {appx:omitted-cd}.
\end {proofsketch}

\paragraph{Bounded reach scenario}
Next we consider the bounded reach scenario.

\begin{theorem}\label{thm:1d-monotone-cd}
  For the CD heuristic in the bounded reach scenario,   assuming fair medium access, 
  $\recpac \in O(k^{3/2})$ 
  with high probability.
\end{theorem}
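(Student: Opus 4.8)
The plan is to analyze the CD heuristic in the bounded‑reach scenario via the message‑grid/train picture already set up for CD‑P, and to quantify the one crucial difference: in CD, whenever an active node retransmits, it must retransmit the message whose \emph{most recent} transmitter it heard is farthest from the target — equivalently, among the messages in its queue it can pick \emph{any} of them provided the chosen message's current train has its front no farther ahead than the active node. Concretely, each message still lives as a ``train'' of $r$ consecutive nodes that, when moved, jumps to start just after the transmitting node; the difference from CD‑P is that an active node is no longer forced to move the most-behind train, so trains can be left behind and slow, global progress per round can be as small as $\Theta(1)$ rather than $\Theta(r)$.

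First I would fix notation: let the messages be $\mu_1,\dots,\mu_k$, and for message $\mu_j$ let $f_j$ be the index of the rightmost node currently holding it (the ``front'' of its train), so the train occupies nodes $f_j-r+1,\dots,f_j$. The target is node $n$, and $\mu_j$ is delivered once $f_j \ge n$. In a given round, under fair medium access a node $i$ with nonempty queue is chosen uniformly; it then retransmits one message $\mu_j$ with $f_j \le i$ (such a message exists since $i$'s queue is nonempty), and afterwards $f_j$ becomes $i$ (the train moves forward by $i-f_j$, possibly $0$). So each round advances exactly one front, by a nonnegative amount, and RecMess is $O(1/n)$ — wait, more carefully: RecMess equals (up to constants) the total number of transmissions times the per‑node reception fraction $O(r/n)$, and the number of transmissions is the number of rounds $T$ until every $f_j\ge n$. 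Hence it suffices to show $T = O(k^{3/2} n / r)$ with high probability, since then $\recpac = O(T\cdot r/n) = O(k^{3/2})$.

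The core estimate I would aim for is a drift bound on the potential $\Phi = \sum_{j}(n - f_j)_+$, the total remaining ``distance to go'' summed over undelivered messages, starting at $\Phi_0 = O(kn)$. In one round, with probability proportional to the number of \emph{distinct positions} occupied by train fronts that are behind the chosen node, $\Phi$ decreases; the worst case is when all $k$ trains are stacked essentially on top of each other near some position, in which case a uniformly random active node lies among the $\Theta(r)$ nodes covering the stack, and the expected forward jump of the moved front is $\Theta(r)$ — giving expected decrease $\Theta(r)$ per round and the fast $O(kn/r)$ bound of CD‑P. The problematic regime is when the $k$ fronts are spread out so that only $O(r)$ nodes (a single train's worth) lie behind the active node with positive probability of being chosen, but that active node's queue contains only messages with fronts nearly equal to its own index, so the jump is $o(r)$. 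Bounding how long the process can linger in such a spread‑out configuration is exactly where the extra $\sqrt{k}$ factor (versus CD‑P's linear bound) must come from: intuitively, the $k$ trains can ``spread out'' over a window of $O(k r)$ nodes, and within such a window progress is driven by the slowest train, which moves at rate $\Theta(r/ (\text{number of trains in its local window}))$; balancing the spread against the number of rounds gives $T = O(k^{3/2} n/r)$.

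\textbf{The main obstacle} I anticipate is making the ``spread vs.\ slowdown'' trade‑off rigorous: one needs a potential function (or a coupling / layered‑phase argument in the spirit of the CD unbounded‑reach proof sketched above) that simultaneously controls how far apart the train fronts can drift and how this drift throttles the per‑round drift of $\Phi$, and then shows the combination never lets $T$ exceed $O(k^{3/2}n/r)$ except with exponentially small probability. I expect the cleanest route is a phase decomposition: in phase $\ell$ consider the set of messages whose fronts lie in $[\ell \cdot cr,\ (\ell+1)\cdot cr)$ for a suitable constant $c$, show each such block is ``cleared'' (pushed to the next block) in $O(b_\ell^{?}\, n/r)$ rounds where $b_\ell$ is the block's occupancy, and sum over blocks using $\sum b_\ell = k$ together with the worst‑case concentration $b_\ell = \Theta(\sqrt k)$ optimizing the bound — the Cauchy–Schwarz‑type step $\sum_\ell \sqrt{b_\ell}\cdot(\text{something}) $ being what produces $k^{3/2}$. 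Concentration (the ``with high probability'') would then follow by a Chernoff/Hoeffding bound on each phase's duration (each phase being a sum of geometric‑like waiting times with the right expectations) plus a union bound over the $O(k)$ phases, exactly as in the $k\le n$ case of Theorem~\ref{thm:CD-unbounded}.
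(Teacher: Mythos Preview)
Your overall reduction is the paper's: bound the number of rounds $T$ by $O(k^{3/2}n/r)$ and then use that each node hears an $O(r/n)$ fraction of all transmissions to get $\recpac = O(k^{3/2})$. But your description of the CD transition rule is wrong, and this is exactly the place where the $\sqrt{k}$ is extracted. In CD the queue is FIFO, so a node retransmits whatever message is at the top of its queue; since trains cannot overtake one another, the first-arrived message at any node is the lowest-id one, i.e., the train furthest \emph{ahead} --- not ``the message whose most recent transmitter is farthest from the target'' (that would select the train furthest behind, which is closer to CD-P), and certainly not ``any of them''. Separately, if $f_j$ is the front (rightmost holder) of train $j$ then a node $i$ holding $\mu_j$ satisfies $i\le f_j$, and after $i$ retransmits the new front becomes $i+r$; the progress is $i+r-f_j\in\{1,\dots,r\}$, not $i-f_j$ (which would be nonpositive). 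So the basic one-step dynamics you propose to analyze are not the right ones.

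Once the rule is correct, the paper's argument is considerably more direct than your phase decomposition. It writes the expected one-round progress of the selected train as an explicit function of the ordered tail positions $R_1\le\dots\le R_k$ (in the all-overlapping regime),
\[
\Ee(\text{progress}) \;=\; \sum_{i=1}^{k-1}\frac{(R_{i+1}-R_i)^2}{2(r+R_k-R_1)} \;+\; \frac{r^2}{2(r+R_k-R_1)},
\]
and minimizes this by elementary calculus: setting the partial derivatives to zero forces the $R_i$ to be equally spaced, and a further one-variable minimization over the total spread $x=R_k-R_1$ yields the minimizer $x=r(\sqrt{k}-1)$ and hence $\Ee(\text{progress}) > r/(\sqrt{k}+1)$. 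The bound $T=O(k^{3/2}n/r)$ then drops out immediately from the total distance $kn$ divided by this minimum drift. Your ``spread vs.\ slowdown'' intuition is pointed at exactly this trade-off, but the paper resolves it with a clean two-step optimization rather than the block-phase plus Cauchy--Schwarz machinery you outline.
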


\begin{proof}


In CD, when a node gets to retransmit, it picks the message with the lowest id. This means that the trains of messages are always ordered by id: the lower the id, the closer the train is to the destination. Notice that the trains cannot ``overtake'' each other. Thus, when a node chooses a message to retransmit, it retransmits the message that corresponds to the train most ahead of the others. 

\begin{figure}[h]
\centering
\includegraphics{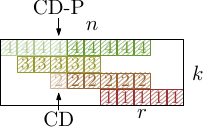}%
\caption{CD vs CD-P in bounded reach scenario.}
\end{figure}

 In particular, it moves by less than $r/2$ nodes.
  More precisely, the trains will tend to spread out, and after a while the expected number of trains in a stack that we hit is $1$, which means that the average distance that is made is $r/2$ as $n \to \infty$. So, there will be approximately $2kn/r$ steps before all trains made it to the end (in the beginning and at the end the behaviour is different, but for large enough $n$ this is what happens).

The expected progress for a train of $r$ messages is  trickier to calculate. In the case where all the trains overlap, it is
\[
\Ee(\text{progress})=\sum_1^{k-1}\frac{(R_{i+1}-R_{i})^2}{2(r+R_k-R_1)}+\frac{r^2}{2(r+R_k-R_1)}=f(R_1,R_2,\dots,R_k)\,,
\]
where $R_1\leq R_2\leq\dots\leq R_k$, and $R_i$ is the furthest node from the destination that still has message $i$, or, in other words, the tail of the train. Finding the minimum of this function, we get:
\[
\begin{split}
\frac{\partial f}{\partial R_1}&=-\frac{R_2-R_1}{r+R_k-R_1}+\sum_1^{k-1}\frac{(R_{i+1}-R_{i})^2}{2(r+R_k-R_1)^2}+\frac{r^2}{2(r+R_k-R_1)^2}=0\,,\\
\frac{\partial f}{\partial R_i}&=\frac{R_{i}-R_{i-1}}{r+R_k-R_1}-\frac{R_{i+1}-R_{i}}{r+R_k-R_1}=0\,,\text{ for $1<i<k$,}\\
\frac{\partial f}{\partial R_k}&=\frac{R_k-R_{k-1}}{r+R_k-R_1}-\sum_1^{k-1}\frac{(R_{i+1}-R_{i})^2}{2(r+R_k-R_1)^2}-\frac{r^2}{2(r+R_k-R_1)^2}=0\,.
\end{split}
\]
From the middle equations we get that $R_i=\frac{R_{i-1}+R_{i+1}}{2}$ when the expected shift is minimized. Thus, the expected shift is
\[
\Ee(\text{progress})\geq\sum_1^{k-1}\frac{(\frac{R_k-R_1}{k-1})^2}{2(r+R_k-R_1)}+\frac{r^2}{2(r+R_k-R_1)}=\frac{(R_k-R_1)^2}{2(k-1)(r+R_k-R_1)}+\frac{r^2}{2(r+R_k-R_1)}\,.
\]
Let $R_k-R_1=x$, and $\Ee(\text{progress})=f(x)$. Again find the minimum of the function:
\[
\frac{\partial f}{\partial x}=\frac{x}{(k-1)(r+x)}-\frac{x^2}{2(k-1)(r+x)^2}-\frac{r^2}{2(r+x)^2}=0\,.
\]
We get equality
\[
x^2+2x r+r^2-k r^2=0\,
\]
from which we get that the expected shift is minimized when $x=r(\sqrt{k}-1)$. Thus,
\[
\Ee(\text{progress})>\frac{r}{\sqrt{k}+1}\,.
\]

If the trains of messages do not overlap, the expected progress per move is even higher. Therefore, after $\frac{\left(\sqrt{k}+1\right) k n}{r}=O(\frac{k^{3/2}n}{r})$ steps, the expected number of messages left in the queues will be $0$.  The total number of messages sent is $O(\frac{k^{3/2}n}{r})$, and every node receives a fraction $O(\frac{r}{n})$ of all the messages, therefore, $\recpac=O(\frac{k^{3/2}n}{r}\cdot\frac{r}{n})=O(k^{3/2})$.
\end{proof}

\section{Proof of  Theorem~\ref{thm:1d-clique-cdp}: CD-P, unbounded reach}
\label{appx:omitted-cdp}
Under the CD-P heuristic, nodes prioritize messages in the queue by the distance to their senders. 
Thus, when a node $i$ is chosen to send a message, it will choose a message that is currently present in the maximum number of nodes. 
Then, all the nodes that are further away from the destination, i.e., all the nodes with index less than $i$, will delete this message from their queues.
We will show that, with high probability, it takes $\Theta(k\log n)$ steps until all nodes have emptied their queues. 

To do so we represent the number of nodes in which a message is present as random variables, and will bound them from above by 
continuous random variables with a simpler distribution at every iteration of the protocol.

We let $\mathcal{R}^{t}=(R^{t}_{1},R^{t}_{2},\dots,R^{t}_{k})$, where $R_i^t$ denotes the number of nodes that contain message $i$ at time $t$.
Thus, at time $t=0$ we have $\mathcal{R}^{0} = (n,\dots, n)$.
One round of the CD-P protocol corresponds to the largest value in $\mathcal{R}$ being randomly reduced. 
In other words,  we find an index $i$ such that $R_i^t = \max_{j=1,\dots,k} R_j^t$; we set $R_j^{t+1} := R_j^t$ for all $j\neq i$; and we choose 
$R_i^{t+1}$ uniformly at random from $\{0,\dots, R_i^t - 1\}$.

We let $T$ denote the first $t$ for which $R_1^t = \dots = R_k^t = 0$. 

\noindent
We shall always assume that $n \geq 2$, because if $n=1$ then trivially $T=n$.

\subsection{Some background on stochastic domination and coupling}

We say that $Y$ {\em stochastically dominates} $X$, denoted $X \leqst Y$, if

\[ \Pee( X > t ) \leq \Pee( Y > t ), \quad \text{ for all $t\in \eR$. } \]

A {\em coupling} of two random variables/vectors/objects $X,Y$ is a
joint probability space for $(X,Y)$ on which the marginals are correct.
That is, if $X$ has probability distribution $\Pee_X(.)$ and $Y$ has probability distribution $\Pee_Y(.)$, 
then we have a probability space $\Ccal = (\Omega, \Acal, \Pee_{\Ccal})$ where we do a single chance experiment 
that determines
both $X$ and $Y$ (formally speaking, $X,Y$ are both measurable functions from $\Omega$ to $\eR$), in such a way 
that $\Pee_\Ccal( X \in A ) = \Pee_X( X \in A)$ and $\Pee_\Ccal( Y \in A ) = \Pee_Y( Y \in A )$ for all (measurable) $A$.
For a proof of the following standard result we refer the reader to for instance~\cite{lindvallbook}.

\begin{theorem}[{Strassen's theorem}]
$X \leqst Y$ if and only if there is a coupling for $(X,Y)$ in which $X \leq Y$ with probability one. 
\end{theorem}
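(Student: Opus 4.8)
The plan is to prove the two implications separately: the forward direction (coupling $\Rightarrow$ domination) is immediate, and for the converse I would exhibit an explicit coupling via the \emph{quantile (inverse-CDF) transform}.

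For the easy direction, suppose we have a coupling $\Ccal=(\Omega,\Acal,\Pee_\Ccal)$ on which $X\le Y$ holds with probability one. Then for every $t\in\eR$ the event $\{X>t\}$ is contained, up to a $\Pee_\Ccal$-null set, in $\{Y>t\}$, so $\Pee(X>t)=\Pee_\Ccal(X>t)\le\Pee_\Ccal(Y>t)=\Pee(Y>t)$, which is precisely $X\leqst Y$.

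For the converse, assume $X\leqst Y$ and write $F_X(t)=\Pee(X\le t)$, $F_Y(t)=\Pee(Y\le t)$ for the (right-continuous, nondecreasing) distribution functions. Passing to complements, the hypothesis $\Pee(X>t)\le\Pee(Y>t)$ for all $t$ is equivalent to $F_X(t)\ge F_Y(t)$ for all $t\in\eR$. I would define the generalized inverses
\[
F_X^{-1}(u):=\inf\{x\in\eR: F_X(x)\ge u\},\qquad F_Y^{-1}(u):=\inf\{x\in\eR: F_Y(x)\ge u\}\qquad(u\in(0,1)),
\]
build the coupling on $\Omega=(0,1)$ equipped with Lebesgue measure, let $U$ be the identity map (so $U$ is uniform on $(0,1)$), and set $X:=F_X^{-1}(U)$, $Y:=F_Y^{-1}(U)$. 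Two things then need checking. First, the Galois-type identity: for any distribution function $F$ and any $u\in(0,1)$, $x\in\eR$,
\[
F^{-1}(u)\le x\iff u\le F(x).
\]
The nontrivial direction uses right-continuity of $F$ to see that the infimum defining $F^{-1}(u)$ is attained, i.e.\ $F(F^{-1}(u))\ge u$; granting the identity, $\Pee_\Ccal(X\le x)=\Pee_\Ccal(U\le F_X(x))=F_X(x)$, and similarly for $Y$, so $(X,Y)$ really is a coupling of the original pair. Second, monotonicity of the inverse: since $F_X\ge F_Y$ pointwise, $\{x:F_X(x)\ge u\}\supseteq\{x:F_Y(x)\ge u\}$, hence $F_X^{-1}(u)\le F_Y^{-1}(u)$ for every $u\in(0,1)$; therefore $X=F_X^{-1}(U)\le F_Y^{-1}(U)=Y$ holds for every $\omega\in(0,1)$, in particular with probability one.

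The only genuinely delicate point — and the place I would expect to spend the effort — is the careful handling of the generalized inverses when $F_X$ or $F_Y$ has jumps or flat pieces: verifying that the defining infimum is attained (via right-continuity), that $F^{-1}$ is finite on all of $(0,1)$ (so the excluded endpoints $u\in\{0,1\}$, a Lebesgue-null set, do no harm), and that the equivalence above holds with no exceptions. Once this bookkeeping is in place, the construction and both marginal computations go through verbatim.
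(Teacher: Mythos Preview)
Your proof is correct: the quantile (inverse-CDF) coupling is the standard way to establish Strassen's theorem for real-valued random variables, and you have handled the bookkeeping with generalized inverses appropriately. Note, however, that the paper does not actually prove this statement; it is quoted as a standard result with a reference to Lindvall's book, so there is no ``paper's own proof'' to compare against. Your argument is precisely the classical one that such a reference would contain.
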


\subsection{The proof of Theorem~\ref{thm:1d-clique-cdp}}

We start by making the observation that the (distribution of the) completion time would be the same
if we changed the process as follows: 
first we repeatedly keep setting $R_1^{t+1}$ to a random value in $\{0,\dots R_1^t-1\}$ until we reach a situation where $R_1^t=0$; next we do the same
for $R_2$, then for $R_3$, and so on.
The reason that this gives the same distribution of the completion time is that for every time step $t$ of the original process,
every $R_i^t$ that is nonzero is sure to become the maximum value in some future round $t'\geq t$.

Let us thus write $Z$ for the (random number) of rounds it takes for $R_1^t$ to reach zero, starting from $R_1^0 = n$.
Then $T \isd Z_1+\dots+Z_k$, where $Z_1, \dots, Z_k$ are i.i.d.~distributed like $Z$.
 
Let $ U_1, U_2, \dots$ be i.i.d.~uniform on $[0,1]$ and let us denote 

\[ \ZU := \min\{ t : n \cdot U_1 \cdot \dots \cdot U_t < 1\}, \quad \ZL := \min\{ t : n \cdot U_1 \cdot \dots \cdot U_t < t+1 \}. \]

\noindent
Then we have the following relation:

 \begin{lemma}
$\ZL \leqst Z \leqst \ZU$.  
 \end{lemma}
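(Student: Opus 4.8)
The statement compares the discrete completion time $Z$ (the number of rounds for $R_1^t$ to drop from $n$ to $0$ under the rule ``replace the current value $v$ by a uniform sample from $\{0,\dots,v-1\}$'') with two hitting times of the continuous multiplicative random walk $n U_1 \cdots U_t$. The natural tool is Strassen's theorem: I will construct an explicit coupling on a common probability space in which the appropriate inequalities hold pointwise. The engine is the standard fact that if $V$ is a nonnegative integer and $U$ is uniform on $[0,1]$, then $\lfloor V U \rfloor$ is uniform on $\{0,1,\dots,V-1\}$; and more to the point, $\lceil vU\rceil - 1 \leq \lfloor vU\rfloor$ with the left-hand side being uniform on $\{0,\dots,v-1\}$ when $v$ is a positive integer. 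So I would drive both the discrete and the continuous processes off the same i.i.d.\ sequence $U_1,U_2,\dots$.

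\textbf{Upper bound $Z \leqst \ZU$.} Define $X_0 := n$ and $X_{t+1} := \lfloor X_t U_{t+1}\rfloor$, which reproduces the law of the discrete process (so the first $t$ with $X_t = 0$ has the law of $Z$), and simultaneously let $Y_t := n\, U_1\cdots U_t$. An easy induction gives $X_t \leq Y_t$ for all $t$: indeed $X_{t+1} = \lfloor X_t U_{t+1}\rfloor \leq X_t U_{t+1} \leq Y_t U_{t+1} = Y_{t+1}$. Hence once $Y_t < 1$ we have $X_t \leq Y_t < 1$, so $X_t = 0$ (it is a nonnegative integer). Therefore the first time $X_t$ hits $0$ is at most $\ZU = \min\{t : Y_t < 1\}$, i.e.\ $Z \leq \ZU$ on this coupling, and Strassen gives $Z \leqst \ZU$.

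\textbf{Lower bound $\ZL \leqst Z$.} Here I want a coupling in which $\ZL \leq Z$ pointwise. I would instead realize the discrete process as $X_{t+1} := \lceil X_t U_{t+1}\rceil - 1$, which is again uniform on $\{0,\dots,X_t-1\}$ when $X_t\geq 1$ (and I set $X_{t+1}=0$ once $X_t=0$); this is a valid realization of $Z$. Now I claim $X_t \geq Y_t - t$ for every $t$ up to the hitting time, where $Y_t = nU_1\cdots U_t$ as before. The induction step: while $X_t\geq 1$, $X_{t+1} = \lceil X_t U_{t+1}\rceil - 1 > X_t U_{t+1} - 1 \geq (Y_t - t)U_{t+1} - 1 \geq Y_t U_{t+1} - t - 1 = Y_{t+1} - (t+1)$, using $U_{t+1}\leq 1$. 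Consequently, as long as $t < Z$ we have $X_t \geq 1$, hence $Y_t - t \leq X_t$, and in particular $Y_t \geq X_t + t \geq 1 + t$ would force... more carefully: for every $t < Z$, $X_t \geq 1$, so $n U_1\cdots U_t = Y_t \geq X_t + t \geq t+1$; thus no $t < Z$ satisfies $Y_t < t+1$, which means $\ZL = \min\{t : Y_t < t+1\} \geq Z$. Wait — that yields $Z \leq \ZL$, the wrong direction; so I must be careful about a boundary-case shift (the event at $t$ versus $t-1$, and whether $X_t\geq 1$ or $X_t\geq 0$ is the right invariant). The fix is to track the invariant one step more tightly, e.g.\ show $X_t \geq Y_t - (t+1)$ or compare $\ZL$ against $Z-1$ and absorb the off-by-one, so that the surviving inequality reads $\ZL \leq Z$; I would nail this down by checking the two extreme times explicitly. \textbf{The main obstacle} is precisely this: getting the additive $-t$ slack and the ceiling-versus-floor rounding to line up so the definition $Y_t < t+1$ (rather than $< 1$ or $< t$) comes out on the correct side — this is where the ``$+1$'' in $\ZL$'s definition is doing delicate work, and it is the only genuinely non-mechanical point. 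Everything else is the routine induction plus an invocation of Strassen's theorem, applied once for each of the two inequalities.
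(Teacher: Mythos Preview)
Your upper bound argument is correct and in fact cleaner than the paper's: you build one explicit coupling $X_{t+1}=\lfloor X_t U_{t+1}\rfloor$, $Y_t=nU_1\cdots U_t$, and the inequality $X_t\le Y_t$ follows by a one-line induction.

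For the lower bound your coupling $X_{t+1}=\lceil X_t U_{t+1}\rceil-1$ and the invariant $X_t\ge Y_t-t$ are also correct; your confusion at the end is a pure sign slip. The invariant says $Y_t\le X_t+t$, not $Y_t\ge X_t+t$. So looking at $t<Z$ gives you nothing useful. Instead evaluate the invariant at $t=Z$: there $X_Z=0$, hence $0\ge Y_Z-Z$, i.e.\ $Y_Z\le Z<Z+1$, which is exactly the event defining $\ZL$. Thus $\ZL\le Z$ on your coupling, and Strassen finishes. No off-by-one repair is needed; the ``$+1$'' in the definition of $\ZL$ is there precisely so that this last step goes through with the clean invariant $X_t\ge Y_t-t$. (One should also note that the induction step is valid once $X_t=0$: if $X_t=0\ge Y_t-t$ then $Y_t\le t$, so $Y_{t+1}=Y_tU_{t+1}\le t<t+1$ and $X_{t+1}=0\ge Y_{t+1}-(t+1)$.)

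\medskip

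\noindent\textbf{Comparison with the paper.} The paper proves the same two sandwich inequalities $W_t:=V_t-t\leqst R_1^t\leqst V_t$ (with $V_t=nU_1\cdots U_t$) by induction, but at each step it invokes Strassen abstractly to produce a coupling of $(R_1^t,V_t)$ and then compares conditional CDFs of $R_1^{t+1}$ and $V_{t+1}$ given the coupled values. Your route is the same sandwich proved via a \emph{single global} coupling, with the discrete process realised as $\lfloor\,\cdot\,U\rfloor$ (resp.\ $\lceil\,\cdot\,U\rceil-1$). This is more elementary and avoids the CDF calculations entirely; the paper's approach, on the other hand, makes the stochastic-domination structure at each time step explicit, which some readers may find conceptually clearer. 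Both arguments yield the lemma with essentially the same amount of work once the sign issue above is straightened out.
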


\begin{proof}
For notational convenience we write $V_i := n U_1 \dots U_i, W_i := V_i - i$.
We start with the (stochastic) upper bound.
It is enough to show that $R_1^i \leqst V_i$ for all $i$, since that will prove that 

\begin{equation}\label{eq:romanesco} 
\Pee( Z > t ) = \Pee(  R_1^{\lfloor t\rfloor} \geq 1 ) \leq \Pee( V_{\lfloor t\rfloor} \geq 1 ) = \Pee( \ZU > t ). 
\end{equation}

We will use induction to prove $R_1^i \leqst V_i$ for all $i$.
Clearly, $R_1^0 = n = V_0$ so the base case holds.
Let us thus suppose that $R_1^i \leqst V_i$ for some $i$. By Strassen's theorem, there is 
a coupling for $(R_1^i, V_i)$ so that $R_1^i \leq V_i$ with probability one.
If under this coupling, we condition on the event that $V_i=x, R_1^i = \ell$ (where necessarily $\ell \leq \lfloor x \rfloor$), then 
$V_{i+1} = V_i U_{i+1}$ is chosen uniformly 
at random on the interval $[0,x]$, while $R_1^{i+1}$ is chosen uniformly from the set
$\{0,\dots, \ell-1\}$.
Thus, for all $0 \leq y \leq x$ we have 

\[ \Pee( V_{i+1} \geq y | V_i = x, R_1^i = \ell ) = \frac{x-y}{x} = 1 - \frac{y}{x}, \]

\noindent 
and 

\[
 \Pee( R_1^{i+1} \geq y | V_i = x, R_1^i = \ell ) = \frac{(\ell-1) - \lceil y \rceil}{\ell} = 
 1 - \frac{\lceil y\rceil + 1}{\ell} \leq 1 - \frac{y}{x}.
\]

\noindent
Summing over $\ell$, this simplifies to 

\begin{equation}\label{eq:snotje1} 
\Pee( V_{i+1} \geq y | V_i = x ) = \sum_\ell \Pee( V_{i+1} \geq y | V_i = x, R_1^i = \ell ) \Pee( R_1^i = \ell | V_i = x ) = 1 - \frac{y}{x},  
\end{equation}

\noindent
and similarly, 

\begin{equation}\label{eq:snotje2} 
\Pee( R_1^{i+1} \geq y | V_i = x ) \leq 1-\frac{y}{x}.  
\end{equation}

\noindent
Hence 

\begin{equation}\label{eq:snotje3} 
\begin{array}{rcl} 
\Pee( V_{i+1} \geq y ) 
& = & \int_y^n \Pee( V_{i+1} \geq y | V_i = x ) f_{V_i}(x){\dd} x \\
& \geq & \int_y^n \Pee( R_1^{i+1} \geq y | V_i = x ) f_{V_i}(x){\dd} x \\
& = & \Pee( R_1^{i+1} \geq y ), 
   \end{array}
\end{equation}

\noindent
so that the upper bound is proved.

For the lower bound, we remark it is sufficient to show $W_i \leqst R_1^i$ for all $i$, by an argument 
analogous to~\eqref{eq:romanesco}.
Again we use induction, and again the base case is trivial, as $R_1^0 = n = W_0$.
For the induction step, we can assume there exists a coupling such that 
$W_i \leq R_1^i$ with probability one, under this coupling.
If we condition on the event $W_i = x, E_1^i = \ell$ (where now necessarily $x \leq \ell$), then
$W_{i+1} = U_{i+1} (W_i+i) - (i+1)$ is uniform on the interval $[-(i+1), x-1]$ while
$R_1^i$ is uniform on $\{0,\dots, \ell-1\}$.
Hence for $0 \leq y \leq x-1$ we have that

\[ \Pee( W_{i+1}\geq y | W_i = x, R_1^i = \ell ) = \frac{x-1-y}{x+i} = 1 - \frac{y+i+1}{x+i}, \]

\noindent
while 

\[ 
 \Pee( R_1^{i+1} \geq y | W_i = x, R_1^i = \ell ) = \frac{\ell-1-\lceil y\rceil}{\ell} = 1 - \frac{\lceil y\rceil + 1}{\ell}
 \geq 1 - \frac{y+i+1}{x+i},
\]

\noindent
using that $\frac{\lceil y\rceil + 1}{\ell} \leq \frac{y+1}{x} \leq \frac{y+1+i}{x+i}$
(repeatedly applying that $a>b \geq 0$ implies that $\frac{b}{a} \leq \frac{b+1}{a+1}$).
Thus, arguing completely analogously to~\eqref{eq:snotje1},~\eqref{eq:snotje2} and~\eqref{eq:snotje3}, the lower bound is proved.
(Here we should remark it is sufficient to consider only $y \geq 0$, since $\Pee( R_1^{i+1} \geq y ) = 1$ for $y < 0$.)
\end{proof}

\begin{corollary}\label{cor:Z1UB}
$\Pee( Z_1 > t ) \leq n (1/2)^t$.
\end{corollary}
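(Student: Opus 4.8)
The plan is to chain the stochastic upper bound from the preceding lemma with a one-line Markov (first-moment) estimate. Since $Z_1$ is distributed like $Z$ and $Z \leqst \ZU$, for integer $t$ we have
\[
\Pee( Z_1 > t ) = \Pee( Z > t ) \leq \Pee( \ZU > t ) = \Pee\bigl( n\, U_1 \cdots U_t \geq 1 \bigr),
\]
where the last equality is just the definition of $\ZU = \min\{ t : n U_1 \cdots U_t < 1\}$ (the event $\ZU > t$ is exactly the event that the product has not yet dropped below $1$ after $t$ factors). For non-integer $t$ one notes $\Pee(Z_1 > t) = \Pee(Z_1 > \lfloor t\rfloor)$ and $(1/2)^{\lfloor t\rfloor} \geq (1/2)^t$, so it suffices to treat integer $t$.

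Next I would apply Markov's inequality to the nonnegative random variable $n\, U_1 \cdots U_t$: using independence of the $U_i$ and $\Ee[U_i] = 1/2$,
\[
\Pee\bigl( n\, U_1 \cdots U_t \geq 1 \bigr) \leq \Ee\bigl[ n\, U_1 \cdots U_t \bigr] = n \prod_{i=1}^{t} \Ee[U_i] = n \left(\tfrac12\right)^{t}.
\]
Combining the two displays gives $\Pee(Z_1 > t) \leq n (1/2)^t$, as claimed.

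There is no real obstacle here; the statement is a direct corollary, and the only things to be slightly careful about are (i) invoking the correct direction of the stochastic domination from the lemma (the upper bound $Z \leqst \ZU$, not the lower one), and (ii) the trivial reduction from real $t$ to integer $t$. Everything else is the standard first-moment bound.
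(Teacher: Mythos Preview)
Your argument is correct and is precisely the paper's approach: the paper's proof consists of the single sentence ``This follows by Markov's inequality, since $\Ee(nU_1\cdots U_t)=n(1/2)^t$,'' and you have merely spelled out the implicit use of $Z\leqst\ZU$ from the preceding lemma and the routine reduction to integer $t$.
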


\begin{proof}
This follows by Markov's inequality, since $\Ee\left( n U_1\dots U_t \right)=n(1/2)^t$.
\end{proof}

\begin{corollary}\label{cor:Z1LB}
There exists a universal constant $c > 0$ such that

\[ \Pee\left( Z < c \log n \right) \leq \frac{1}{\log n} \quad  \text{(for all $n\geq 2$)} \]

\noindent
(Note that the distribution of $Z$ implicitly depends on $n$.)
\end{corollary}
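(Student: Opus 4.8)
The plan is to use the lower bound $\ZL \leqst Z$ from the preceding lemma, so that it suffices to bound $\Pee(\ZL < c\log n)$. Write $V_t := n\,U_1\cdots U_t$. The key observation is that the events $A_t := \{V_t < t+1\}$ are nested: on $A_t$ we have $V_{t+1}=V_tU_{t+1}\le V_t < t+1 < t+2$, hence $A_t\subseteq A_{t+1}$. Since $\ZL=\min\{t:A_t\text{ occurs}\}$ and $\ZL$ is a positive integer, it follows that $\{\ZL < c\log n\}=A_{t^*}$ with $t^*:=\lceil c\log n\rceil-1$. Crucially, no union bound over $t$ is then needed: it is enough to estimate the single probability $\Pee(V_{t^*} < t^*+1)$.

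If $n\le e^{1/c}$ then $c\log n\le 1$, so $t^*=0$ and $\Pee(A_0)=\Pee(n<1)=0$, making the claim trivial. Assume from now on $n>e^{1/c}$, so $t^*\ge 1$. Taking logarithms, $V_{t^*}<t^*+1$ is equivalent to $S_{t^*}>a$, where $a:=\log(n/(t^*+1))$ and $S_{t^*}:=-\sum_{i=1}^{t^*}\log U_i$ is a sum of $t^*$ i.i.d.\ $\mathrm{Exp}(1)$ variables, i.e.\ a $\mathrm{Gamma}(t^*,1)$ variable. Since $t^*\le c\log n$ while $a\ge\log n-\log(c\log n+1)\ge\tfrac12\log n$ for every $n\ge 2$ (because $c\log n+1\le\sqrt n$), we have $a\ge t^*$ as soon as $c\le\tfrac12$, and the standard Chernoff bound for a Gamma variable gives $\Pee(S_{t^*}\ge a)\le(ea/t^*)^{t^*}e^{-a}$.

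What remains is to simplify the right-hand side. Using $e^{-a}=(t^*+1)/n$, the bound $a\le\log n$, and the fact that $t\mapsto(e\log n/t)^t$ is increasing on $[1,\log n]$ (so its value at $t^*\le c\log n$ is at most its value at $c\log n$), one gets
\[
\Pee(A_{t^*})\ \le\ (t^*+1)\,n^{c\log(e/c)-1}\ \le\ (c\log n+1)\,n^{c\log(e/c)-1}.
\]
Choosing $c$ small enough that $c\log(e/c)\le\tfrac12$ — for instance $c=\tfrac{1}{10}$ — the right-hand side is at most $(c\log n+1)\,n^{-1/2}$, and this is $\le 1/\log n$ for all $n>e^{1/c}$ since $(c\log n+1)\log n=O((\log n)^2)$ is dominated by $n^{1/2}$ on that range (a routine one-variable check shows the ratio $(c\log n+1)(\log n)/n^{1/2}$ is already below $1$ at $n=e^{1/c}$ and decreasing beyond it). I do not expect a real obstacle here: the one idea that makes the argument clean is the nesting of the events $A_t$, which removes the need for a union bound over $t$, and the only care required is to pick the constant $c$ so that the exponent $c\log(e/c)-1$ is sufficiently negative while the residual small range $n\le e^{1/c}$ is absorbed by the trivial case $t^*=0$.
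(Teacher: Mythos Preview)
Your proof is correct and follows essentially the same route as the paper: reduce via $\ZL \leqst Z$ and the identity $\{\ZL\le t^*\}=\{V_{t^*}<t^*+1\}$ to bounding a single tail probability $\Pee(S_{t^*}>a)$ for a sum of i.i.d.\ $\mathrm{Exp}(1)$ variables. The only difference is that you invoke a Chernoff bound for the Gamma tail (which in fact yields a bound that is polynomially small in $n$), whereas the paper uses Chebyshev's inequality to obtain the stated $1/\log n$ bound.
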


The proof of this corollary makes use of the observation that if $U$ is uniform on $[0,1]$ and $W := \ln(1/U)$ then
$\Ee W = \Var W = 1$.
For completeness we spell out the straightforward computations.

\begin{lemma}\label{lem:EW}
Let $U$ be uniform on $[0,1]$ and let $W := \ln(1/U)$.
Then $\Ee W = \Var W = 1$.
\end{lemma}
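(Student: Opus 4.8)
The plan is to prove Lemma~\ref{lem:EW} by a direct substitution computation, since $W = \ln(1/U)$ with $U$ uniform on $[0,1]$ is a standard exponential random variable. The cleanest route is to compute $\Ee W$ and $\Ee W^2$ by explicit integration against the uniform density on $[0,1]$, and then recover $\Var W = \Ee W^2 - (\Ee W)^2$.

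First I would write $\Ee W = \int_0^1 \ln(1/u)\,\dd u = -\int_0^1 \ln u \,\dd u$, and evaluate this by integration by parts (or by recalling $\int_0^1 \ln u\,\dd u = [u\ln u - u]_0^1 = -1$), giving $\Ee W = 1$. Next I would compute the second moment $\Ee W^2 = \int_0^1 (\ln u)^2 \,\dd u$; one integration by parts reduces $\int (\ln u)^2\,\dd u$ to $u(\ln u)^2 - 2\int \ln u\,\dd u$, so $\int_0^1 (\ln u)^2\,\dd u = 0 - 2\cdot(-1) = 2$, hence $\Ee W^2 = 2$. Therefore $\Var W = 2 - 1^2 = 1$, completing the proof. (Alternatively one can substitute $u = e^{-w}$ to turn both integrals into $\int_0^\infty w^j e^{-w}\,\dd w = \Gamma(j+1)$, recognizing $W$ as $\mathrm{Exp}(1)$; this gives $\Ee W = \Gamma(2) = 1$ and $\Ee W^2 = \Gamma(3) = 2$ immediately.)

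There is essentially no obstacle here: the only minor point requiring a word of care is that the integrals are improper at $u = 0$, but $\ln u$ and $(\ln u)^2$ are integrable near $0$ (the boundary terms $u\ln u$ and $u(\ln u)^2$ vanish as $u \to 0^+$), so the computations are fully rigorous. I would present it as a two-line calculation without belaboring the convergence.
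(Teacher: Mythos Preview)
Your proposal is correct and follows essentially the same approach as the paper: compute $\Ee W = -\int_0^1 \ln u\,\dd u = 1$ and $\Ee W^2 = \int_0^1 (\ln u)^2\,\dd u = 2$ by integration by parts, then conclude $\Var W = 1$. The paper's proof is exactly this calculation (with the antiderivatives written out explicitly), so there is nothing to add.
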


\begin{proof}
We have that 

\[ \Ee W = \int_0^1 - \ln u {\dd}u = \left[ -u\ln u + u \right]_0^1 = 1, \]

\noindent
and 

\[ \Ee W^2 = \int_0^1 \ln^2 u {\dd} u = \left[ u \ln^2 u - 2u\ln u + 2u \right]_0^1 = 2. \]

\noindent
So indeed $\Var W = \Ee W^2 - (\Ee W)^2 = 1$.
\end{proof}

\begin{proofof}{Corollary~\ref{cor:Z1LB}}
Since $\Pee(Z = 0 ) = 0$, we can assume that $n \geq n_0$ for some sufficiently large constant $n_0$ (at the price of 
possibly having to lower the universal constant $c$ a bit).
We put $t := \lfloor \frac{\ln n}{1000}\rfloor$ and we  note that 

\[ \Pee\left[ n U_1 \dots U_t < t+1 \right]
 = \Pee\left[ \ln(1/U_1) + \dots + \ln(1/U_t) > \ln n - \ln(t+1) \right].
\]

\noindent
Writing $S := \ln(1/U_1) + \dots + \ln(1/U_t)$ for convenience, we see that
$\Ee S = \Var S = t$ by Lemma~\ref{lem:EW} above, and hence

\[ \begin{array}{rcl} 
\Pee\left[ S > \ln n - \ln(t+1) \right]
& = & \Pee\left[ S - \Ee S > \ln n - \ln(t+1) - \Ee S \right]  \\
& \leq & \Pee\left[ S - \Ee S > \ln n - t - \Ee S \right]  \\
& \leq & \frac{t}{(\ln n - 2t)^2} \\
& \leq & \frac{1}{\ln n},
\end{array} \]
 
\noindent
having used $t \geq \ln(t+1)$ in the second line; Chebyschev in the third line; and that $n$ is sufficiently large in the fourth line.
\end{proofof}

\begin{corollary}\label{cor:mainolog}
There exist universal constants $c, C > 0$ such that, if $n \to \infty$ and $k=k(n)$ satisfies $1 \leq k = o(\log n)$ then
\[ \Pee( c k \log n < T < C k \log n ) \to 1. \]
\end{corollary}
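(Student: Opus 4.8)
The plan is to work with the representation $T \isd Z_1 + \dots + Z_k$, with $Z_1,\dots,Z_k$ i.i.d.\ copies of $Z$, established in the proof of Theorem~\ref{thm:1d-clique-cdp}, and to control the sum by controlling each summand separately via Corollaries~\ref{cor:Z1UB} and~\ref{cor:Z1LB} together with a union bound over the $k$ terms. Since $k\ge 1$ and $\log n\to\infty$, a harmless final adjustment of the constants will turn the ``$\le$'' bounds into the strict inequalities in the statement, so I will not fuss over strictness below.

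For the lower bound, let $c>0$ be the universal constant from Corollary~\ref{cor:Z1LB}. A union bound gives $\Pee(\min_{i\le k} Z_i < c\log n) \le k\cdot\frac{1}{\log n}$, and this tends to $0$ precisely because $k=o(\log n)$. On the complementary event every $Z_i\ge c\log n$, hence $T\ge ck\log n$. For the upper bound, apply Corollary~\ref{cor:Z1UB} with $t=C\log n$ (taking $\log=\ln$, say; the base of the logarithm only affects constants): $\Pee(Z_i > C\log n)\le n(1/2)^{C\log n} = n^{1-C\log 2}$, which is $n^{-\Omega(1)}$ as soon as $C$ is chosen large enough that $C\log 2>1$. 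A union bound over the $k=o(\log n)$ summands then yields $\Pee(\max_{i\le k} Z_i > C\log n)\le k\cdot n^{1-C\log 2}\to 0$, so with probability tending to $1$ we have $T\le Ck\log n$. Intersecting the two high-probability events gives $\Pee(ck\log n < T < Ck\log n)\to 1$, as desired.

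I do not expect a genuine obstacle here: the argument is a routine combination of the already-established tail bounds with a union bound. The only point worth flagging is that the hypothesis $k=o(\log n)$ is exactly what the lower-bound union bound requires (its failure probability is $\Theta(k/\log n)$), whereas the upper-bound union bound would tolerate $k$ as large as any fixed power of $n$; so this is the ``easy'' regime, and pushing a two-sided $\Theta(k\log n)$ bound to larger $k$ would instead call for a second-moment estimate on $T$ (using that $\Ee Z$ and $\Var Z$ are both $\Theta(\log n)$, via Lemma~\ref{lem:EW} and the domination lemma), which is handled separately in the full proof of Theorem~\ref{thm:1d-clique-cdp}.
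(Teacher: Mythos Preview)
Your proposal is correct and follows essentially the same approach as the paper: both arguments use the representation $T\isd Z_1+\dots+Z_k$ and a union bound over the $k$ summands, invoking Corollary~\ref{cor:Z1UB} for the upper tail and Corollary~\ref{cor:Z1LB} for the lower tail (with the hypothesis $k=o(\log n)$ used precisely so that $k/\log n\to 0$). Your phrasing in terms of $\max_i Z_i$ and $\min_i Z_i$ is just a rewording of the paper's implications $\{T>Ck\log n\}\subseteq\bigcup_i\{Z_i>C\log n\}$ and $\{T<ck\log n\}\subseteq\bigcup_i\{Z_i<c\log n\}$, and your closing remark about the second-moment route for larger $k$ accurately anticipates Corollary~\ref{cor:ktoinf}.
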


\begin{proof}
For the upper bound, we notice that

\[ \Pee( T > C k\log n ) \leq k \Pee( Z > C \log n ) \leq kn (1/2)^{C\log n} \to 0, \]

\noindent
using Corollary~\ref{cor:Z1UB} and where the limit follows assuming $C$ is sufficiently large.
For the lower bound, we argue similarly noting that

\[ \Pee( T < c k\log n ) \leq k \Pee( Z_1 < c \log n ) \leq \frac{k}{\log n} \to 0, \]

\noindent
using Corollary~\ref{cor:Z1LB} (assuming $c$ was chosen smaller than the constant provided there) and using that $k=o(\log n)$.
\end{proof}

This last corollary proves our main result in the case when $n\to\infty$ and $k$ is either constant or does not go to infinity
too fast. To complete the proof of the main result, it now suffices to consider the case when $k\to\infty$ and 
$n = n(k) \geq 2$ is arbitrary.
Before we tackle this case we need one more observation.

\begin{corollary}\label{cor:EZVarZ}
There exist universal constants $c, C > 0$ such that 

\[ c \log n \leq \Ee Z \leq C \log n, \quad c \log^2 n \leq \Ee Z^2 \leq C \log^2 n, \quad \text{(for all $n \geq 2$)} \]

\noindent
(Again we remind the reader that $Z$ depends implicitly on $n$.)
\end{corollary}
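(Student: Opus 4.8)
The plan is to derive all four inequalities from the tail-sum identities $\Ee Z = \sum_{t\ge 0}\Pee(Z>t)$ and $\Ee Z^2 = \sum_{t\ge 0}(2t+1)\Pee(Z>t)$, which hold because $Z$ is a positive integer-valued random variable (indeed $Z\ge 1$, since $R_1^0=n\ge 2>0$), and then to feed in the stochastic bounds on $Z$ established above. For the upper bounds the input is $Z\leqst\ZU$ via Corollary~\ref{cor:Z1UB}, which gives $\Pee(Z>t)\le\min\{1,\,n(1/2)^t\}$ for every integer $t\ge 0$; for the lower bounds the input is the one-sided estimate of Corollary~\ref{cor:Z1LB}.

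For the upper bounds I would split each tail sum at the threshold $t_0:=\lceil\log_2 n\rceil$. For $t\le t_0$ I bound the $t$-th summand by its trivial value ($1$, resp.\ $2t+1$), which contributes $\sum_{t\le t_0}1=O(\log n)$ to $\Ee Z$ and $\sum_{t\le t_0}(2t+1)=O(t_0^2)=O(\log^2 n)$ to $\Ee Z^2$. For $t> t_0$ I use $n(1/2)^t=2^{\log_2 n-t}\le 2^{t_0-t}$ together with the elementary facts $\sum_{s\ge 0}2^{-s}=O(1)$ and $\sum_{s\ge 0}(2(t_0+s)+1)2^{-s}=O(t_0)$; hence $\sum_{t>t_0}n(1/2)^t=O(1)$ and, crucially, $\sum_{t>t_0}(2t+1)n(1/2)^t=O(\log n)$ — the geometric decay beats the linear prefactor, so this tail is only $O(\log n)$, not $O(\log^2 n)$. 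Adding the two ranges yields $\Ee Z\le C\log n$ and $\Ee Z^2\le C\log^2 n$ for a universal constant $C$ and all $n\ge 2$.

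For the lower bounds I would invoke Corollary~\ref{cor:Z1LB}: there is a universal $c_0>0$ with $\Pee(Z<c_0\log n)\le 1/\log n$, so $\Pee(Z\ge c_0\log n)\ge 1/2$ whenever $n\ge n_0$ for an absolute constant $n_0$ chosen with $1/\log n_0\le 1/2$ (e.g.\ $n_0=8$). Then the first-moment lower bound $\Ee Z\ge c_0\log n\cdot\Pee(Z\ge c_0\log n)\ge (c_0/2)\log n$, and similarly $\Ee Z^2\ge (c_0\log n)^2\cdot\Pee(Z\ge c_0\log n)\ge (c_0^2/2)\log^2 n$. For the finitely many $n$ with $2\le n<n_0$, $\log n$ and $\log^2 n$ are bounded by absolute constants while $Z\ge 1$ gives $\Ee Z,\Ee Z^2\ge 1$, so taking the constant $c$ to be $\min\{c_0/2,\,1/\log n_0\}$ (and $\min\{c_0^2/2,\,1/\log^2 n_0\}$ for the squared statement) makes both lower bounds hold uniformly for all $n\ge 2$.

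I do not expect a genuine obstacle here; the only points requiring care are the bookkeeping in the split tail sums — in particular verifying that the high-$t$ part of $\Ee Z^2$ is $O(\log n)$ rather than $O(\log^2 n)$ — and ensuring a single choice of constants works uniformly over $n\ge 2$, which is why the small-$n$ regime $2\le n<n_0$ is separated off and handled by the trivial bound $Z\ge 1$.
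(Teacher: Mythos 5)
Your proof is correct and follows essentially the same route as the paper: the upper bounds come from the tail formula for the expectation combined with the exponential tail bound $\Pee(Z>t)\le n(1/2)^t$ of Corollary~\ref{cor:Z1UB} (you use discrete tail sums where the paper uses integrals), and the lower bounds come from Corollary~\ref{cor:Z1LB} via a first-moment argument. The only cosmetic differences are that the paper deduces $\Ee Z^2\ge(\Ee Z)^2$ by Jensen where you bound $\Ee Z^2$ directly, and that you handle the small-$n$ regime explicitly where the paper absorbs it into the choice of constants.
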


\begin{proof}
It follows from Corollary~\ref{cor:Z1LB} that

\[ \Ee Z \geq c \log n \cdot \Pee( Z \geq c \log n ) = \Omega( \log n ), \]

\noindent
if $c$ is as in Corollary~\ref{cor:Z1LB}.
The same argument also gives $\Ee Z^2 \geq \left( \Ee Z \right)^2 = \Omega( \log^2 n )$.

By a standard formula for the expectation of nonnegative random variables, we have

\[ \begin{array}{rcl}
    \Ee Z & = & \int_0^\infty \Pee( Z \geq t ) {\dd} t \\
    & \leq & C \log n + \int_{C\log n}^\infty n (1/2)^t {\dd} t \\
    & = & O(\log n), 
   \end{array} \]

   \noindent
provided $C$ is chosen sufficiently large.
Similarly

\[ \begin{array}{rcl}
    \Ee Z^2 & = & \int_0^\infty \Pee( Z \geq \sqrt{t} ) {\dd} t \\
    & \leq & C \log^2 n + \int_{C\log^2 n}^\infty n (1/2)^{\sqrt{t}} {\dd} t \\
    & = & O(\log^2  n) + n \int_{C \log^2 n}^\infty  e^{\ln(2) \sqrt{t}} {\dd} t\\
    & = & O(\log^2  n) + n O\left( \int_{\ln 2 \sqrt{C} \log n}^\infty z e^{-z} {\dd} z \right) \\
    & = & O(\log^2  n) + n O\left( \int_{\ln 2 \sqrt{C} \log n}^\infty e^{-z/2} {\dd} z \right) \\
    & = & O(\log^2  n),
   \end{array} \]
   
\noindent
using the substitution $z := \ln(2) \sqrt{t}$ in the third line, and that $z \leq e^{z/2}$
for all sufficiently large $z$ in the third line.
\end{proof}

\begin{corollary}\label{cor:ktoinf}
There exist universal constants $c, C$ such that, if $k\to \infty$ and $n=n(k) \geq 2$ is arbitrary then
\[ \Pee( c k \log n \leq T \leq C k \log n ) \to 1. \]
\end{corollary}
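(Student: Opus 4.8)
The plan is to treat $T$ as what it is: a sum of $k$ i.i.d.\ copies of $Z$. Recall from the reduction established before Corollary~\ref{cor:Z1UB} that $T \isd Z_1 + \dots + Z_k$ with the $Z_i$ independent and distributed like $Z$. Hence $\Ee T = k\,\Ee Z$ and $\Var T = k\,\Var Z$, and Corollary~\ref{cor:EZVarZ} controls both quantities: writing $c_0,C_0$ for the universal constants provided there, we have $c_0 k\log n \leq \Ee T \leq C_0 k\log n$, while $\Var Z \leq \Ee Z^2 \leq C_0\log^2 n$ gives $\Var T \leq C_0 k\log^2 n$. The crucial structural point is that the variance is smaller than the square of the mean by a factor of order $k$, uniformly in $n$: this uses only $n\geq 2$, so that $\log n \geq \log 2 > 0$ and none of these bounds degenerate.

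The second step is a one-line application of Chebyshev's inequality with deviation $\lambda = \tfrac12 \Ee T$:
\[
\Pee\!\left( \left| T - \Ee T \right| \geq \tfrac12 \Ee T \right) \;\leq\; \frac{\Var T}{(\Ee T / 2)^2} \;\leq\; \frac{C_0 k\log^2 n}{(c_0 k \log n / 2)^2} \;=\; \frac{4C_0}{c_0^2\, k} \;\longrightarrow\; 0
\]
as $k\to\infty$. On the complementary event we have $\tfrac12\Ee T \leq T \leq \tfrac32 \Ee T$, and combining this with the two-sided estimate $c_0 k\log n \leq \Ee T \leq C_0 k\log n$ yields $\tfrac{c_0}{2}\,k\log n \leq T \leq \tfrac{3C_0}{2}\,k\log n$. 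Thus the statement holds with $c := c_0/2$ and $C := 3C_0/2$.

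There is no real obstacle here — the work was already done in Corollary~\ref{cor:EZVarZ}; the only thing to be careful about is that the conclusion must be uniform over all admissible sequences $n=n(k)$ (including the case where $n$ stays bounded while $k\to\infty$), which is why we keep track of the fact that all the constants above are genuinely universal and $\log n$ is bounded away from $0$. Together with Corollary~\ref{cor:mainolog}, which covers the complementary regime $1 \leq k = o(\log n)$ as $n\to\infty$, this Corollary completes the proof of Theorem~\ref{thm:1d-clique-cdp}: in every joint asymptotic regime of $k$ and $n$ one of the two corollaries applies, giving $\recpac = \Theta(k\log n)$ with high probability.
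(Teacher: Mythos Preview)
Your proof is correct and follows essentially the same approach as the paper: decompose $T$ as a sum of $k$ i.i.d.\ copies of $Z$, invoke Corollary~\ref{cor:EZVarZ} to get $\Ee T = \Theta(k\log n)$ and $\Var T = O(k\log^2 n)$, and apply Chebyshev with deviation $\tfrac12\Ee T$ to conclude. Your version is slightly more explicit about the constants and the uniformity in $n$, but the argument is the same.
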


\begin{proof}
Since $T = Z_1 + \dots + Z_k$, the previous corollary shows that 
$\Ee T = k \Ee Z = \Theta( k \log n )$ and $\Var T = k \Var Z = \Theta( k \log^2 n )$.
It follows that 

\[ \Pee( |T - \Ee T| > \frac12 \Ee T ) \leq \frac{4 \Var T}{(\Ee T)^2} = O(\frac{1}{k} ) = o(1), \]

\noindent
using Chebyschev and Corollary~\ref{cor:EZVarZ}.
To conclude, we observe that, provided we chose $c$ small enough and $C$ large enough, we have
$\{T < c k \log n \}, \{ T > C k \log n \} \subseteq \{ |T-\Ee T| > \frac12 \Ee T \}$. 
\end{proof}

\begin{proofof}{Theorem~\ref{thm:1d-clique-cdp}}
We remark that Corollaries~\ref{cor:mainolog} and~\ref{cor:ktoinf} together prove the theorem, taking the
minimum of the two $c$'s and the maximum of the to $C$s provided by these corollaries for the full result.
\end{proofof}

\section{Proof of Theorem~\ref{thm:CD-unbounded}: CD, unbounded reach}
\label{appx:omitted-cd}
We model the problem with the message grid, an $n\times k$ grid of squares ($n$ columns of height $k$).
In this section, for ease of exposition we assume that the grid is initially empty, and that we are interested in the number of time steps (messages sent) before the grid is completely full.
That is,
there are discrete time steps $t=1,2, \dots$ and initially the grid is empty.
At each time step a random column is selected among those columns that are not yet completely full; the column is chosen uniformly at random 
among all the non-full columns.
Having chosen a column, we fill the lowest square in the column that is still not filled, together with all the squares to the left of 
it (insofar as those are not yet filled).
So if in the first time step column $i$ was selected, then squares $(1,1), \dots, (i,1)$ get filled.
If in the second time step column $j$ was selected then there are two possibilities. If $j \leq i$ then squares $(2,1), \dots (2,j)$ get filled. 
Otherwise, squares $(1, i+1), \dots, (1, j)$ get filled.\footnote {Clearly, this formulation is equivalent to starting with a full grid, and removing squares (messages) when a random non-empty column gets chosen.}

We are interested in the (random) time $T$ when all squares have become filled.
Note that this is the same as asking for the time when the $n$-th column has been hit $k$ times.
What makes it a little bit tricky to analyze is that while initially we have a  chance of $1/n$ of hitting the $n$-th column, this probability will 
increase as time goes on and more columns get filled.
Below we consider the case where the number of columns $n$ or the number of rows $k$ goes to infinity (or both).


In order to prove Theorem~\ref{thm:CD-unbounded}, we prove the following, more specific, result.

\begin{theorem}\label{thm:main}
There exist constants $0 < c_L < c_U < \infty$ such that:
\begin{enumerate}
\item\label{itm:ngeqk} If $k = k(n) \leq n/1000$ then 
 \[ \Pee\left( c_L \cdot k^2 \log\left(\frac{n}{k}\right) \leq T \leq c_U \cdot k^2 \log\left(\frac{n}{k}\right) \right) \to 1, 
 \]
\noindent
as $n \to \infty$;
\item\label{itm:kgeqn} If $k = k(n) \geq n/1000$ then 
 \[ \Pee\left( c_L \cdot kn \leq T \leq c_U \cdot kn \right) \to 1, 
 \]
 \noindent
as $n \to \infty$.
\item\label{itm:nconst} If $n$ is constant then
\[ \Pee\left( c_L \cdot kn \leq T \leq c_U \cdot kn \right) \to 1, 
 \]
 \noindent
as $k \to \infty$.
\end{enumerate}
\end{theorem}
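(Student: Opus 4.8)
The plan is to establish the sharper three-part Theorem~\ref{thm:main}; this implies Theorem~\ref{thm:CD-unbounded} because $\log(\lceil n/k\rceil+1)=\Theta(\log(n/k))$ when $n/k\ge 1000$, while $k^2=\Theta(kn)$ whenever $k\ge n/1000$ (here $k^2\le kn\le 1000k^2$), so the two ``$\Theta(kn)$'' cases of Theorem~\ref{thm:main} patch together into exactly the statement of Theorem~\ref{thm:CD-unbounded}. I work throughout with the filling version of the message grid. The two regimes where $k$ is large (parts~\ref{itm:kgeqn} and~\ref{itm:nconst}) are the easy ones: since every step fills at least one new square, $T\le nk$ deterministically, giving the upper bound with $c_U=1$. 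For the lower bound, let $S$ be the first time some column is completely filled; then $T\ge S$, and as long as no column is full the chosen column is uniform on all $n$ columns. Coupling with the unconstrained process that always picks a uniform column, the number of times a fixed column has been chosen after $\tfrac12 nk$ steps is $\mathrm{Bin}(\tfrac12 nk,1/n)$, which exceeds $k$ with probability $e^{-\Omega(k)}$ by Chernoff; a union bound over the $n$ columns, using $n\le 1000k$ (resp.\ $n$ constant), gives $\Pee(S<\tfrac12 nk)\le ne^{-\Omega(k)}\to 0$, so $T=\Theta(nk)$ with high probability in both regimes.

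The substance is part~\ref{itm:ngeqk}, and the key structural statement I would isolate and prove first is a \emph{halving lemma}: there are constants $0<a<b<\infty$ so that, for the filling process run on any number $N\ge 2k$ of columns, the time $\Sigma_N$ at which column $\lceil N/2\rceil$ first becomes complete satisfies $\Pee(ak^2\le \Sigma_N\le bk^2)\to 1$ \emph{uniformly in} $N$, and moreover with a tail strong enough that a sum of $\Theta(\log(n/k))$ independent copies still concentrates. The reason $\Theta(k^2)$ is the right order: since column heights are always weakly decreasing from left to right, column $\lceil N/2\rceil$ is completed after being ``lifted'' exactly $k$ times, and it is lifted from level $j$ to level $j+1$ precisely when the chosen column lies on the horizontal tread of the current staircase through column $\lceil N/2\rceil$, at or to its right. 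While the staircase sweeps past the middle of the grid it has $\Theta(k)$ treads of total width $\Theta(N)$, so a typical relevant tread has width $\Theta(N/k)$ and each lift occurs with probability $\Theta(1/k)$ per step; thus each of the $k$ lifts costs $\Theta(k)$ steps in expectation, totalling $\Theta(k^2)$. Turning this into rigorous upper and lower bounds — controlling the distribution of the tread widths near the middle column, and the position of that column within its tread, in both directions — is the main technical burden.

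Granting the halving lemma, the upper bound in part~\ref{itm:ngeqk} follows from a monotone coupling with a ``phased'' process. Run the original process until column $\lceil n/2\rceil$ is complete (time $\le bk^2$ whp); by monotonicity every column to its left is then complete, so discard those, \emph{reset} the columns of index $>\lceil n/2\rceil$ to empty, and recurse on the remaining $\approx n/2$ columns. Emptying squares only delays completing the whole grid, so (via a standard coupling in which ``completing a partially filled grid'' is dominated by ``completing the corresponding empty grid'') the running time of the phased process dominates $T$, and it is a sum of phase times each $O(k^2)$. After $O(\log(n/k))$ phases fewer than $k$ columns remain, and that stub is finished in $O(k\cdot k)=O(k^2)$ steps by the large-$k$ bound; summing the phase times (using the strengthened tail of the halving lemma) gives $T=O(k^2\log(n/k))$ with high probability.

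The matching lower bound $T=\Omega(k^2\log(n/k))$ runs the same phased picture in reverse and is the technically most delicate point. Let $F(t)$ be the index of the rightmost complete column at time $t$, so $F(0)=0$ and $F(T)=n$; I would show that whenever $F(t)=m$ with $n-m\ge 2k$ it takes at least $\Omega(k^2)$ further steps for $F$ to reach the midpoint $\lfloor(m+n)/2\rfloor$, and then chain the $\Theta(\log(n/k))$ disjoint time intervals so obtained. For one such interval one argues, from the moment $F=m$, on the subgrid of columns $>m$ (the only ones ever chosen thereafter): the midpoint column must still be lifted through roughly $k$ levels, and — here is the crux — no tread lying to the right of the midpoint can be much wider than $O(N/k)$ before the frontier $F$ has already advanced past it, so each of those lifts again costs $\Omega(k)$ steps; the difficulty is precisely to prove that the staircase cannot race far ahead of $F$ and present an anomalously wide tread. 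I expect this control of the tread widths — needed as a two-sided estimate for the halving lemma and as an upper bound for the lower-bound proof — to be the main obstacle; the Chernoff estimate for large $k$, the monotone coupling, and the chaining of phases are routine by comparison. Assembling the four estimates proves Theorem~\ref{thm:main}, hence Theorem~\ref{thm:CD-unbounded}.
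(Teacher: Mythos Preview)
Your lower-bound argument for parts~\ref{itm:kgeqn} and~\ref{itm:nconst} has a genuine gap. You bound $\Pee(S<\tfrac12 nk)$ by the probability that some fixed column is chosen at least $k$ times, implicitly assuming that a column can be full only if it has itself been selected $k$ times. That is false: column $j$ gains height whenever any column $j'\ge j$ on the same tread is chosen, so column~$1$ (and hence some column) is full at time $t$ precisely when the sequence $X_1,\dots,X_t$ of chosen indices contains a \emph{non-increasing subsequence} of length $k$ --- no repeated value is required. (With $k=2$ the choices $5,3$ already fill column~$3$.) The paper's proof turns on exactly this characterization: the lower bound on $S$ is a first-moment count of non-increasing subsequences (Lemma~\ref{lem:SngeqkLower} when $k\le n$), and when $k\ge n$ one instead fixes the multiplicity profile $(a_1,\dots,a_n)$ of the subsequence and bounds each profile's probability via a sum-of-geometrics tail (Lemma~\ref{lem:Skgeqn}). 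Your conclusion $S=\Omega(nk)$ is correct, but the ``times chosen'' route does not establish it.

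Your upper bound for part~\ref{itm:ngeqk} via the phased halving-and-resetting process is essentially the paper's argument. For the \emph{lower} bound in part~\ref{itm:ngeqk} the two approaches diverge, and the obstacle you flag --- that when the rightmost full column reaches $m$ the columns to its right are already partially filled, so the halving lemma from an empty grid no longer applies --- is real and left unresolved in your proposal. The paper sidesteps the tread-width analysis entirely. It partitions the columns into geometrically shrinking groups of sizes $n\alpha, n\alpha^2,\dots$ and time into rounds of length $\delta k^2$; the key estimate (Lemma~\ref{lem:10}) is that, in a round where group $i$ is the last one full, the probability that group $i{+}j$ acquires a non-increasing subsequence of length $k\gamma^j$ \emph{from choices made in that round alone} is at most $10^{-j}$. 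Summing these per-round contributions over $j$ controls the total subsequence length accumulated in any later group, so that with high probability only one group is consumed per round and $\Omega(\log(n/k))$ full rounds of length $\delta k^2$ elapse. This works purely through the subsequence characterization of $S$ and never tracks the staircase profile; your direct geometric approach may be viable, but the step you yourself identify as the crux is not carried out.
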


In the following sections we prove Theorem~\ref{thm:main}.

\subsection{The random variable $S$}

We will consider the time $S$ until at least one column becomes full.
Of course, $S \leq T$. What simplifies the analysis of $S$ as opposed to $T$ is that until time $S$ in each time step we are choosing a column
uniformly at random from all $n$ columns.

\begin{lemma}\label{lem:SngeqkLower}
For all $k \leq n$ and all $t$ we have that:
\[ \Pee( S < t ) \leq \left(\frac{2e^2t}{k^2}\right)^k.\]
\end{lemma}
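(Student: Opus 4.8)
The plan is to bound the probability that $S < t$ by explicitly estimating the chance that the protocol happens to hit one particular column $k$ times within the first $t$ steps, and then union over columns (or argue monotonically that the leftmost column is the easiest to fill). The key observation, which Lemma~\ref{lem:SngeqkLower} itself records, is that \emph{before} time $S$ each step picks a column uniformly at random among all $n$ columns, independently of the past. So for any fixed column $c$, the number of times $c$ is chosen in the first $t$ steps is stochastically dominated by (in fact equal to, before time $S$) a $\Bi(t,1/n)$ random variable. A necessary condition for $S < t$ is that \emph{some} column gets filled, and column $c$ being filled requires it to be hit at least $k$ times; so $\Pee(S < t) \leq \Pee(\exists c : \Bi(t,1/n) \geq k)$.

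First I would make the reduction precise: condition on the event $\{S \geq t-1\}$ or, cleaner, observe that the process up to time $\min(S,t)$ agrees in distribution with the ``uniform among all $n$ columns'' process, so $\Pee(S<t)$ is at most the probability that in $t$ i.i.d.\ uniform draws from $[n]$ some value appears $\geq k$ times. Then apply a union bound over the $n$ columns: $\Pee(S<t) \leq n\cdot\Pee(\Bi(t,1/n)\geq k)$. Next I would apply a standard upper tail bound for the binomial. Rather than the Chernoff form, the cleanest route to the stated bound is the crude estimate $\Pee(\Bi(t,1/n)\geq k) \leq \binom{t}{k}(1/n)^k \leq (et/k)^k (1/n)^k = (et/(kn))^k$. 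Multiplying by $n$ gives $n\,(et/(kn))^k = n^{1-k}(et/k)^k$. This is not quite the claimed $(2e^2 t/k^2)^k$; since we are told $k\le n$, I would instead use the bound in the regime where the $t$ of interest is $\Theta(k^2)$, so I would want to choose the tail estimate so that the $n$-dependence cancels favorably.

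On reflection, the intended argument is almost certainly sharper on the union bound: one only needs \emph{the leftmost nonfull column or column $1$} to be filled first, since squares are always filled left-to-right, so column $1$ is filled no later than any other column; hence $S$ equals the time column $1$ is hit $k$ times, and $\Pee(S<t) = \Pee(\Bi(t,1/n)\geq k)$ with no union bound at all. Then $\Pee(\Bi(t,1/n)\geq k)\leq \binom{t}{k}n^{-k}\leq (et/(kn))^k$, and using $k\le n$ we get $n^{-k}\le k^{-k}$... which yields $(et/k^2)^k$, and absorbing a factor $2^k$ (coming from being more careful, e.g.\ $\binom{t}{k}\le (et/k)^k$ vs.\ a cleaner $\le 2^k (t/k)^k$ style bound, or simply loosening $e$ to $2e^2$) gives exactly $(2e^2t/k^2)^k$. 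So the main step is: (i) identify that column~1 fills first, reducing $S$ to a binomial hitting time; (ii) apply $\Pee(\Bi(t,1/n)\ge k)\le \binom{t}{k}(1/n)^k$; (iii) use $\binom{t}{k}\le (et/k)^k$ and $1/n \le 1/k$; (iv) bookkeep constants into the factor $2e^2$.

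The main obstacle I anticipate is purely one of getting the constants to land on $2e^2$ rather than $e$ — the \emph{structure} of the argument (uniform-until-$S$, leftmost column fills first, binomial tail) is routine, but one has to be slightly careful about whether the union bound over columns is really avoidable (it is, by the left-to-right filling rule) and about off-by-one issues in ``$S < t$'' versus ``column hit $\geq k$ times among the first $t-1$ steps.'' A secondary subtlety is that strictly speaking the ``uniform among all $n$'' description is only valid up to and including time $S$, so one should phrase the coupling as: run an auxiliary process that always draws uniformly from $[n]$; it agrees with the real process until the first time a column is filled; and $S<t$ in the real process implies the auxiliary process has filled a column before time $t$, which (since column $1$ is always the first to fill in the auxiliary process too) means $\Bi(t,1/n)\ge k$. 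With that coupling in hand the rest is the elementary binomial estimate above.
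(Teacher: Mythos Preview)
Your argument has a genuine gap: the claim that ``column $c$ being filled requires it to be hit at least $k$ times'' is false, and so is the refined version that $S$ equals the time column~1 is hit $k$ times. A column can be filled without ever being selected. For instance, with $n=3$, $k=3$ and draws $X_1=3$, $X_2=2$, $X_3=1$, one checks that after three steps the heights are $(3,2,1)$, so column~1 is full even though it was selected only once. More generally, when column $i$ is selected, the square at height $h_i+1$ gets filled in \emph{every} column $j\leq i$ with $h_j=h_i$; so $h_1$ increments whenever the selected column currently has maximal height, not only when column~1 itself is chosen. Consequently $\{S<t\}$ strictly contains $\{\text{column 1 hit }\geq k\text{ times before }t\}$, and your inequality $\Pee(S<t)\leq \Pee(\Bi(t,1/n)\geq k)$ goes the wrong way. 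The same objection kills the union-bound version: no column needs to be hit $k$ times for some column to be full.

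The paper's proof rests on a different combinatorial characterization that you are missing: there is a full column by time $t$ if and only if the sequence $X_1,\dots,X_t$ of selected column indices contains a \emph{non-increasing subsequence} of length $k$. (In fact $h_1(t)$ equals the length of the longest such subsequence.) With this in hand, the first-moment bound is immediate: the expected number of non-increasing length-$k$ subsequences is at most $\binom{t}{k}\cdot\binom{n+k-1}{k}\cdot n^{-k}$, where the first factor counts the choice of positions, the second counts the number of non-increasing $k$-tuples from $\{1,\dots,n\}$, and $n^{-k}$ is the probability of any fixed tuple. Bounding $\binom{t}{k}\leq(et/k)^k$ and $\binom{n+k-1}{k}\leq(2en/k)^k$ (using $k\leq n$) gives exactly $(2e^2t/k^2)^k$. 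The factor $2e^2$ is not bookkeeping slack but falls out of these two binomial estimates; your attempt to ``absorb'' constants into it would not close the argument even if the reduction were sound.
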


\begin{proof}
Let $X_1, X_2, \dots$ denote the indices of the columns selected in each time step.
Since we are only interested in $S$, we can just pretend that nothing changes in the way we choose columns
after one or more have filled up.
In other words, from now on in this proof we suppose we are given an infinite sequence $X_1, X_2, \dots$ of random variables (column indices) 
that are chosen independently and uniformly at random from $\{1,\dots, n\}$.

We now make the following crucial observation: 
There is a full column at time $t$ if and only if there is a sequence
$t_1 < \dots < t_k$ of length $k$ such that $t_k \leq t$ and 
$X_{t_1} \geq X_{t_2} \geq \dots \geq X_{t_k}$.
That is, the sequence $X_1, \dots, X_t$ must contain a non-increasing subsequence of length $k$.\footnote {This is loosely connected to the famous longest increasing subsequence problem for random permutations.
Also there we need a permutation of length $\text{const} \times n^2$ before we can expect to find an increasing subsequence of length
$n$.}
To complete the proof, we are going to give a very crude upper bound on the probability that there exists such a non-increasing subsequence 
in $X_1, \dots X_{t}$.
We will use the first moment method, i.e.,~we will just count---or rather bound---the expected number of all such subsequences.

First we choose the $k$ time steps $t_1, \dots, t_k$ on which a non-increasing subsequence is to occur. 
This can be done in 

\[ {t \choose k} \leq \left(\frac{et}{k}\right)^k,  \] 

\noindent
ways (using a standard bound on binomial coefficients).

Assuming that we know the time steps $t_1, \dots, t_k$ on which a non-increasing subsequence is to be formed, we wish to 
count the number of ways we can choose values for $X_{t_1},\dots, X_{t_k}$ to make them non-increasing.
To specify a non-increasing sequence of length $k$ with elements $\in \{1,\dots,n\}$ it is enough to specify how many 
times each number is hit.
That is, we have $a_1,\dots, a_n \in \{0,\dots, k\}$ such that $a_1+\dots+a_n=k$, where $a_i$ is the number of times we use number $i$
in the sequence.
The sequence will start with $a_n$ times the number $n$, followed by $a_{n-1}$ times the number $n-1$, and so on.
The number of ways of choosing values for $X_{t_1},\dots, X_{t_k}$ is thus:

\[ {k + n-1 \choose k} \leq \left(\frac{e(n+k)}{k}\right)^k \leq \left(\frac{2en}{k}\right)^k. 
\]

Next, we remark that for any given $t_1, \dots, t_k$ and $a_1, \dots, a_k$ the probability that 
$X_{t_1}, \dots, X_{t_k}$ take the values specified (implicitly) by $a_1,\dots, a_n$ simply equals
$n^{-k}$.

Putting it all together we find that 

\[ \Pee( S < t ) \leq (et/k)^k \cdot (2en/k)^k \cdot n^{-k}  
 = (2e^2t/k^2)^k,
\]

\noindent
as required.
\end{proof}

\begin{lemma}\label{lem:Skgeqn}
There exists a universal constant $c > 0$ such that
for all $k \geq n$ and all $t \leq cnk$,
\[ \Pee( S < t ) \leq \left(\frac{16t}{nk}\right)^{k/2}. \]
\end{lemma}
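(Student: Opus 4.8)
The plan is to adapt the first moment computation from Lemma~\ref{lem:SngeqkLower} to the regime $k \geq n$, but being much more careful about the binomial coefficient $\binom{t}{k}$, which in that regime is no longer well approximated by $(et/k)^k$. As before, we reduce to an infinite i.i.d.\ sequence $X_1, X_2, \dots$ uniform on $\{1,\dots,n\}$, and we use the same key structural fact: there is a full column by time $t$ if and only if $X_1,\dots,X_t$ contains a non-increasing subsequence of length $k$. So again $\Pee(S < t)$ is at most the expected number of such subsequences.

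The counting splits into the same three factors: the choice of time steps $t_1 < \dots < t_k \leq t$, the choice of a non-increasing pattern of length $k$ over an alphabet of size $n$, and the probability $n^{-k}$ that a fixed set of positions realizes a fixed pattern. The middle factor is still $\binom{n+k-1}{k} \leq \left(\frac{e(n+k)}{k}\right)^k \leq (2e)^k$ when $k \geq n$, which we will want to bound by something like $4^k$ or fold into the constant. The first factor $\binom{t}{k}$ is the delicate one: when $t \leq cnk$ with $c$ small, $t/k \leq cn$ and the useful bound is $\binom{t}{k} \leq \left(\frac{et}{k}\right)^k$, but to land on the target $\left(\frac{16t}{nk}\right)^{k/2}$ we must not lose a full power of $k$. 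The resolution is to multiply the three factors and observe
\[
\Pee(S < t) \;\leq\; \binom{t}{k}\cdot\binom{n+k-1}{k}\cdot n^{-k} \;\leq\; \left(\frac{et}{k}\right)^k \left(\frac{e(n+k)}{k}\right)^k n^{-k} \;=\; \left(\frac{e^2 t (n+k)}{n k^2}\right)^k.
\]
Using $k \geq n$ gives $n + k \leq 2k$, so this is at most $\left(\frac{2e^2 t}{nk}\right)^k \leq \left(\frac{16 t}{nk}\right)^k \leq \left(\frac{16 t}{nk}\right)^{k/2}$, where the last step uses that $t \leq cnk$ with $c \leq 1/16$ makes the base at most $1$, so raising to the smaller power $k/2$ only increases it. (In fact this shows the stronger bound with exponent $k$; stating it with $k/2$ presumably leaves room to absorb slack needed elsewhere, e.g.\ in a later union bound, so I would just keep the weaker form as stated.) The constant $c$ in the lemma is then any value at most $1/16$.

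The main obstacle I anticipate is bookkeeping of the constants so that the final base is genuinely below the stated $16t/(nk)$ and the reduction to exponent $k/2$ is legitimate, i.e.\ making sure the base is $\leq 1$ over the whole range $t \leq cnk$; this is exactly what pins down $c$. A secondary subtlety is confirming that the "non-increasing subsequence" characterization is still exactly right here — it is, since it does not use $k \leq n$ at all, only that a column fills precisely when it has been hit $k$ times and that column $j$ being hit also fills all columns $\leq j$ up to the same height, so the times at which successive hits land on a fixed column form a non-increasing sequence of column indices and conversely any non-increasing subsequence of length $k$ forces its final column to have been filled. Everything else is the same first-moment estimate as in Lemma~\ref{lem:SngeqkLower}, just with the binomial bounds regrouped to exploit $k \geq n$.
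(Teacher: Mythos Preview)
Your proof is correct and in fact yields the stronger bound $(16t/(nk))^{k}$, which you then legitimately weaken to exponent $k/2$ using $t\le cnk$ with $c\le 1/16$. However, it takes a genuinely different route from the paper's own argument.

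The paper does \emph{not} repeat the first-moment count over pairs (time set, pattern). Instead it fixes only the pattern $\overline{a}=(a_1,\dots,a_n)$ with $\sum a_i=k$ and observes that the first occurrence of a subsequence with that pattern has the distribution of a sum $Y=Y_1+\dots+Y_k$ of i.i.d.\ $\geo(1/n)$ random variables; it then applies a Chernoff-type bound for sums of geometrics (their Lemma~\ref{lem:chernoffsumgeo}) to get $\Pee(Y\le t)\le (t/nk)^{k/2}$ for $t\le cnk$, and finally union-bounds over the at most $\binom{2k}{k}\le 4^k$ patterns. Your approach is more elementary---it avoids the negative-binomial/Chernoff machinery entirely and just reuses the Lemma~\ref{lem:SngeqkLower} estimate with the single new input $n+k\le 2k$---and, as you note, actually delivers a sharper exponent. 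The paper's approach, on the other hand, makes the waiting-time structure explicit and does not need the $\binom{t}{k}$ factor at all, which is why the $k/2$ exponent arises naturally there rather than as a deliberate weakening.
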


\begin{proof}
The argument is similar to that in the proof of Lemma~\ref{lem:SngeqkLower}, except that now we do not choose the times $t_1, \dots, t_k$.
Instead we observe that if there exists a non-increasing subsequence that corresponds
to $a_1, \dots, a_n$ (as above, with $a_1+\dots+a_n=k$, and so) then 
one can construct (possibly another) such sequence by first waiting
until the time $t_n$ when we have first seen $a_n$ occurrences of $n$, then waiting
until the first time $t_{n-1} > t_n$ such that we have seen $a_{n-1}$ occurrences of $n-1$ 
in the interval $(t_{n}, t_{n-1}]$, and so on.
The sequence we obtain via this procedure will be the first to occur among all sequences corresponding to the
same vector $\overline{a} = (a_1,\dots, a_n)$.
For a given such vector $\overline{a}$ we can therefore compute the probability 
that a subsequence corresponding to $\overline{a}$ occurs before time $t$ as follows:

\begin{equation}\label{eq:BiAap} 
\begin{array}{c}
\Pee( \text{there exists a subsequence corresponding to $\overline{a}$ in $X_1, \dots, X_t$} ) \\
 = \\
\Pee\left( Y \leq t \right),
\end{array}
 \end{equation}

\noindent
where $Y = Y_1+\dots+Y_k$ is a sum of i.i.d.~geometrically distributed random variables with common success probability
$p = 1/n$.
To see this, one can think of first ``listening to the $n$-th column'' until it has been selected $a_n$ times, then
``listening to the $(n-1)$-st column'', and so on. At each time step the desired column is selected with probability $1/n$, 
independently of the previous history of the process.

By Lemma~\ref{lem:chernoffsumgeo} below, we have 

\[ 
\Pee\left( Y \leq t \right) 
\leq
\exp\left[ - (kt / \Ee (Y)) \cdot H(\Ee (Y)/t) \right] 
\leq
\exp\left[ - (t/n) \cdot H(nk/t) \right]. 
\]

\noindent
(Here $H(x) := x\ln x - x + 1$, and we have used $\Ee (Y) = nk$.)
Now notice that $H(x) \geq \frac12 \cdot x\ln x$ for all $x$ sufficiently large.
That is, for all $x\geq x_0$ with $x_0$ a suitable constant---for instance, taking $x_0 := e^2$ is sufficient.
Thus, we can choose $c$ such that 
$(t/n) \cdot H(nk/t) \geq (k/2) \ln( nk/t )$ for all $t \leq c nk$.

The number of sequences $\overline{a}$ is at most

\[ {n+k-1\choose k} \leq {2k \choose k} \leq 4^k. \]

\noindent
Putting these bounds together we find that, for $t \leq cnk$: 

\[ \Pee( S < t ) \leq 4^k \left(\frac{t}{nk}\right)^{k/2} = \left(\frac{16t}{nk}\right)^{k/2}, 
\]

\noindent 
as required.
\end{proof}

Combining Lemmas~\ref{lem:SngeqkLower} and~\ref{lem:Skgeqn} we see that:

\begin{corollary}\label{cor:Skgegngedeeld100}
There exists a universal constant $c>0$ such that 
$\Pee( S < c nk ) \to 0$ as $n \to \infty$ for any sequence $k=k(n)$ satisfying $k \geq n/100$.
\end{corollary}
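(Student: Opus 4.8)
The plan is to assemble the statement from the two tail bounds just established, Lemma~\ref{lem:SngeqkLower} and Lemma~\ref{lem:Skgeqn}, by splitting the range $k \geq n/100$ at the point $k = n$. First I would record two easy preliminary remarks: in this range $k \to \infty$ whenever $n \to \infty$ (since $k \geq n/100$), and $\Pee(S < cnk)$ is non-decreasing in $c$ for fixed $n,k$. Consequently it suffices to find, separately for the subcase $n/100 \leq k \leq n$ and for the subcase $k \geq n$, a positive constant for which the probability tends to $0$, and then let $c$ be the smaller of the two.

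For the subcase $k \geq n$ I would apply Lemma~\ref{lem:Skgeqn} with $t = cnk$, choosing $c$ no larger than the constant provided by that lemma and also no larger than $1/32$. The lemma then yields $\Pee(S < cnk) \leq (16c)^{k/2} \leq (1/2)^{k/2}$, which goes to $0$ because $k/2 \geq n/2 \to \infty$.

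For the subcase $n/100 \leq k \leq n$ I would use Lemma~\ref{lem:SngeqkLower}, again with $t = cnk$. The key point is that $k \geq n/100$ forces $n/k \leq 100$, so the base of that bound is $2e^2 cnk / k^2 = 2e^2 c (n/k) \leq 200 e^2 c$, which is bounded; taking $c \leq 1/(400 e^2)$ makes it at most $1/2$, so $\Pee(S < cnk) \leq (1/2)^k \to 0$. Setting $c$ equal to the minimum of the constants from the two subcases then gives, for every sequence $k = k(n)$ with $k \geq n/100$ (even one oscillating between the two regimes), the uniform bound $\Pee(S < cnk) \leq (1/2)^{k/2} \leq (1/2)^{n/200} \to 0$, which is exactly the claim.

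The whole argument is routine bookkeeping with the two lemmas; the one thing worth flagging---and the reason this corollary is not immediate from Lemma~\ref{lem:Skgeqn} alone---is that it also needs the regime $n/100 \leq k \leq n$, and Lemma~\ref{lem:SngeqkLower} is strong enough there precisely because the hypothesis $k \geq n/100$ keeps $n/k$ bounded, so at $t = \Theta(nk)$ the base $2e^2 t/k^2$ stays bounded and can be forced below $1$. Beyond choosing the constants consistently, I do not expect any real obstacle.
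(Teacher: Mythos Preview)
Your proposal is correct and follows exactly the approach the paper intends: the paper states the corollary as an immediate consequence of ``Combining Lemmas~\ref{lem:SngeqkLower} and~\ref{lem:Skgeqn}'' without spelling out the details, and your case split at $k=n$ together with the choice of constants is precisely how that combination is meant to go.
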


\begin{lemma}\label{lem:SngeqkUpper}
There exists an absolute constant $C > 0$ such that for all $n,k$ and all $t \geq Ck^2$ we have
 
\[ \Pee( S > t ) \leq e^{-t }. \]

\end{lemma}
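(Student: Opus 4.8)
The plan is to rewrite $\{S>t\}$ as the non-occurrence of a long non-increasing subsequence and then bound its probability by a first-moment argument applied to \emph{decompositions into increasing subsequences} rather than directly to non-increasing subsequences. Recall from the discussion preceding Lemma~\ref{lem:SngeqkLower} that, with $X_1,X_2,\dots$ the i.i.d.\ uniform-on-$\{1,\dots,n\}$ sequence of selected columns, $\{S>t\}$ is exactly the event that $X_1,\dots,X_t$ contains no non-increasing subsequence of length $k$. Note first that the direct first-moment estimate used for the lower-tail bound (Lemma~\ref{lem:SngeqkLower}) gives a quantity of order $(ct/k^2)^k$, which \emph{grows} with $t$ and is useless here; a genuinely different idea is needed.

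That idea is Mirsky's theorem (the dual of Dilworth's theorem), applied to the partial order on $\{1,\dots,t\}$ in which $i\preceq j$ iff $i\le j$ and $X_i\ge X_j$: its chains are precisely the non-increasing subsequences and its antichains the strictly increasing subsequences, so the absence of a non-increasing subsequence of length $k$ is equivalent to $\{1,\dots,t\}$ admitting a partition into at most $k-1$ strictly increasing subsequences. I would then union-bound over the $(k-1)^{t}$ functions $f\colon\{1,\dots,t\}\to\{1,\dots,k-1\}$ that could encode such a partition: for a fixed $f$ with class sizes $s_1,\dots,s_{k-1}$, the events ``class $i$ is strictly increasing'' depend on disjoint, hence independent, blocks of the i.i.d.\ sequence, and for $s$ i.i.d.\ uniform-on-$\{1,\dots,n\}$ values the probability of being strictly increasing is $\binom{n}{s}n^{-s}\le 1/s!$. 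Grouping the $f$'s by their composition of class sizes, this yields
\[
 \Pee(S>t)\ \le\ \sum_{f}\ \prod_{i=1}^{k-1}\frac{1}{s_i!}\ =\ \frac{1}{t!}\sum_{s_1+\dots+s_{k-1}=t}\binom{t}{s_1,\dots,s_{k-1}}^{2}\ \le\ \frac{(k-1)^{2t}}{t!},
\]
where the last step bounds the sum of squared multinomial coefficients by their maximum times their sum, each of which is at most $(k-1)^{t}$.

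To finish, I would use $t!\ge (t/e)^{t}$ to obtain $\Pee(S>t)\le\bigl(e(k-1)^2/t\bigr)^{t}\le\bigl(ek^2/t\bigr)^{t}$, which for integer $t\ge e^{2}k^{2}$ is at most $(1/e)^{t}=e^{-t}$; rounding real $t$ down and enlarging the constant slightly then gives the lemma with, say, $C=e^{3}+1$ (the degenerate cases $k=1$ and $n=1$, where $S$ is deterministic, are anyway subsumed by the general bound). The main obstacle is really the observation in the first paragraph: one has to realize that the naive first moment over non-increasing subsequences cannot work for large $t$ and that passing to the complementary (Mirsky) description is what makes the first moment decay in $t$; the only calculational care needed afterwards is extracting the clean power $(k-1)^{2t}$ — rather than a larger power of $k-1$ — from $\sum_{(s_i)}\binom{t}{s_1,\dots,s_{k-1}}^{2}$.
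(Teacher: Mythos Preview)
Your proof is correct and takes a genuinely different route from the paper's.

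The paper's argument is probabilistic: it observes (after disposing of the case $k\ge n/1000$, where $S\le nk\le 1000k^2$ deterministically) that one can wait for the last $\lfloor n/k\rfloor$ columns to be hit, then the preceding block of $\lfloor n/k\rfloor$ columns, and so on; after $k$ such ``successes'' a full column is guaranteed. This dominates $S$ by a sum of $k$ i.i.d.\ $\geo(\Theta(1/k))$ random variables with mean $\Theta(k^2)$, and the Chernoff bound for sums of geometrics (Lemma~\ref{lem:chernoffsumgeo}) yields the tail $e^{-t}$ for $t\ge Ck^2$.

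Your argument is combinatorial: Mirsky's theorem converts the absence of a long non-increasing subsequence into the \emph{existence} of a cover by $k-1$ strictly increasing subsequences, and the first moment over all labelings gives $\Pee(S>t)\le (k-1)^{2t}/t!\le (ek^2/t)^t$. This is clean and entirely self-contained: it needs no case split on $k$ versus $n$, no coupling with geometric variables, and no appeal to Lemma~\ref{lem:chernoffsumgeo}. The paper's approach, on the other hand, stays within the ``waiting time'' intuition that drives the rest of Section~\ref{appx:omitted-cd} and reuses machinery already in place. Your bound $(ek^2/t)^t$ is in fact the classical first-moment bound for the longest increasing subsequence of a random word, obtained here via the Mirsky dual rather than RSK; it is arguably the more elegant route to this particular lemma.
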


\begin{proof} (sketch)
We can assume without loss of generality that $k < n/1000$
(otherwise, if $k \geq n/1000$ then $S \leq 1000 k^2$ is deterministically true, 
since in every time step a square gets filled and there are 
$nk \leq 1000 k^2$ squares in total).
First we wait until the first time one of the last $\lfloor n/k\rfloor$ columns are hit, then we wait
until the next time after that the next $\lfloor n/k\rfloor$ columns are hit, and so on.
The probability that at time time $t$, we have not had $k$ successes in the 
above process equals the probability that $X > t$ where $X$ is the sum of $k$ independent $\geo(p)$ random variables with 
$p = \lfloor n/k\rfloor / n = \Theta( 1/k )$. We can thus apply Lemma~\ref{lem:chernoffsumgeo}.
Note that $\Ee (X) = \Theta( k^2 )$ and that $H(x) = x\ln x - x + 1 \geq x$ for all $x \geq x_0$ (where $x_0$ is of course
a suitable constant, $x_0 := e^2$ will do).
Thus, we can indeed choose $C > 0$ such that $t / \Ee (X) \geq x_0$ for all $t \geq C k^2$, giving 

\[ \Pee( S > t ) \leq \Pee( X > t ) \leq e^{- \Ee (X) \cdot H(t/\Ee (X)) } \leq e^{-t}, \]

\noindent 
for all $k,n,t$ satisfying the conditions of the lemma.
\end{proof}

\subsection{Proof of parts~\ref{itm:kgeqn} and~\ref{itm:nconst} of Theorem~\ref{thm:main}}

Note that in each step of the process at least one square of the grid gets filled.
Thus $T \leq nk$ (with probability one).
Since $T \geq S$ the lower bound in~\ref{itm:kgeqn} follows immediately from Corollary~\ref{cor:Skgegngedeeld100} 
and the lower bound in~\ref{itm:nconst} from Lemma~\ref{lem:SngeqkLower}.

\subsection{Proof of the upper bound of part~\ref{itm:ngeqk} of Theorem~\ref{thm:main}}

We will consider a modification of the process that will certainly take at least as long as the original version, and is easier to analyze.
We distinguish different phases of the process.
In the first phase, starting from the empty $k\times n$ message grid, we throw away columns until at least one column with index $\geq n/2$ is full.
While doing this, we ignore the height constraint on columns with index $\leq n/2$.
That is, columns with index $\leq n/2$ are allowed to have height $\geq k$ and we will still select columns uniformly from all $n$ columns.
As soon as a column with index $> n/2$ obtains height $k$, the first phase ends.
We now throw away the left half of the grid, and empty the right half.
So we now have an empty $(k/2)\times n$ grid. We now repeat the modified process, i.e.,~randomly select columns and
ignore the height constraints of the left half until we get some full column in the right half
(which corresponds to the rightmost quarter of the original grid) and repeat.
We keep going on until we start a phase starts where the number of remaining columns is $\leq 1000 k$.
In the last phase we simply carry out the process as usual, waiting until all the last column is filled.
(Observe that there will be $\Theta( \log(n/k) )$ phases in total.)

\vspace{12pt}

\begin{proofof}{the upper bound}
Let $C$ as provided by Lemma~\ref{lem:SngeqkUpper}. 
Notice that if we restrict attention to the right half of the columns that have not yet been filled at the start of the phase
(i.e.,~in phase $i$ these are columns with indices $n 2^{-(i+1)} + 1, \dots, n$) then the number of times this set of columns is hit 
in a period of $t$ time steps since the start of the phase behaves like a $\Bi(t,\frac12)$ random variable (lets call it $X$) and
the probability that the phase lasts longer than $t$ is precisely the probability that 
$S_{k,n 2^{-(i+1)}} > X$.
We see that for all $i \leq I$
(where $I := \lceil\log(n/1000 k)\rceil$ is the smallest $i$ such that $n 2^{-i} \leq 1000 k$)
and all $t \geq 100 C k^2$:

\begin{equation}\label{eq:poepje} 
\begin{array}{rcl} 
\Pee(\text{phase $i$ takes longer that $t$} )
& \leq & 
\Pee( \Bi( t, \frac12 ) < t/100 ) + \Pee( S_{k,n 2^{-(i+1)}} > t/100 ) \\
& \leq & e^{-(t/2)\cdot H(2/100)} + e^{-t/100} \\
& \leq & 2 e^{-t/100},
\end{array} 
\end{equation}

\noindent
where we have applied the vanilla Chernoff bound,  Lemma~\ref{lem:SngeqkUpper}, 
and that $\frac12 H(2/100) > 1/100$.

We distinguish two cases. First, we suppose that $k \geq \sqrt{n}$.
In that case~\eqref{eq:poepje} gives that  

\[ 
\Pee(\text{phase $i$ takes longer that $100 C k^2$} ) = \exp[ - \Omega(n) ]. 
\]

\noindent
Hence, we see that if we write $K := 100 C k^2 I + 1000 k^2$ ($=\Theta(k^2\log(n/k)$), then
assuming $k\geq\sqrt{n}$ we have

\[ \Pee( \text{the process takes longer than $K$ to complete} )
 \leq \log n \cdot e^{-\Omega(n)} = o(1).
\]

\noindent
It remains to consider the case when $k \leq \sqrt{n}$. 
Note that in this case we have $I \to\infty$.
Let $L_i$ denote the duration of phase $i$.
By Lemma~\ref{lem:SngeqkLower} we have that

\begin{equation}\label{eq:eend1}
\Ee (L_i) \geq \frac{k^2}{4e^2} \cdot \Pee( L_i \geq \frac{k^2}{4e^2} ) 
 \geq \frac{k^2}{4e^2} \cdot \frac12 = \Omega( k^2 ).
\end{equation}

On the other hand, applying Lemma~\ref{lem:SngeqkUpper} and a standard formula for the expectation of non-negative random variables 
we have that 

\begin{equation}\label{eq:eend2} \begin{array}{rcl}
 \Ee (L_i) & = & 
 \int_0^\infty \Pee( L_i > t ) {\dd}t \\
 & \leq & 
 100 C k^2 + 2 \int_{100 C k^2}^\infty e^{-t/100}{\dd t} \\
 & \leq & 
 100 C k^2 + 2 \int_0^\infty e^{-t/100}{\dd t} \\
 & = & 
 O( k^2 ).
\end{array} \end{equation}

\noindent
(Here it is important to note that the constants hidden inside the $\Omega, O$ notation in~\eqref{eq:eend1} and~\eqref{eq:eend2}
are universal. That is, the same constants work for all $n,k$ in the range considered.)
Similarly to~\eqref{eq:eend2} we also have that:

\[ \begin{array}{rcl}
 \Ee (L_i^2) & = & \int_0^\infty \Pee( L_i^2 > t ) {\dd}t \\
& = &  
 \int_0^\infty \Pee( L_i > \sqrt{t} ) {\dd}t \\
 & \leq & 
 10^4 C^2 k^4 + 2 \int_{10^4 C^2 k^4} e^{-\sqrt{t}/10}{\dd t} \\
 & = & 
 O( k^2 ),
\end{array} \]

\noindent
where the multiplicative constant hidden inside the $O(k^4)$ is universal.

Writing $L := \sum_{i=1}^I L_i$, we have that $\Ee (L) = \Theta( I k^2 )$ and $\Var L = O( I k^4 )$.
Thus, we can apply Chebyschev's inequality to see that

\[ \Pee( L \leq \Ee (L) / 2 ) 
\leq \frac{4 \Var L}{(\Ee (L))^2} = O( 1/I ) = o(1), \]

\noindent
Since $I = \Theta( \log(n/k) )$ tends to infinity under the assumption that $k\leq \sqrt{n}$.
As $L$ is the time until the process completes, this completes the proof of the upper bound.
\end{proofof}

\subsection{Proof of the lower bound of part~\ref{itm:ngeqk} of Theorem~\ref{thm:main}}



Again we will modify the process, this time so as to obtain a process 
that will terminate quicker (and is easier to analyze).
We will choose constants $0 < \alpha, \beta, \gamma, \delta, \eps < 1$ satisfying some
additional demands (which include $\alpha \ll \beta, \beta < \gamma^2$) that will be specified as we go along with the proof.

\begin{itemize}
 \item We divide time into ``rounds'' (periods) of length $t = \delta k^2$. 
 \item We divide the columns into groups. Group $i$ consists of $n_i := n \alpha^{i}$ consecutive columns.
 (Group 1 contains columns $1, \dots, n_1$, group two
 columns $n_1+ 1, \dots n_1+n_2$ and so on.) 
 We shall be choosing $\alpha$ small enough so that in fact $\sum n_i < n$. The fact that the sum is strictly less than $n$ 
 will not be a problem for our argument.\footnote{We are ignoring rounding for now. This can be settled by noting we can assume $n, k$ and $\alpha, \beta, \gamma, \delta$ are all powers
 of two (for $n,k$ this can be shown using the monotonicity of $T$ in $n,k$ -- worsening the constant $c_L$ a bit.)}
 \item At the start of each round, the columns of groups $1, \dots, i$ will already have been ``filled'', while
 none of the columns in groups $i+1, \dots$ will be full. (The first round is round 0.)
 At the end of the period we will fill one or more additional groups of columns completely, according to the following rules:
 \begin{itemize}
 \item If at the start of the current round group $i$ is the last filled group as above, and during the round in some group $i+j$ a 
 non-increasing sequence of length $k \gamma^{j}$ gets created {\bf only taking into account what happened in the current period to the columns
 in that group}, 
 then we fill all columns up to and including that entire group, and move on to the next period. 
 (We do not drop any more columns during the rest of the round. That is, as soon as a non-increasing sequence of length $k \gamma^{j}$ gets created
 in group $i+j$, the round ends.)
 \item Otherwise, if by the end of a round no group obtained a non-increasing sequence of the desired length
 (among the selections in that period), then we
 fill up group $i$, and move on to the next period.
 \end{itemize}
 \item The process terminates once a group with index $\geq I := \eps \log(n/k)$ has been filled.
\end{itemize}

\paragraph{Why this works.} 
Note that in a round when groups $1, \dots, i$ have been filled,
but group $i+1$ has not,   the longest non-increasing sequence
that could possibly exist  in group $i+j$ (now taking into account all periods) 
is no longer than 

\[ \begin{array}{rcl}
   \sum_{0 \leq k \leq i} k \gamma^{i+j-k} 
& = & \sum_{\ell=j}^{i+j-1} k \gamma^\ell  \\
& \leq & k \gamma^{j} / (1-\gamma).
   \end{array}
\]

\noindent
Thus, even if we are very lucky and all the longest non-increasing sequences of the groups $i+1, i+2, \dots $ can be combined into
one non-increasing sequence, then this sequence would still have length at most 

\[ \sum_{j=1}^\infty k \gamma^{j} / (1-\gamma)
 = \frac{\gamma k}{(1-\gamma)^2}.
\]

This also implies that even if in the current round
some group achieves the goal we have set for it (namely a sequence of length $k \gamma^{j}$ 
for group $i+j$), then the longest sequence will still be no longer than $k \gamma (1 + 1/(1-\gamma)^2) 
< k$ (provided we chose the constants $\gamma$ sufficiently small -- which we can assume without loss of generality).
In particular, we never fill columns completely, except at the end of a round when we fill one or more entire groups of columns.

Next, we need a bound on the probability that group $i+j$ obtains a 
 non-increasing sequence of length $k \gamma^{j}$ in the period when groups $1, \dots, i$ 
have been filled but group $i+1$ has not.

\begin{lemma}\label{lem:10}
Provided $\alpha, \beta, \gamma, \delta, \eps$ are chosen appropriately, and 
$i,j$ are such that $i+j \leq I$ (the number of groups),
the following holds. 
Consider the situation where we start a round in which groups $1, \dots, i$ have been filled and 
groups $i+1, i+2, \dots$ have not.
Let $E_{i,j}$ be the event that group $i+j$ achieves a sequence of length $k \gamma^j$ during this period.
Then 
\[ \Pee( E_{i,j} ) \leq 10^{-j}. \]
\noindent

Moreover, if $k \geq \sqrt{n}$, then we even have 
\[ 
 \Pee( E_{i,j} ) = \exp[-\Omega(n^{1/4})],
\]
\noindent
for all $1\leq i \leq I-1$ and $1 \leq j\leq I-i$.
\end{lemma}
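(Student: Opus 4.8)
The goal is to bound the probability that, during a single round of length $t = \delta k^2$, the columns of group $i+j$ (there are $n_{i+j} = n\alpha^{i+j}$ of them) acquire a non-increasing subsequence of length $\ell := k\gamma^j$, counting only the selections that land in that group during this period. The plan is to apply the first-moment method exactly as in the proof of Lemma~\ref{lem:SngeqkLower}, but now being careful about two things: the \emph{random} number of hits the group receives during the round, and the fact that $\ell$ is much smaller than $k$ while $n_{i+j}$ is much smaller than $n$, so the bound must be quantitatively strong enough to beat the geometric factor $10^{-j}$ (and, in the dense regime, to be exponentially small in a power of $n$).

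First I would condition on the number $X$ of time steps in the round that select a column of group $i+j$; since each of the $t$ steps hits this group with probability $n_{i+j}/N$ (where $N=\sum_{\ell'} n_{\ell'} < n$ is the total number of columns still in play, or just $n$ — the exact denominator only helps), $X$ is stochastically dominated by $\Bi(t, n_{i+j}/n)$, with mean at most $\delta k^2 \alpha^{i+j}$. Conditioned on $X = m$, the probability that the $m$ selected values (i.i.d.\ uniform on the $n_{i+j}$ columns of the group) contain a non-increasing subsequence of length $\ell$ is, by the counting argument from Lemma~\ref{lem:SngeqkLower}, at most $\binom{m}{\ell}\binom{\ell + n_{i+j}-1}{\ell} n_{i+j}^{-\ell} \le (em/\ell)^\ell (2e n_{i+j}/\ell)^\ell n_{i+j}^{-\ell} = (2e^2 m/\ell^2)^\ell$. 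So the unconditional bound is $\Ee\left[(2e^2 X/\ell^2)^\ell\right]$, and since $X \le t = \delta k^2$ deterministically, this is at most $(2e^2\delta k^2/\ell^2)^\ell = (2e^2 \delta \gamma^{-2j})^{\ell}$. Now $\ell = k\gamma^j$, so the bound reads $(2e^2\delta\gamma^{-2j})^{k\gamma^j}$. Choosing $\delta$ small enough that $2e^2\delta < 1$, and using that $\gamma < 1$ so $\gamma^{-2j}$ grows only geometrically in $j$ while the exponent $k\gamma^j$ decays geometrically, one sees that for $k$ large the whole expression is far below $10^{-j}$: indeed $\log$ of the bound is $k\gamma^j\big(\log(2e^2\delta) + 2j\log(1/\gamma)\big)$, which for fixed $j$ is $-\Omega(k\gamma^j)$ and hence, summed against the target $-j\log 10$, is comfortably negative once $\gamma$ (controlling $\gamma^j$ vs.\ $\gamma^{-2j}$) is chosen so that $\log(2e^2\delta) + 2j\log(1/\gamma) < 0$ cannot be guaranteed for all $j$ — so instead I would not discard the $X \le t$ slack but rather keep the smaller of the two estimates, using $\Ee[(2e^2X/\ell^2)^\ell]$ with $\Ee X \le \delta k^2\alpha^{i+j} \le \delta k^2 \alpha^j$ when this is the binding constraint. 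The point is that for \emph{small} $j$ the factor $\gamma^{-2j}$ is $O(1)$ and the deterministic bound already gives $e^{-\Omega(k)} \ll 10^{-j}$; for \emph{large} $j$ (i.e.\ $\gamma^{-2j}$ large) we instead need $\ell = k\gamma^j$ not too small, which is where the constraint $i+j \le I = \eps\log(n/k)$ enters: it caps $\gamma^{-j}$ by a power of $n/k$, and choosing $\eps$ small relative to $\log(1/\gamma)$ keeps $2e^2\delta\gamma^{-2j} \le 1/2$, so the bound is $\le 2^{-k\gamma^j} \le 2^{-k\gamma^{I}} \le 2^{-k(k/n)^{O(\eps)}}$, which one checks is $\le 10^{-j}$ for all admissible $j$ after adjusting constants. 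For the dense case $k \ge \sqrt n$: then $k\gamma^j \ge \sqrt n\,\gamma^I = \sqrt n (k/n)^{O(\eps)} \ge \sqrt n \cdot n^{-O(\eps)} = n^{1/2 - O(\eps)} \ge n^{1/4}$ for $\eps$ small, so the exponent in $2^{-k\gamma^j}$ is $\ge n^{1/4}$, giving the claimed $\exp[-\Omega(n^{1/4})]$.

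\textbf{Main obstacle.} The delicate part is juggling the five constants $\alpha,\beta,\gamma,\delta,\eps$ so that a single choice makes the bound $10^{-j}$ hold \emph{simultaneously} for all valid $(i,j)$ with $i+j\le I$, while those same constants also satisfy the other demands imposed elsewhere in the lower-bound argument ($\alpha \ll \beta$, $\beta < \gamma^2$, etc.). Concretely, one must verify that $2e^2\delta\gamma^{-2j} < 1$ throughout the relevant range of $j$ — which forces $j$ (hence $\gamma^{-2j}$) to be bounded, and that boundedness is exactly what $i+j \le I = \eps\log(n/k)$ plus a suitable relation between $\eps$ and $\log(1/\gamma)$ delivers. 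I would fix $\gamma$ first (small), then $\eps$ small enough that $\gamma^{-2I}\cdot(k/n)$-type quantities stay controlled, then $\delta$ small enough to kill the $2e^2$ and the residual $\gamma^{-2j}$, and finally $\alpha$ small relative to everything (it only ever helps, shrinking $\Ee X$). The rest — the binomial and "count non-increasing sequences" estimates — is a verbatim repeat of Lemma~\ref{lem:SngeqkLower} and needs no new idea.
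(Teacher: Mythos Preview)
Your deterministic bound $(2e^2\delta\gamma^{-2j})^{k\gamma^j}$ breaks down for large $j$, and the attempted fix is incorrect. You claim that the cap $j\le I=\eps\log(n/k)$, together with $\eps$ small, ``keeps $2e^2\delta\gamma^{-2j}\le 1/2$''. But $\gamma^{-2j}$ can be as large as $\gamma^{-2I}=(n/k)^{2\eps\ln(1/\gamma)}$, which tends to infinity with $n/k$ no matter how small the fixed constant $\eps>0$ is. Hence for any fixed $\delta>0$ the base $2e^2\delta\gamma^{-2j}$ eventually exceeds $1$ and your bound becomes vacuous. The subsequent conclusion $\le 2^{-k\gamma^j}$ (and with it the $\exp[-\Omega(n^{1/4})]$ claim in the dense regime) therefore has no support. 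Note also that $\alpha$ never actually enters your computation: you mention $\Ee X\le\delta k^2\alpha^j$ but then abandon it and revert to the failed deterministic bound.

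The missing idea is precisely the role of $\beta$: one must split on whether the number $X$ of hits to group $i+j$ in the round exceeds a threshold $k^2\beta^j$, with $\beta$ chosen so that $\alpha<\beta<\gamma^2$. If $X<k^2\beta^j$, the counting argument (your Lemma~\ref{lem:SngeqkLower} estimate) gives $(2e^2\beta^j/\gamma^{2j})^{k\gamma^j}\le 2e^2(\beta/\gamma^2)^j$, geometrically small in $j$ since $\beta<\gamma^2$. If $X\ge k^2\beta^j$, a Chernoff/Markov bound on $\Bi(\delta k^2,\alpha^j)$ against the much larger target $k^2\beta^j$ yields $(\alpha/\beta)^{\Theta(j)}$, again geometrically small since $\alpha<\beta$. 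Summing the two gives $\le 10^{-j}$ once the constants are tuned. This thresholding is not optional: it is exactly what converts your correct instinct (``$\Ee X$ is much smaller than $t$'') into a bound that is uniform over all $j\le I$, and it is why the paper demands $\alpha\ll\beta$ and $\beta<\gamma^2$.
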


\begin{proof}
Note that if $E_{i,j}$ holds,  then one of the following two possibilities must have occurred.
Either {\bf(a)} group $j$ has received $\geq k^2 \beta^j$ columns (in the current period), or
{\bf(b)} it has received less than $k^2 \beta^j$ columns,  but nonetheless a sequence of length
$k \gamma^j$ was created.
By Lemma~\ref{lem:SngeqkLower} (applied with $k' = \max(k \gamma^j,1), t' := k^2 \beta^j, n' := 
n \alpha^{j}$) we see that

\begin{equation}\label{eq:b} 
\begin{array}{rcl}
\Pee( \text{\bf(b)} ) 
& \leq & \left( \frac{2 e^2 t'}{k'^2}\right)^{k'} \\
& = & \left( \frac{2 e^2\beta^j}{\gamma^{2j}} \right)^{\max(k \gamma^j,1)} \\
& \leq & 2 e^2 (\beta/\gamma^2)^j.
\end{array}
\end{equation}

\noindent
To deal with the probability that {\bf(a)} holds we distinguish two further possibilities:
{\bf(a-1)} we have $k^2 \beta^j \geq 1$, or {\bf(a-2)} we have 
$k^2 \beta^j < 1$.
We find that under assumption {\bf (a-1)} we can apply the vanilla Chernoff bound to 
obtain

\begin{equation}\label{eq:a1} 
\begin{array}{rcl} 
\Pee( \text{\bf(a)} )
& = & \Pee( \Bi( \delta k^2, \alpha^j ) \geq k^2 \beta^j ) \\
& \leq &\exp[ - \delta k^2 \alpha^j H( (\beta/\alpha)^j ) ] \\
& \leq & \exp[ - \delta k^2 \beta^j \ln( (\beta/\alpha)^j ) / 2 ] \\
& \leq & (\alpha/\beta)^{j/2}.
\end{array} 
\end{equation}

\noindent 
where in the third line we have used that $H(x) \geq \frac12 x\ln x$ for $x \geq x_0$ with $x_0$ sufficiently large, and
that we can choose $\alpha, \beta$ such that $\beta/\alpha \geq x_0$.
Under the assumption {\bf(a-2)} we see that 

\begin{equation}\label{eq:a2}
\begin{array}{rcl} 
\Pee( \text{\bf(a)} )
& = & \Pee( \Bi( \delta k^2, \alpha^j ) \geq 1 ) \\
& \leq & \delta k^2 \alpha^j \\
& = & (\delta k^2 \beta^j ) \cdot (\alpha/\beta)^j \\
& \leq & (\alpha/\beta)^j,
\end{array} 
\end{equation}

\noindent 
using assumption {\bf(a-2)} for the third line.

Combining the above bounds we see that, provided $\alpha, \beta, \gamma$ were chosen appropriately, we have

\[ \Pee( E_{i,j} ) \leq 2 e^2 (\beta/\gamma^2)^j + (\alpha/\beta)^{j/2} < 10^{-j}, \]

\noindent
the last inequality again holding if we chose---as we may assume we have---$\alpha, \beta, \gamma$
appropriately.

We now consider the situation where $k \geq \sqrt{n}$.
Note that in this case, having chosen the constants appropriately, 

\[ k\gamma^j \geq \sqrt{n\gamma^{2I}} 
\geq \sqrt{n \gamma^{4\eps\log(n)}} = \Omega( n^{1/4} ).\]

Thus, the second line of~\eqref{eq:b} gives in fact that
\[ 
\Pee( \text{{\bf(b)}} ) = \exp\left[ - \Omega\left( n^{1/4} \right) \right]  
\]
Similarly, having chosen the constants correctly, we have 

\[ k^2 \beta^j > n \beta^I \geq n^{1 +\eps\log(\beta)} = \Omega( \sqrt{n} ). \]

\noindent
Thus, the case {\bf(a-2)} does not apply, and the third line of~\eqref{eq:a1} gives that 

\[ \Pee( \text{{\bf(a)}} ) = \exp[ - \Omega(\sqrt{n} ) ]. \]

\noindent
This shows that indeed $\Pee(E_{i,j}) = \exp[ - \Omega(n^{1/4}) ]$ in the case when 
$k \geq \sqrt{n}$.
\end{proof}

This last lemma is all we need to prove part~\ref{itm:ngeqk} of Theorem~\ref{thm:main}, for the case when $k \geq \sqrt{n}$.

\vspace{12pt}

\begin{proofof}{part~\ref{itm:ngeqk} of Theorem~\ref{thm:main}, assuming $k \geq \sqrt{n}$}
By the previous lemma, the probabilty that during any round of the process some group obtains a 
 non-increasing sequence of length $k \gamma^{j}$ is at most

\[ 
\Pee(\text{desired sequence is obtained} ) 
\leq 
\sum_{i+j \leq I} \Pee( E_{i,j} ) 
= 
O\left( (\log n)^2 \right) \cdot \exp[-\Omega(n^{1/4})] 
= o(1). 
\]

\noindent
This means that, with probability $1-o(1)$, there will be a total of $I = \Theta\left( \log(n/k) \right)$ rounds and in each 
round we spend the full amount of time $\delta k^2$. 
So indeed, with probability $1-o(1)$, the process does not complete before $\delta k^2 I = \Omega( k^2 \log(n/k) )$.
\end{proofof}

\vspace{12pt}

In the rest of the section we will thus be assuming that $k \leq \sqrt{n}$. Note that this in particular
implies that $I = \Theta( \log(n/k) ) \to \infty$.
We will let $J_i$ be the number of groups that get filled in in round $i$.

%
%
%
%
%
%
%

\vspace{12pt}

\begin{proofof}{part~\ref{itm:ngeqk} of Theorem~\ref{thm:main}, assuming that $k \leq \sqrt{n}$}
Appealing to Lemma~\ref{lem:10}, we see that sequence $J_1, J_2, \dots$ is stochastically dominated by an 
i.i.d.~sequence $J_1', J_2', \dots$ where $\Pee( J_1' = j ) = 10^{-j}$ for $j\geq 2$
and $\Pee( J_1' = 1 ) = \frac{89}{90}$.
It is easily checked that $\Ee (J_1')^2 < \infty$.
Let us set $I' := \eps I$ for some suitable chosen $0 < \eps < 1$.
By the law of large number (which holds for i.i.d.~sequences with a finite second moment) we have that 

\[ \Pee( J_1'+\dots+J_{I'}'> 2 I' \Ee (J_1') ) = o(1). \]

\noindent
Thus, choosing $\eps$ appropriately (namely, so that $2\eps \Ee (J_1')< 1$), 
we have 

\[ \Pee( J_1 + \dots + J_{I'} > I ) 
 \leq \Pee( J_1'+\dots+J_{I'}' > I ) = o(1).
\]

\noindent
In other words, with probability $1-o(1)$, there are at least $I'$ rounds.

Now let $Y_i$ be a $\{0,1\}$-valued variable that equals one if in round $i$ no group obtained a non-increasing sequence of length $k \gamma^{j}$.
Observe that 

\[ \Pee( Y_i = 1 ) \geq 1 - \sum_{j=1}^{I-i} \Pee( E_{i,j} ) 
 \geq 1 - \sum_{j=1}^\infty 10^{-j}
 = \frac{8}{9}.
\]

\noindent
Observe that $Y_1, Y_2, \dots$ stochastically dominates an i.i.d~sequence $Y_1', Y_2', \dots$ 
of Bernoulli($\frac89$) random variables.
Again by the law of large numbers, we have that 

\[ \Pee( Y_1 +\dots+ Y_{I'} > \frac49 I' ) \geq 
 \Pee( Y_1'+\dots+Y_{I'}' > \frac49 I' ) = 1 - o(1).
\]

\noindent
So this shows that with probability $1-o(1)$ there were at least $\frac49 I' = \Omega( log(n/k) )$ rounds in which no group obtains a non-increasing sequence of length $k \gamma^{j}$, which means that the round lasted the full amount of $\delta k^2$ time.
Hence, with probability $1-o(1)$, at time $\delta k^2 I' = \Omega( k^2 \log( n/k ) )$ the process has not yet completed.
\end{proofof}


\subsection{A version of the Chernoff bound for sums of independent binomials and sums of independent geometrics}

We first briefly recall the version of the Chernoff bound for binomials and Poisson-distributed random variables that we use above.
A proof can be found, for instance, in~\cite{penroseboek}.

\begin{lemma}[Chernoff bound]\label{lem:chernoffvanilla}
Let $X$ be 
a binomial 
random variable.
\begin{enumerate}
 \item For all $t \geq \Ee (X)$ we have $\Pee( X \geq t ) \leq e^{- \Ee (X) \cdot H( t/\Ee (x) ) }$, 
 \item For all $t \leq \Ee (X)$ we have $\Pee( X \leq t ) \leq e^{- \Ee (X) \cdot H( t/\Ee (x) ) }$,
\end{enumerate}
\noindent
where $H(x) := x\ln x - x + 1$.
\end{lemma}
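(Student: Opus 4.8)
The plan is to use the classical exponential moment (Chernoff) method, handling the two tails as mirror images of each other. Write $\mu := \Ee(X)$ and realise $X$ as a sum $X = X_1 + \dots + X_m$ of independent Bernoulli variables with $\Pee(X_j = 1) = p$, so $\mu = mp$ (the degenerate cases $\mu = 0$, $t = 0$, or $t = \mu$ are checked by hand, using $H(0) = 1$ and $H(1) = 0$; from now on assume $0 < \mu$ and $0 < t$).

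For the first bound, fix a parameter $\lambda > 0$. Markov's inequality applied to the nonnegative variable $e^{\lambda X}$ gives
\[
\Pee(X \geq t) = \Pee\bigl(e^{\lambda X} \geq e^{\lambda t}\bigr) \leq e^{-\lambda t}\,\Ee e^{\lambda X} = e^{-\lambda t}\bigl(1 + p(e^\lambda - 1)\bigr)^m \leq e^{-\lambda t}\exp\bigl[mp(e^\lambda - 1)\bigr] = \exp\bigl[-\lambda t + \mu(e^\lambda - 1)\bigr],
\]
where the last inequality uses $1 + x \leq e^x$. I would then minimise the exponent $g(\lambda) := -\lambda t + \mu(e^\lambda - 1)$ over $\lambda > 0$: since $g'(\lambda) = -t + \mu e^\lambda$, the unconstrained minimiser is $\lambda^\star = \ln(t/\mu)$, and $\lambda^\star \geq 0$ precisely because $t \geq \mu$. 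Substituting $e^{\lambda^\star} = t/\mu$ and simplifying,
\[
g(\lambda^\star) = -t\ln(t/\mu) + t - \mu = -\mu\bigl[(t/\mu)\ln(t/\mu) - (t/\mu) + 1\bigr] = -\mu H(t/\mu),
\]
which is exactly the claimed bound $\Pee(X \geq t) \leq e^{-\mu H(t/\mu)}$.

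For the second bound I would run the same argument with $\lambda < 0$, so that $x \mapsto e^{\lambda x}$ is decreasing: Markov applied to $e^{\lambda X}$ yields $\Pee(X \leq t) \leq e^{-\lambda t}\,\Ee e^{\lambda X} \leq \exp[-\lambda t + \mu(e^\lambda - 1)]$ by the identical moment-generating-function estimate. The exponent is again minimised at $\lambda^\star = \ln(t/\mu)$, which this time is $\leq 0$ exactly because $t \leq \mu$, and the same substitution gives exponent $-\mu H(t/\mu)$, proving $\Pee(X \leq t) \leq e^{-\mu H(t/\mu)}$.

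I do not expect any real obstacle: this is a textbook computation, and as the paper already notes one may simply cite a reference such as \cite{penroseboek}. The only two places needing the slightest attention are (i) verifying that the optimal $\lambda^\star = \ln(t/\mu)$ lies on the correct side of $0$ for each tail — which is the one spot where the hypotheses $t \geq \mu$ (resp.\ $t \leq \mu$) are used — and (ii) the short algebraic simplification of $g(\lambda^\star)$ into the rate function $H(x) = x\ln x - x + 1$. Everything else is the standard ``shift to the exponential, bound the MGF via $1+x \leq e^x$, then optimise'' recipe.
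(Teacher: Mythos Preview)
Your proof is correct and is exactly the standard exponential-moment argument. The paper itself does not give a proof of this lemma at all: it simply states the bound and refers the reader to~\cite{penroseboek}, so there is nothing to compare against beyond noting that your computation is the textbook one such a reference would contain.
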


A variation of the proof of this last lemma gives:

\begin{lemma}\label{lem:chernoffsumbin}
Suppose that $X = X_1+\dots+X_k$ is a sum of independent binomials.
Then 

\[ \Pee( X \geq t ) \leq e^{-\Ee (X) \cdot H( t / \Ee (X) )}, \]

\noindent
for all $t \geq \Ee (X)$, where $H(x) := x\ln x - x + 1$.
\end{lemma}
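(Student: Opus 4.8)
The plan is to adapt the standard exponential-moment (Chernoff) argument for a single binomial, as recalled in Lemma~\ref{lem:chernoffvanilla}, to a sum of independent binomials, exploiting the fact that the moment generating function of a sum of independent random variables factorizes. First I would fix $\lambda > 0$ and write, by Markov's inequality applied to $e^{\lambda X}$,
\[
\Pee( X \geq t ) \leq e^{-\lambda t} \Ee e^{\lambda X} = e^{-\lambda t} \prod_{i=1}^k \Ee e^{\lambda X_i},
\]
using independence in the last step. Writing $X_i \sim \Bi(m_i, p_i)$, we have $\Ee e^{\lambda X_i} = (1 - p_i + p_i e^\lambda)^{m_i} \leq \exp\bigl[ m_i p_i (e^\lambda - 1) \bigr]$, using $1 + x \leq e^x$. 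Hence $\prod_i \Ee e^{\lambda X_i} \leq \exp\bigl[ \mu (e^\lambda - 1) \bigr]$ where $\mu := \Ee X = \sum_i m_i p_i$. This is exactly the bound one would get for a single Poisson (or binomial) with mean $\mu$, so the sum of binomials is dominated, in the exponential-moment sense, by a single mean-$\mu$ Poisson.

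Next I would optimize over $\lambda$: the bound reads $\Pee(X \geq t) \leq \exp\bigl[ \mu(e^\lambda - 1) - \lambda t\bigr]$, and the exponent is minimized at $e^\lambda = t/\mu$, i.e. $\lambda = \ln(t/\mu)$, which is nonnegative precisely when $t \geq \mu$. Substituting gives exponent $\mu\bigl( t/\mu - 1 \bigr) - t\ln(t/\mu) = t - \mu - t\ln(t/\mu) = -\mu\bigl[ (t/\mu)\ln(t/\mu) - (t/\mu) + 1 \bigr] = -\mu\, H(t/\mu)$, with $H(x) = x\ln x - x + 1$ as in the statement. This yields $\Pee(X \geq t) \leq e^{-\mu H(t/\mu)} = e^{-\Ee(X)\cdot H(t/\Ee(X))}$ for all $t \geq \Ee(X)$, which is exactly the claim.

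There is really no serious obstacle here; the only mildly delicate point is the inequality $1 - p + p e^\lambda \leq \exp[p(e^\lambda - 1)]$, which follows from $1 + x \leq e^x$ with $x = p(e^\lambda - 1)$, and this is what lets the per-coordinate parameters $m_i, p_i$ collapse into the single aggregate mean $\mu$ — so the bound depends on the $X_i$ only through $\Ee X$. One should note that, strictly, the lemma as stated treats each $X_i$ as a general binomial $\Bi(m_i, p_i)$; the argument above handles that case verbatim. Alternatively, one could observe that a $\Bi(m,p)$ variable is itself a sum of $m$ independent Bernoulli($p$)'s, so $X$ is a sum of independent (not necessarily identically distributed) Bernoulli variables, and then invoke the usual multiplicative Chernoff bound for such sums directly; this is essentially the ``variation of the proof'' alluded to in the text. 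Either way the key mechanism is the factorization of the MGF under independence followed by the standard $\lambda = \ln(t/\Ee X)$ optimization.
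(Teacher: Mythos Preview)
Your proof is correct and follows essentially the same route as the paper's: factorize the moment generating function by independence, bound each factor via $1+x\le e^x$ to get $\Ee e^{\lambda X}\le e^{(e^\lambda-1)\Ee X}$, and then optimize over $\lambda$. The paper's sketch stops at the MGF bound and defers the optimization to the standard reference, whereas you carry it out explicitly; otherwise the arguments are the same.
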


\begin{proofsketch}
\noindent
Note that $\Ee (z^X) = \Pi_{i=1}^k \Ee (z^{X_i})$ by independence, and that 

\[ \Ee (z^X_i) = \left(1 + p_i(z-1) \right)^{n_i} \leq e^{(z-1)n_ip_i} = e^{(z-1)\Ee (X_i)}, \]

\noindent
so we find that $\Ee (z^X) \leq e^{(z-1)\Ee (X)}$.
We can thus proceed as in the proof of Lemma 1.2 in Penrose's book~\cite{penroseboek}.
\end{proofsketch}

\begin{lemma}\label{lem:chernoffsumgeo}
 Suppose that $X = X_1 + \dots + X_k$ is a sum of i.i.d.~geometrically distributed random variables.
 \begin{enumerate}
  \item For all $t \geq \Ee (X)$ we have $\Pee( X \geq t ) \leq e^{ - (kt/\Ee (X)) \cdot H( \Ee (X) / t ) }$,
  \item For all $t \leq \Ee (X)$ we have $\Pee( X  \leq t ) \leq e^{ - (kt/\Ee (X)) \cdot H( \Ee (X) / t ) }$.
 \end{enumerate}
\end{lemma}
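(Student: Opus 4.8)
The plan is to recognise $X = X_1+\dots+X_k$ as a negative binomial random variable and run the Cram\'er--Chernoff method with an explicitly chosen exponential tilt, rather than going through the (equally valid) route of dualising to a binomial and invoking Lemma~\ref{lem:chernoffvanilla}. Concretely, each $X_i$ is geometric with some parameter $p\in(0,1]$, so that $\Ee X_i = 1/p$ and $\Ee X = k/p$, and a one-line computation gives the probability generating function $\Ee(z^{X_i}) = \frac{pz}{1-(1-p)z}$ for $0<z<1/(1-p)$, hence $\Ee(z^X) = \left(\frac{pz}{1-(1-p)z}\right)^k$ on the same range.

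For the upper tail, fix $t\geq \Ee X$; the case $t=\Ee X$ is trivial since there the claimed bound is $e^{-tp\,H(1)} = 1$, so assume $t>\Ee X$. For any $z$ with $1<z<1/(1-p)$, Markov's inequality applied to $z^X$ gives $\Pee(X\geq t)\leq z^{-t}\Ee(z^X)$. I would then substitute the specific value $z^\star := \frac{t-k}{t(1-p)}$, which lies in $(1,1/(1-p))$ exactly because $t>k/p=\Ee X$. The point of this choice is the identity $1-(1-p)z^\star = k/t$, which collapses the bound to $\Pee(X\geq t)\leq (z^\star)^{k-t}(tp/k)^k$. Taking logarithms and writing $\beta := k/t$, the target inequality $\Pee(X\geq t)\leq e^{-tp\,H(k/(tp))}$ reduces, after elementary rearrangement, to $(1-p)\,H\!\left(\frac{1-\beta}{1-p}\right)\geq 0$, which holds because $H(x)=x\ln x-x+1\geq 0$ for every $x>0$.

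For the lower tail, fix $t\leq\Ee X$. If $t<k$ then $\Pee(X\leq t)=0$ since $X\geq k$ almost surely; if $t=k$ the claim reduces to $p^k\leq p^k e^{k(1-p)}$; and $t=\Ee X$ again gives bound $1$. So assume $k<t<\Ee X$, take $0<z<1$, and apply Markov to $z^X$ using that $x\mapsto z^x$ is decreasing, obtaining $\Pee(X\leq t)\leq z^{-t}\Ee(z^X)$. The same value $z^\star=\frac{t-k}{t(1-p)}$ now lies in $(0,1)$ precisely because $k<t<k/p$, the identity $1-(1-p)z^\star=k/t$ holds again, and the computation is word-for-word that of the upper tail, bottoming out once more at $(1-p)\,H\!\left(\frac{1-\beta}{1-p}\right)\geq 0$.

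I do not anticipate a real obstacle: the only things one must get right are the generating function and the choice of tilt $z^\star$ (equivalently, the realisation that the ``binomial-shaped'' exponent $tp\,H(k/(tp))$ is weaker than the exact Cram\'er rate by exactly the slack $(1-p)\,H((1-\beta)/(1-p))$), after which everything collapses to the triviality $H\geq 0$. The only mildly fiddly points are the boundary values $t=\Ee X$ and $t\leq k$ and keeping the admissible range of $z^\star$ straight, both indicated above. If one instead prefers the binomial-duality proof, the lower-tail case is clean ($\Pee(X\leq t)=\Pee(\Bi(\lfloor t\rfloor,p)\geq k)$, then Lemma~\ref{lem:chernoffvanilla} together with monotonicity of $m\mapsto mp\,H(k/(mp))$ in the relevant range), while the upper-tail case needs a little extra care with the $\lceil t\rceil-1$ rounding; the tilt argument sidesteps this entirely and handles real $t$ directly.
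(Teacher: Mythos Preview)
Your proof is correct and takes a genuinely different route from the paper. The paper argues by duality: it observes that $\Pee(X\geq t)=\Pee(\Bi(t,p)\leq k)$ (and analogously for the lower tail), and then reads off the bound directly from Lemma~\ref{lem:chernoffvanilla} after the substitutions $tp=kt/\Ee X$ and $k/(tp)=\Ee X/t$. You instead run the Cram\'er--Chernoff method on the negative binomial itself, choosing the tilt $z^\star=(t-k)/(t(1-p))$ and reducing everything to $H\geq 0$.

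What each buys: the paper's duality proof is a two-line affair once Lemma~\ref{lem:chernoffvanilla} is in hand, but as you note it tacitly treats $t$ as an integer (the identity $\Pee(X\geq t)=\Pee(\Bi(t,p)\leq k)$ only makes sense then), so for real $t$ one must round and then check monotonicity of $m\mapsto mp\,H(k/(mp))$ on the relevant side. Your tilt argument is longer but handles real $t$ uniformly and makes the slack explicit: the gap between the true Cram\'er rate of the negative binomial and the ``binomial-shaped'' exponent $tp\,H(k/(tp))$ is exactly the nonnegative term $(1-p)\,H((1-\beta)/(1-p))$. Your edge-case bookkeeping ($t=\Ee X$, $t\leq k$, and the admissible range of $z^\star$) is accurate.
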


\begin{proof}
We first consider $\Pee( X \geq t )$.
Let $Y \isd \Bi(t,p)$, where $p$ is of course the common success probability 
of the $X_i$-s. 
We observe that $\Pee( X \geq t ) = \Pee( Y \leq k )$ (the probability one has to make $t$ or more tries to get $k$ successes equals the 
probability that in a sequence of $t$ coin flips $k$ or fewer successes occur).
Now we apply the plain vanilla Chernoff bound to obtain:

\[ \Pee( X \geq t ) \leq \exp[ - \Ee (Y) \cdot H( k/\Ee (Y) ) ]  = \exp[ - tp \cdot H( k/tp ) ]. \]

\noindent
Observing that $\Ee (X) = k \Ee (X_1) = k/p$, we see that $tp = k t / \Ee (X)$ and $k/tp = \Ee (X) / t$.

The case of $\Pee( X \leq t )$ is proved completely analogously.
\end{proof}

\section{Delay-based activation orders}
\label{sec:delayed_analysis}

In order to be able to include delay-based protocols in our study, we abstract away from the low-level implementation details, and view them as existing protocols (local decision rules) enhanced with delay functions.

All protocols described in Section~\ref{sec:delay_methods}  include local decision rules that state what to do if a node receives a message already in its queue.
One option, is to always cancel the transmission of a message received twice. This is what is done, for instance, in BLR. Note that, disregarding the delay, this would be equivalent to the M heuristic with $M=2$.
Another option considered is to cancel only if the sender of the duplicate message is closer to the destination than the current node. Both GeRaF and Greedy routing apply rules of this type. Again, disregarding the delay, this would be similar to the CD heuristic.

In heuristics where the lower bound and upper bounds on \recmess are tight regardless of the activation order, such as simple flooding, the M and T heuristics, the choice of a  delay function will not influence \recmess by more than a factor two.

In heuristics where these bounds  are not right, such as CD and CD-P, the delay function can make a significant difference.
For instance, consider a delay-based protocol where the nodes delete messages from their queues only when they receive a duplicate from a node that is closer to the destination, combined with  delay function  $\text{delay}_2  = \text{MD} \frac{p}{r} $ (recall that MD is a constant representing the maximum delay).
For simplicity of analysis, we assume without loss of generality $\text{MD}=r$, which implies that the delay increases by exactly one time step per node.
Then, the bounds for \recmess in the unbounded scenario become:

\[
\recpac=
\begin{cases}
2^k & \text{ if } k<\log n\,,\\
n+n(k-\log n) & \text{ if } k\geq\log n\,.
\end{cases}
\]



Analyzing all combinations of local decisions rules and delay functions would be interesting, although it falls outside the scope of this work.

\section{Discussion}
\label{sec:conclusions}

Beaconless geocast protocols are used in practice in 2D scenarios. They differ from the 1D ones in a few but important characteristics, most notably that obstacles (like buildings, which cannot be traversed by the transmission signal) need to be surrounded, and that local optimization strategies fail to guarantee delivery. Therefore, combinations of different strategies need to be used in order to achieve delivery guarantees and, at the same time, keep the network load within reasonable bounds. The network load analysis in this cases is difficult, and almost only experimental results exist. 
This motivated studying the 1D case.
We have shown that the rigorous analysis of geocast protocols even in simple 1D scenarios can be interesting---and challenging. 
Indeed, all protocols give rise to different load bounds, with CD and CD-P being particularly subtle to analyze due to the fact that an involved probabilistic analysis is required.

Our theoretical analysis confirms behaviors that had been observed before only through simulations. 
The different expression for \recpac, summarized in Tables~\ref{tab:results_worst_case} and~\ref{tab:results_probabilistic}, make evident the differences between flooding and the restricted flooding (M- and T-heuristics) and delay-based protocols, as well as the advantage of CD over them for small values of $k$.
Moreover, our analysis gives theoretical support to the performance claims of CD-P: Hall~\cite{h-igmanet-11} showed through simulations that CD-P scales better and improves its efficiency over CD considerably.
Our results indeed corroborate this. In the unbounded reach scenario, CD has serious scalability problems when the number of messages ($k$) approaches the number of nodes ($n$), making it behave like flooding, while CD-P's dependence on $k$ remains always linear, and, more importantly, only logarithmic on $n$. 
For the bounded reach scenario, CD again shows an overhead in $k$ (this time of only $O(\sqrt {k})$), while CD-P achieves the optimal asymptotic performance of $\Theta(k)$.

The results in this paper are a first step towards analyzing geocast protocols in generic geometric settings.
This paper has focused on two relatively simple scenarios.
Between these basic scenarios and the final intricacies of real-world situations, several abstractions of the geocast problem of varying complexity can be imagined, which
are definitely worth studying in the future (see Figure~\ref {fig:scenarios2}).

\begin{figure}[t]
  \centering
    \includegraphics[width=0.825\textwidth]{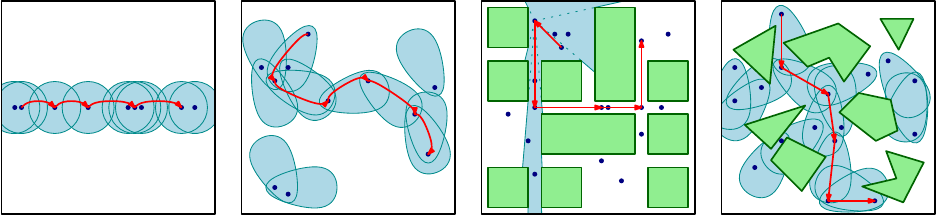}
  \caption{Scenarios of increasing complexity. (i)~1D scenario. (ii)~2D scenario; non-uniform ranges. (iii)~Regular obstacles; visibility. (iv)~Irregular obstacles; combining visibility and non-uniform ranges.}
\label{fig:scenarios2}
\end{figure}



\bibliographystyle{abbrv}
\bibliography{castnotes}
\end {document}